
\documentclass[journal,comsoc]{IEEEtran}

%

\usepackage[T1]{fontenc}


%

%

%
\ifCLASSINFOpdf
\else
\fi
%
%

%
%
\interdisplaylinepenalty=2500

\usepackage{amsmath}
\usepackage{times}
\usepackage{algorithm}
\usepackage[noend]{algorithmic}
\usepackage[dvips]{graphicx}
\usepackage{caption}
\usepackage{subcaption}
\usepackage{verbatim}
\usepackage{amssymb}
\usepackage{cite}
\usepackage{epsfig}
\usepackage{color}
\usepackage[hyphens]{url}
\usepackage{amsthm}

\hyphenation{op-tical net-works semi-conduc-tor}

\newtheorem{lemma}{Lemma}
\newtheorem{theorem}{Theorem}

\newcommand\txtblue[1]{{\color{black}#1}}
\newcommand\txtred[1]{{\color{black}#1}}
\newcommand\txtmag[1]{{\color{black}#1}}
\newcommand\txtredd[1]{{\color{black}#1}}
\newcommand\txtgreen[1]{{\color{black}#1}}
\usepackage[algo2e,lined,boxed,commentsnumbered]{algorithm2e}
\usepackage{url}
\usepackage{blindtext}


\hyphenation{op-tical net-works semi-conduc-tor}

\begin{document}
%
\title{A Hybrid RF-VLC System for Energy Efficient Wireless Access}
%
%
%

\author{Abdallah~Khreishah,~\IEEEmembership{Senior,~IEEE,}Sihua~Shao,~\IEEEmembership{Student Member,~IEEE,}~Ammar~Gharaibeh,~\IEEEmembership{Member,~IEEE,}~        Moussa~Ayyash,~\IEEEmembership{Senior,~IEEE,}~Hany~Elgala,~\IEEEmembership{Member,~IEEE,}~and~Nirwan~Ansari,~\IEEEmembership{Fellow,~IEEE}
\IEEEcompsocitemizethanks{\IEEEcompsocthanksitem Abdallah Khreishah, Sihua Shao, and Nirwan Ansari are with the Department of Electrical \& Computer Engineering, New Jersey Institute of Technology, Newark, NJ 07012 USA, Ammar Gharaibeh is with the Department of Computer Engineering, German Jordanian University, Amman, Jordan 11180, Moussa Ayyash is with the Department of Information Studies, Chicago State University, Chicago, IL 60628, and Hany Elgala is with the Department of Computer Engineering, State University of New York at Albany, Albany, NY 12222.\protect\\
E-mail: \{abdallah, ss2536, nirwan.ansari\}@njit.edu, ammar.gharaibeh@gju.edu.jo, mayyash@csu.edu, and
helgala@albany.edu.
}
}

\maketitle \vspace{-0.9in}
\begin{abstract}
In this paper, we propose a new paradigm in designing and realizing energy efficient wireless indoor access
networks, namely, a hybrid system enabled by traditional RF access, such as WiFi, as well as the emerging
visible light communication (VLC). VLC facilitates the great advantage of being able to jointly perform
illumination and communications, and little extra power beyond illumination is required to empower
communications, thus rendering wireless access with almost zero power consumption. On the other hand, when
illumination is not required from the light source, the energy consumed by VLC could be more than that consumed
by the RF. By capitalizing on the above properties, the proposed hybrid RF-VLC system is more energy efficient
and more adaptive to the illumination conditions than the individual VLC or RF systems. To demonstrate the
viability of the proposed system, we first formulate the problem of minimizing the \txtgreen{power} consumption
of the hybrid RF-VLC system while satisfying the users requests and maintaining acceptable level of
illumination, which is NP-complete. Therefore, we divide the problem into two subproblems. In the first
subproblems, we determine the set of VLC access points (AP) that needs to be turned on to satisfy the
illumination requirements. Given this set, we turn our attention to satisfying the users' requests for real-time
communications, and we propose a randomized online algorithm that, against an oblivious adversary, achieves a
competitive ratio of $\log(N)\log(M)$ with probability of success $(1-\frac{1}{N})$, where $N$ is the number of
users and $M$ is the number of VLC and RF APs. We also show that the best online algorithm to solve this problem
can achieve a competitive ratio of $\log(M)$. Simulation results further demonstrate the advantages of the
hybrid system.
\end{abstract}

\begin{IEEEkeywords}
Visible-light communication (VLC), hybrid system, HetNets, wireless access networks, energy efficiency,
illumination, RF access methods, WiFi.
\end{IEEEkeywords}

%
\IEEEpeerreviewmaketitle

\section{Introduction} \label{intro}
The access portion of the Internet is becoming predominantly wireless. Cellular networks and wireless indoor
networks, such as WiFi, are becoming the predominant choice of wireless access. Watching HD streaming videos and
accessing cloud-based services are the main user activities that rapidly consume the data capacity. They will
continue to be the major capacity consuming activities in the future. Moreover, activities linked with high data
rate wireless traffic are stationary and typically occur in fixed wireless access scenarios~\cite{Informa}.
Reported data indicate that a majority of IP-traffic usage occurs indoors (70\% indoors and only 30\% of the
traffic is served outdoors~\cite{GBI}). According to Ericsson report~\cite{ErR}, we spend 90\% of our time
indoors, and 80\% of the mobile Internet access traffic happens indoors~\cite{AlR,ciscoprovider}. This percentage
will only increase as 54\% of the cellular traffic is expected to be offloaded to WiFi by
2019~\cite{Webtrafficgrowth14}. It is expected that 87\% of the companies would switch cellular providers by 2019
for better indoor coverage~\cite{AlR}.

The rapid increase in wireless traffic is highly coupled with an increase in energy consumption. In 2011, energy
consumption of Internet in the US was larger than that of the whole automotive industry and about half that of the
chemical industry, as mentioned in a CNN report~\cite{CNNReport}. Furthermore, Internet traffic is estimated to
grow by a factor of 10 during 2013-2018~\cite{Webtrafficgrowth}, and thus incurs more energy consumption.
According to~\cite{NUMWiFi}, the number of only public WiFi routers in the world, such as routers in shopping
malls and coffee shops, will be 340 million in 2018. While being idle, the energy consumption of these public
routers can be estimated to be more than 17 billion kWh per year, which costs more than \$2
billion~\cite{WiFiusage}. Therefore, the total energy cost and the harm to the environment of all WiFi routers in
the world while being active will be very huge, if no innovative methods are applied. This trend of dirty energy
consumption of wireless access networks (WACNs)\footnote{WACN is adopted here to avoid confusion with WAN which
usually refers to Wide Area Network.} calls for new methods to reduce the carbon footprint.

Several approaches have been used to reduce the \txtgreen{power} consumption of WACNs.
In~\cite{shih2002wake,rozner2010napman,liu2008micro,manweiler2011avoiding}, a sleep-wake schedule to reduce the
power consumption of WiFi routers is proposed. In~\cite{ye2002energy}, the low power listening method is applied
to achieve the same objective. Correlated packet detection is applied in~\cite{sen2012csma,sen2010phy}. Recent
works~\cite{han2013greening,han2013optimizing,han2015traffic} have championed the concept of powering wireless
networks with renewable energy. While all of these proposals are geared to realizing green communications, they
only consider the RF band and none of them has exploited the potential of energy savings provided by operating
outside this band.

This work aims to reduce the \txtgreen{power} consumption of wireless indoor access networks. Different from all
of the previous works, we utilize the visible light communication (VLC), in which the information is modulated on
the wireless light signals. This is motivated by the fact that {\em whenever communication is needed, lighting is
also needed most of the time}. Energy consumption of lighting represents about 15\% of the world's total energy
consumption~\cite{majorelec}. By jointly performing lighting and Internet access, VLC can operate on a very small
energy budget. The emergence and the commercialization of the power line communication
(PLC)~\cite{tsuzuki2012feasibility,tonello2008challenges}, makes it feasible and attractive to add a driver
circuit to perform modulation between the light source and the power cables and utilize VLC with very small
one-time cost based on the available indoor infrastructure. This can also achieve a spectrum reuse range of 2-10
meters, which could potentially resolve the spectrum crunch problem~\cite{kavehrad2013optical}.

\txtredd{There are many earlier works \cite{giordani2016multi,asadi2013wifi,arribas2017multi} studied the
coexistence of different RF access technologies, such as LTE and WiFi.} Rather than relying only on VLC, this
paper integrates traditional RF access technologies with VLC
\txtredd{\cite{feng2016applying,basnayaka2015hybrid}}. The reason from an energy efficiency perspective is of
two-fold. First, the ambient lighting level and the lighting requirements change over the day/year. For example,
for a room with windows in a sunny morning, the ambient light could be enough for lighting, which incurs high
energy overhead for communications when using VLC. On the other hand, at night or during a cloudy day, the
required illumination level from the light source will be very high, thus incurring rather low energy consumption
for communications. Sometimes at night, illumination might not be needed, which makes VLC operate on large energy
overhead. Second, utilizing both VLC and RF reduces the interference effect among individual spectrum domains (VLC
or RF), and therefore significantly reduces the energy consumption. Based on this, our proposed system model
utilizes both RF and VLC access methods. \txtgreen{In reality, the hybrid system will rely on a central controller
to perform the feedback collection, optimization and resource allocation. A proof-of-concept hybrid WiFi-VLC
system has been implemented in \cite{shao2014indoor} and \cite{ShaoJOCN2015}.} \txtgreen {There are some earlier
works that have investigated energy/power consumption problems for hybrid RF/VLC systems. However, none of them
considered illumination. The work in \cite{chowdhury2013energy} studied power consumption of mobile terminals in
hybrid radio-optical wireless systems. Constant power consumption and data rate were modeled for each wireless
access technique. The authors in \cite{kashef2016energy} and \cite{kafafy2017power} investigated power efficiency
of hybrid RF/VLC wireless networks by maximizing the total data rate over the total power consumption. Both
\cite{kashef2016energy} and \cite{kafafy2017power} assumed the VLC AP is sending data at maximum transmission
power. The work in \cite{kashef2016energy} considered the case of only one VLC AP and the VLC AP was assumed to be
always turned on, while \cite{kafafy2017power} considered the cases of multiple VLC APs and turning on or off VLC
APs were studied in different cases. The authors in \cite{kashef2017transmit} investigated the total transmission
power of hybrid PLC/VLC/RF communication systems. In \cite{kashef2017transmit}, the authors considered single user
case and assumed a power additive model by setting fixed power for each subcarrier with equal bandwidth. In this
paper, we take illumination into account, assign integer variables to represent the ON or the OFF states of VLC/RF
APs and integrate the variables into the optimization problem.} The major challenge tackled in this paper is to
{\em minimize the \txtredd{power} consumption of the hybrid system while satisfying users' demands and maintaining
required illumination.} While~\cite{siddique2011joint,sugiyama2007brightness,lee2011modulations} address the joint
communication and illumination problem, the power consumption problem is not considered. Also, the developed
approaches are designed to address a specific given modulation scheme in the VLC system rather than the hybrid
system. Din and Kim in~\cite{din2014energy} addressed the joint illumination and power control problem in the VLC
system. It is still specific to the pulse position modulation (PPM) modulation scheme with only one VLC user in
the system.

The major contributions presented in this paper include:
\begin{itemize}

\item Proposing a new hybrid system design for WACNs that is environment friendly. This system contains both the
    traditional RF access points along with the light sources that can be used to provide lighting and
    communications jointly. We show that this system is more energy efficient and more adaptive to the changes
    in the ambient lighting conditions than the systems utilizing VLC or RF access methods separately.

\item Formulating the problem of minimizing the \txtgreen{power} consumption of the proposed system while
    satisfying the users' requests and maintaining acceptable illumination level.

\item Since the problem is NP-complete, we divide the problem into two subproblems. In the first subproblems, we
    determine the set of VLC access points (AP) that needs to be turned on to satisfy the illumination
    requirements. Taking this set as an input to the second subproblem, we develop a randomized online algorithm
    to satisfy the users' requests for real-time communications. We show that, against an oblivious adversary,
    our proposed algorithm achieves a competitive ratio of $\mathcal{O}(\log(N)\log(M))$ with probability of
    success $(1-\frac{1}{N})$. \txtgreen{Here, $N$ is the number of users and $M$ is the total number of access
    points.}


\item Performing extensive simulations to demonstrate the effectiveness of the proposed system model and the
    hybrid and online schemes.

\end{itemize}

\section{System Model}\label{sysmodel}

\subsection{Settings} In this paper, we consider a WACN containing $M$ RF and VLC APs. Every VLC AP consists of
light-emitting diode (LED) based luminaries devices providing both lighting and communications. The communications
is performed over back-haul links to the Internet and to users through free-space wireless transmission over the
light signals. The WiFi or femtocell APs provide communications to the Internet through backhaul links and to
users through the RF signals. There are also $N$ user devices that are equipped with VLC and RF enabled
transceivers. We consider the problem of minimizing the total power consumption of the VLC-enabled luminaries (VLC
APs) and the WiFi APs while satisfying the users' requests and maintaining an acceptable illumination level. Here,
we focus on the downlink traffic as the majority of indoor traffic is downlink, i.e., asymmetric link. Therefore,
in the following, we describe the device-level power consumption model and after that we formulate and solve the
problem of minimizing the total downlink power consumption of the proposed system.

\subsection{Communications}

Optical modulation is performed by varying the forward current of the light source. The output optical flux
changes proportionally to the modulated forward current. The increase in \txtgreen{power} consumption is mainly
due to the switching loss in the driver circuitry at high speeds (AC current). Such behavior is observed in our
experimental results~\cite{ShaoJOCN2015,shao2014indoor,shao2015design} and the results in~\cite{DengPerf14}.

\subsection{Illumination}

The illumination level at a given location depends on the average optical power received. This can be generated by
both the DC and the AC currents. Typically, the optical signal generated by the AC current does not cause
flickering as it changes at a higher rate than what can be observed by the human eye. The DC component does not
require a current switching process. This switching process reduces the efficiency of the driver circuit and light
source by consuming more energy. Formally, let the $m$-th AP be a VLC AP and let $P_m^{op}$ represent its average
output optical power. The illumination provided by this VLC AP is given by a commonly used expression $\Phi_{m} =
683 \int_{0}^{\infty} V(\lambda) P(\lambda) d\lambda,$ where $V(\lambda)$ is the standard luminous
function~\cite{ghassemlooy2012optical} and $P(\lambda)$ is the spectral power distribution that depends on the
average transmitted power and the LED type. Note that $\int_{0}^{\infty} P(\lambda) d\lambda = P_m^{op}$.
Therefore, we can write $\Phi_{m}$~[lm]~$=G_{m}P_{m}^{op},$ where $G_m$ is a constant that depends on the LED. The
illumination provided by the different APs at a given location $w$ can be written as $\phi_{w}$~[lux]~$=
\sum_{m}\bigl[(\frac{g+1}{2\pi})\frac{\cos^{g}(\theta_m)\cos(\psi_m)}{r_{mw}^2}\Phi_m\bigr]$~\cite{ghassemlooy2012optical},
where $\theta_m$ and $\psi_m$ are the irradiance angle and the incidence angle of the $m$-th AP, respectively,
$r_{mw}$ is the distance between the $m$-th AP and location $w$, and $g$ is the Lambertian order that is related
to the semi-angle at half power $\varphi_{1/2}$, at which direction the intensity of luminous flux is reduced to
half of that of the central luminous flux. Formally, we have $g=-\frac{\ln 2}{\ln (cos(\varphi_{1/2}))}.$ Here,
$\phi_{w}$ is a linear function of $\Phi_{m},$ since typically all of the other parameters are constants. In the
illumination, a minimum horizontal illuminance level (e.g. 300 lux for typical office environment) constraint for
each $\phi_{w}$ is set. The ambient light intensity can be added as a new variable in the linear function for
$\phi_{w}$.

\subsection{VLC Device-Level Joint Illumination and Power Control}

Increasing the AC component of the signal and decreasing its DC component can manipulate the average output
optical  power to remain fixed. However, as noted by our empirical
results~\cite{ShaoJOCN2015,shao2014indoor,shao2015design},  this manipulation will increase the electrical power
consumption due to the current switching loss at the LED driver.  In order to achieve a given data rate and
maintain a given illumination level with the minimum \txtgreen{power} consumption,  it is necessary to send the
minimum possible AC component that achieves the given data rate, and supplement it with  a DC component to satisfy
illumination. The receiver can employ a high pass filter on the received signal to remove  the DC component. The
ambient lighting signals can be treated as DC optical signals. \txtgreen{Note that increasing the DC component
increases the \txtredd{background} noise level at the receiver. The background noise consists of the thermal noise
and the shot noise. If the modulation bandwidth is large (above 50 MHz)  and the optical power level is low (below
20 W) as is the case with most VLC APs and the VLC front-ends utilized in  our research~\cite{ShaoJOCN2015}, the
thermal noise would dominate the shot noise~\cite{grobe2013high,langer2013optoelectronics,komine2004fundamental}.
With fixed gain of the receiver, the thermal noise does not depend on the AC and DC signals while the shot noise
does. Therefore, a constant level of background noise can be assumed.} Our experimental
results~\cite{ShaoJOCN2015} validated this behavior even in outdoor settings. The AC signal bandwidth can be
divided into several channels and two AC signals from different APs operating on the same channel will interfere
with each other. Due to the illumination constraints, the maximum optical power at any location is limited; the
photodetector can be designed accordingly so that it will be able to remove any DC offset.

\begin{figure}
\centering
\includegraphics[width=0.45\textwidth]{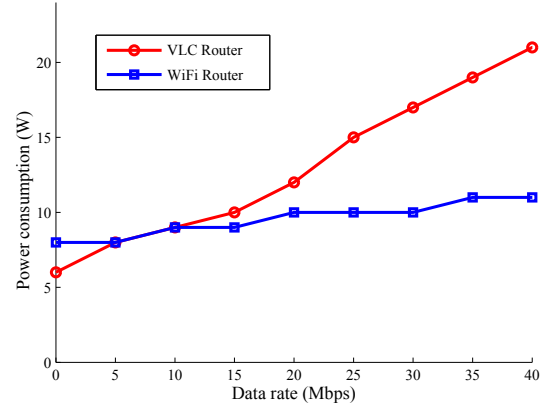}
\caption{Preliminary experimental results of power consumption vs. throughput for VLC and WiFi data transmissions}
\label{fig:Throughput_Power}
\end{figure}

If different users connecting to the same AP use different channels, the power additive model is applied. In this
model, except for the power to turn-on the AP, the additional power needed to serve a user at a given location, a
given field of view (FOV), and a given data rate by a given AP is fixed regardless of the number of users served
by the same AP. Even when assigning different users to one channel, our preliminary experimental results have
validated the suitability of the power additive model for VLC in an indoor environment. \txtgreen{As shown in
Fig.~\ref{fig:Throughput_Power}. The VLC and WiFi power consumption measurements are performed based on our VLC
front-ends and a NETGEAR Wireless Dual Band Gigabit Router WNDR4500, respectively. VLC and
WiFi transmitters are set to keep sending UDP packets at different data rates while a power meter is attached to
the transmitters to monitor the power consumption. The power consumptions for both VLC and WiFi are approximately
proportional to the data rate when the turn-on power is not considered, which validates the suitability of the
power additive model for both wireless access technologies.} Other experimental studies for WiFi also validate
this model~\cite{khan2013model,huang2012close,friedman2013power,sun2014modeling}.

Let the $m$-th AP be a VLC AP; we use $P_{m}^{on}$ to denote the electrical power consumed when the whole
transmitted optical power from the $m$-th AP is generated by DC current. If the $m$-th AP is only used to provide
illumination, $P_{m}^{on}$ will represent the total consumed power at that AP. We use $P_{mn}$ to denote the extra
electrical power (i.e. caused by AC current switching loss) consumed to support a data rate of $R_{n}$ for the
$n$-th user through the $m$-th AP. Based on the power additive model, knowing both $P_{m}^{on}$ and $P_{mn}$, the
power consumption of a VLC AP when performing joint communications and illumination can be fully characterized.
Note that $P_{m}^{on}$ depends on the required illumination level. Therefore, $P_{mn}$ needs to be fully
characterized when the $n$-th user initiates the request to be served. For stationary users, the VLC channel is
very stable as compared to the RF channel~\cite{zhangdancing}. Therefore, the channel state information (CSI) in
terms of signal-to-noise ratio (SNR) or bit error rate (BER) is able to provide a good estimation for these
values.

\subsection{RF Device-Level Model}
If the $m$-th AP is an RF AP, its total power consumption can be written as $P^{on}_{m}+\sum_{n\in
\mathcal{N}(m)}P_{mn}.$ Here, $P^{on}_{m}$ is the power needed to turn on the RF AP, $P_{mn}$ is the extra power
required to support a data rate of $R_n$ to the $n$-th user from the $m$-th AP and $\mathcal{N}(m)$ is the set of
users that are connected to the $m$-th AP. As explained in~\cite{mabell2013energy,zhang2012mili}, $P^{on}_{m}$
represents the major component of the power consumption of the WiFi or the femtocell AP. Here, the power additive
model is utilized, which is validated by preliminary experimental results and
by~\cite{khan2013model,huang2012close,friedman2013power,sun2014modeling}. $P_{m}^{max}$ is used to represent the
maximum $(\sum_{}P_{mn})$ that the AP can support. If the $m$-th AP is an RF one, both $P_{m}^{on}$ and
$P_{m}^{max}$ will be fixed. On the other hand, $P_{m}^{max}$ depends on $P_{m}^{on}$ for the VLC AP, which could
be fixed or could be controllable depending on the ambient illumination level that is changing over time.


When the ambient light level is very small, $P_{m}^{on}$ for the VLC APs will be utilized for lighting, thus
making the VLC AP {\em power-proportional}, i.e., the total power consumed by the VLC AP for communications is
proportional to the data rate. Designing a power proportional communications device is considered ideal rather
than realistic to be achieved~\cite{barroso2007case,valancius2009greening,gunaratne2005managing}. Here, it is
shown that \emph{jointly performing illumination and communications can potentially realize the power proportional
AP}. On the other hand, when the ambient light level is large, $P_{m}^{on}$ will either be very small so that it
cannot support high data rates or be large to support high data rates but not to be utilized for lighting. This
makes the VLC AP power non-proportional in this case.

\section{Problem Formulation}\label{form}

\subsection{Hybrid WiFi-VLC System Problem Formulation}\label{form2}

The problem of minimizing the total power consumption is formulated as the following Mixed Integer Linear Program
(MILP):

\begin{equation} \mathbb{P}1: \min \sum_{m=1}^{M}P_{m}^{on}X_{m} +  \sum_{m=1}^{M}\sum_{n=1}^{N}P_{mn}Y_{mn} \label{eqn:basic_formulation}
\end{equation}

subject to

\begin{align}
& Y_{mn} \leq X_{m}, \quad \forall n, m \label{const1}\\
& \sum_{m = 1}^{M} Y_{mn} \geq 1 \quad \forall n \label{const2}\\
& \sum_{n = 1}^{N} P_{mn}Y_{mn} \leq X_{m}P_{m}^{max} \quad \forall m \label{const3}\\
& \sum_{m\in VLC}\bigl[(\frac{g+1}{2\pi})\frac{\cos^{g}(\theta_m)\cos(\psi_m)}{r_{mw}^2}G_mP_{m}^{on}\eta_{m}^{DC}X_{m}\bigr]\geq \mathcal{I}_w\label{const4}
\end{align}

Here, $X_{m}  =\left\{ \begin{array}{rl}
1 &\mbox{if the $m$-th AP is on}\\
0 &\mbox{otherwise.}
\end{array}\right.$

$Y_{mn}  =\left\{ \begin{array}{rl}
1 &\mbox{if the $n$-th user is connected to the}\\
   &\mbox{$m$-th AP}\\
0 &\mbox{otherwise.}
\end{array}\right.$

\vspace{5pt} We assume that the output optical power of a VLC AP is fixed. Therefore, in the objective function of
the stated problem formulation, the first term refers to the total power consumption for turning on the APs while
the second term refers to the total power consumption for data transmission. The first set of constraints ensures
that a user cannot connect to an AP that is not turned on. The second set of constraints ensures that a user
should be connected to at least one AP. The third set of constraints ensures that the total power consumed by an
AP does not exceed its maximum power consumption limit. The last set of constraints is the illumination
constraints. Here, $\mathcal{I}_w$ represents the illumination level required at the $w$-th location.
\txtgreen{Note that the value of $\mathcal{I}_w$ can be zero if the ambient light is suffient. If illumination is
not required, a VLC AP might still be turned on to reduce the total power consumption. Regarding the issue about
turning on VLC APs while illumination is not required, we discuss it in three cases. 1) Illumination is not needed
but turning on VLC APs is acceptable. Case 1 is the situation studied in our simulation settings (Sec.~VI).
Turning on VLC APs at day especially for communication purpose is capable of reducing the total power consumption
and the additional illumination produced by VLC APs will not cause the illumination levels exceeding the
recommended range. 2) Illumination is not needed and turning on VLC APs will produce unexpected brightness. Case 2
considers scenarios when darkness is preferred. For case 2, maximum illumination constraints can be added to the
optimization problem and different maximum illumination levels will lead to different results. Case 2 will be
studied in the future. 3) Illumination is not needed but turning on VLC APs will not produce unexpected
brightness. In case 3, turning on VLC APs for communication may not produce any human-sensitive lighting.
\cite{tian2016darklight} presents the approach that enables a VLC AP for communication while keep the environment
being dark. The main idea is utilizing pulse width modulation for VLC and reducing the pulse width in each symbol
to perform the brightness control. The method proposed in \cite{tian2016darklight} provides a flexible solution to
case 2 where the maximum illumination constraint varies.} If the $m$-th AP is a VLC one and it is sending only DC
signal, we use $\eta_{m}^{DC}$ to represent its efficiency. Note that the output optical power of a VLC AP is
fixed. Therefore, $P_{m}^{op}=P_{m}^{on}\eta_{m}^{DC}$. \txtgreen{In the following, we show that our problem is
NP-complete.
\subsection{NP-Completeness Proof}\label{NP_comp_proof}
The following theorem shows that our problem is NP-complete.
\begin{theorem}\label{theorem:theorem1}
The ILP optimization problem is NP-complete.
\end{theorem}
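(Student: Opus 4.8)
The plan is to prove NP-completeness of the natural decision version of $\mathbb{P}1$: given all the input parameters $\{P_m^{on}\}$, $\{P_{mn}\}$, $\{P_m^{max}\}$, $\{\mathcal{I}_w\}$ (and the geometric constants entering \eqref{const4}) together with a target value $K$, decide whether there exists a feasible $0/1$ assignment $(X,Y)$ satisfying \eqref{const1}--\eqref{const4} whose objective \eqref{eqn:basic_formulation} is at most $K$. Membership in NP is immediate: a certificate is the assignment of the $X_m$ and $Y_{mn}$ themselves; since constraints \eqref{const1}--\eqref{const4} are linear and involve only $O(MN)$ terms, and the objective is a sum of $O(MN)$ terms, feasibility and the bound ``$\le K$'' can both be verified in polynomial time.

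For NP-hardness I would reduce from Set Cover, whose decision version is NP-complete and whose ``open a set / cover each element'' structure mirrors the ``turn on an AP / serve each user'' structure of $\mathbb{P}1$ most directly. Given a Set Cover instance with universe $\mathcal{U}=\{1,\dots,N\}$, a family $\mathcal{S}=\{S_1,\dots,S_M\}$ with $\bigcup_m S_m=\mathcal{U}$, and an integer $k$, build an instance with $N$ users and $M$ APs, all declared to be RF APs (so $P_m^{on}$ and $P_m^{max}$ are just fixed constants and the VLC-only sum in \eqref{const4} is empty). Set $P_m^{on}=1$ for every $m$; set $P_{mn}=0$ whenever $n\in S_m$ and $P_{mn}=k+1$ whenever $n\notin S_m$; choose $P_m^{max}=N(k+1)$ so \eqref{const3} is never binding; and set $\mathcal{I}_w=0$ for every location $w$ so that \eqref{const4} (an empty sum on the left) holds trivially. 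Take the target to be $K=k$. The reduction is clearly polynomial.

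For correctness I would argue both directions. If $\mathcal{C}\subseteq\{1,\dots,M\}$ covers $\mathcal{U}$ with $|\mathcal{C}|\le k$, put $X_m=1$ exactly for $m\in\mathcal{C}$ and, for each user $n$, pick one $m(n)\in\mathcal{C}$ with $n\in S_{m(n)}$ and set $Y_{m(n),n}=1$ (all other $Y$'s zero); this respects \eqref{const1}--\eqref{const4} and has objective $|\mathcal{C}|\le k$. Conversely, any feasible solution of objective $\le k$ cannot set $Y_{mn}=1$ for any pair with $n\notin S_m$, since that single term already contributes $k+1>k$; hence by \eqref{const2} every user is served by some AP whose set contains it, by \eqref{const1} every such AP is open, all assignment terms cost $0$, and so the objective equals $\sum_m X_m=|\{m:X_m=1\}|\le k$ while $\{m:X_m=1\}$ is a valid cover. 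Thus the two instances are equivalent, which together with membership in NP establishes the theorem. (A reduction routing Partition or Bin Packing through the capacity constraint \eqref{const3} would also work; I prefer Set Cover for its direct match to the objective.) The only real care needed — more bookkeeping than difficulty — is neutralizing the two constraint families \eqref{const3} and \eqref{const4} that have no Set Cover analogue (picking the ``large'' constant $P_m^{max}$ and the ``zero'' value $\mathcal{I}_w$ so that no solution is spuriously excluded) and, since $\mathbb{P}1$ carries a variable $Y_{mn}$ for every pair rather than only for $n\in S_m$, verifying that the penalty value $P_{mn}=k+1$ (which stays feasible under \eqref{const1} and \eqref{const3} but is too costly under $K=k$) correctly rules out the illegal assignments without introducing infeasibility.
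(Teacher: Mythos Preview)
Your proof is correct. Membership in NP is handled the same way as in the paper, and your Set Cover reduction is sound: the penalty value $P_{mn}=k+1$ correctly forbids assigning a user to an AP whose set does not contain it once the budget is $K=k$, and your neutralization of constraints \eqref{const3} and \eqref{const4} (large $P_m^{\max}$, all-RF APs so the VLC sum is empty, $\mathcal{I}_w=0$) is exactly the right bookkeeping.

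The paper takes a different route: it reduces from the \emph{capacitated facility location problem} (CFLP) rather than Set Cover. Because CFLP already has opening costs, per-client service costs, and facility capacities, the reduction is essentially the identity map ($q_i\mapsto P_m^{on}$, $r_{id}\mapsto P_{mn}$, $c_i\mapsto P_m^{\max}$) together with $\mathcal{I}_w=0$; no gadget or penalty is needed and constraint \eqref{const3} is used rather than neutralized. Your approach trades that structural shortcut for a more canonical source problem: Set Cover is one of Karp's original NP-complete problems, so your argument is more self-contained and does not rely on the reader knowing (or looking up) that CFLP is NP-complete. The cost is the extra penalty-cost gadget and the two-direction argument, both of which you handle cleanly.
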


\begin{proof}
Given any solution to our problem, we can easily check the solution's feasibility in polynomial time by checking
the constraints. Since the number of constraints is polynomial in terms of the number of users, APs, and
locations, our problem belongs to NP.

To prove that our problem is NP-hard, we reduce the capacitated facility location problem, which is known to be an
NP-complete problem, to an instance of our problem. The capacitated facility location problem is defined as
follows: Given a set of facilities $I$ and a set of users $D$, we wish to open a subset of facilites to serve the
users in $D$, where each facility $i \in I$ has an opening cost $q_i$, a capacity $c_i$, and the cost of serving a
user $d$ from a facility $i$ is $r_{id}$. The objective is to minimize the total cost.

The reduction is performed as follows: (1) The $i$-th facility in the facility location problem is mapped to the
$m$-th AP in our problem. (2) The $d$-th user in the facility location problem is mapped to the $n$-th user in our
problem. (3) $P_{m}^{on} = q_i$, (4) $P_{m}^{max} = c_i$. (5) $P_{mn} = r_{id}$. (6)$\mathcal{I}_w=0, \forall w$.

Now we prove that there exists a solution to the capacitated facility location problem with cost $\mathcal{A}$ if
and only if there exists a solution to our problem with cost $\mathcal{A}$.

If there exists a solution to the capacitated facility location problem with cost $\mathcal{A}$, then the set of
opened facilities represents the APs that are turned on, a facility $i$ serving user $d$ represents the $m$-th AP
serving the $n$-th user, and the total cost of our problem is $\mathcal{A}$. The proof of the other direction of
the if and only if statement is done in a similar way.
\end{proof}
}

\section{Online Algorithm}\label{online}
To solve the offline optimization problem $\mathbb{P}1$, one can use an optimizer such as CPLEX
\cite{cplex2014v12}. Solving $\mathbb{P}1$ requires the knowledge of all requests for connections from users
apriori. However, such knowledge is not available in real scenarios. Alternatively, one can recompute the solution
to $\mathbb{P}1$ whenever a new request (or batch of requests) for connection from a user (users) appears, but
this may lead to the reconfiguration of the system with the appearance of the request and may require advanced
techniques, such as soft handover, to avoid connection disruption for previous users when switched to a different
AP. Moreover, recomputing the solution to $\mathbb{P}1$ will lead to additional intolerable delays to the new user
since the complexity of resolving $\mathbb{P}1$ increases as the number of users increases. Therefore, we present
in this section a \txtblue{randomized} online algorithm for the hybrid RF-VLC system, to tackle the
above-mentioned issues.

In the online version of the problem addressed here, requests for connections from users appear one by one. The
online algorithm has to make a decision to connect the user to an AP, and the algorithm's decision has to be made
before the next user appears. Moreover, the decisions of the online algorithm cannot be reversed in the future.
\txtgreen{Due to the NP-completeness of the problem as shown in Theorem 1}, we remove the $P_{m}^{max}$ constraint
and assume that there is no limit on it. \txtgreen{In the simulation results, we show that the power consumption
of the online algorithm is not high (in order of 50 Watts for 100 users)}

To compare the performance of the online algorithm to that of the optimal offline algorithm, we use the concept of
competitive ratio. The competitive ratio is defined as the worst-case ratio of the performance achieved by the
online algorithm to the performance achieved by the optimal offline algorithm, i.e., if we denote the performance
of the online algorithm by $\mathcal{P}_{on}$ and that of the optimal offline algorithm by $\mathcal{P}_{off}$,
then the competitive ratio is:
\begin{displaymath}
\sup_{All\ requests} \frac{\mathcal{P}_{on}}{\mathcal{P}_{off}}.
\end{displaymath}
\txtgreen{To experimentally measure the competitive ratio, one needs to find the input sequence (among $2^M$
possible input sequences) that gives the worst possible performance, which becomes infeasible as $M$ grows large.}
As the ratio gets closer to 1, the online performance gets closer to the offline performance. In other words, the
smaller the competitive ratio is, the better the online algorithm's performance will be.

In the online algorithm, we may desire to satisfy the illumination requirements before the appearance of any
request for connection from users. Therefore, we decompose the optimization problem $\mathbb{P}1$ into two
subproblems. In the first subproblem, we solve an optimization problem to find the set $\mathcal{M}'$ of turned-on
VLC APs to satisfy the illumination requirements, and is formulated as follows:

\begin{equation}
\mathbb{P}2: \min \sum_{m \in VLC}P_{m}^{on}X_{m} \label{eqn:basic_formulation1}
\end{equation}

subject to

\begin{equation}
\sum_{m\in VLC}\bigl[(\frac{g+1}{2\pi})\frac{\cos^{g}(\theta_{m})\cos(\psi_{m})}{r_{mw}^2}G_{m}P_{m}^{on}\eta_{m}^{DC}X_{m}\bigr]\geq \mathcal{I}_w\label{const4}
\end{equation}

Given the set of turned-on VLC APs, the second subproblem is to minimize the power consumption by all APs while
satisfying the users' requests, and is formulated as follows:

\begin{equation}
\mathbb{P}3: \min \sum_{m=1}^{M}P_{m}^{on}X_{m} +  \sum_{m=1}^{M}\sum_{n=1}^{N}P_{mn}Y_{mn} \label{eqn:basic_formulation3}
\end{equation}

subject to

\begin{align}
& X_{m} = 1, \quad \forall m \in \mathcal{M}' \label{const1}\\
& Y_{mn} \leq X_m, \quad \forall n, m \label{const2}\\
& \sum_{m = 1}^{M} Y_{mn} \geq 1 \quad \forall n \label{const3}\\
& \sum_{n = 1}^{N} P_{mn}Y_{mn} \leq X_{m}P_{m}^{max} \quad \forall m \label{const4}
\end{align}

\txtgreen{Note that the two subproblems are also NP-complete, which can be proved by a reduction from the facility
location problem as illustrated in Section~\ref{NP_comp_proof}}. Our online algorithm proposed here \txtred{for
problem $\mathbb{P}3$} is inspired by the idea of the online algorithm presented in \cite{buchbinder2009design}.
We construct a graph $G = (V, E)$ consisting of $M+N+1$ vertices, where $M$ is the number of APs, $N$ is the
number of users, and the additional vertex represents a virtual source $S$\txtgreen{, which could represent the
gateway router and the centralized controller that all APs are connected to}. We connect the source $S$ to each of
the $M$ APs, and each AP to all of the users that the AP can serve. We associate two values for each edge $e$.
\txtgreen{The first value represents the flow on the edge $w_e$ and is unitless. The weights are dynamically
changing during the run of the algorithm and their values are in the range [0, 1]. The second value associated
with an edge is the edge's cost $c_e$, which represents the power consumption of that edge and has the unit of
Watts.} \txtblue{There are three types of edges when assigning these initial costs:

\begin{itemize}
\item The first type of edges is an edge connecting the virtual source $S$ to a VLC AP that is already turned on
    to satisfy illumination requirements (as determined from solving the optimization problem $\mathbb{P}2$).
    The set of turned-on APs due to solving $\mathbb{P}2$ is represented by $\mathcal{M}'$. The initial cost of
    an edge connecting the virtual source $S$ to an AP $m \in \mathcal{M}'$ is set to 0.

\item The second type of edges contains the edges connecting the virtual source to the remaining APs (i.e., the
    VLC APs that are not in the set $\mathcal{M}'$ and WiFi APs). The initial cost of an edge connecting the
    virtual source $S$ to an AP $m \notin \mathcal{M}'$ is set to the power consumption ($P_m^{on}$) required to
    turn on the $m$-th AP.

\item The third type of edges is an edge connecting a user to an AP. The initial cost of an edge connecting the
    $n$-th user to the $m$-th AP is set to the power consumption ($P_{mn}$) required if the $n$-th user is to be
    connected to the $m$-th AP.
\end{itemize}

For example, as shown in Fig.~\ref{fig:graph}, the cost of the edge connecting $AP1$ to user 1 in the graph on the
right side of the figure is set to $P_{11}$, which is the power consumed if user 1 is to be connected to $AP1$ in
our problem (the graph on the left side of Fig.~\ref{fig:graph}).}

After the graph is constructed, our problem becomes equivalent to the following \txtred{Minimum Cost Connectivity
(MCC)} problem on the newly constructed graph:
\begin{displaymath}
\txtblue{\mathbb{P}4:} \min \sum_{e}w_{e}c_e
\end{displaymath}
subject to
\begin{align}
max\_flow(S,n,w) \geq 1 \quad \forall n\label{const:MCMF}
\end{align}
where $c_e$ is the cost of edge $e$, $w_e$ is the weight of edge $e$, and $max\_flow(S,n,w)$ is the maximum flow
from the source $S$ to the $n$-th user when the capacity of each edge is 1 and the flow on each edge is
represented by the weight \cite{WiKiFlow}. \txtblue{The Minimum Cost Connectivity (MCC) problem is defined as
follows: given a graph $G' = (V', E')$ with cost function $c: E' \rightarrow \mathbb{R}^+$, and a demand $D_i =
(s_i, t_i)$, where $s_i$ is the source of $D_i$ and $t_i$ is the sink of $D_i$. The objective is to find an
assignment of weights $w \in \{0, 1\}$ to $E'$, such that there is a flow from $s_i$ to $t_i$ of value at least 1
with the minimum cost, where the cost is equal to $\sum_{e \in E'} w_e c_e$. Constraint~\ref{const:MCMF} states
that the flow from the source to each user must be at least 1 in order to satisfy the user's demand}. An example
of the reduction with the cost assigned to the edges is shown in Fig.~\ref{fig:graph}, \txtblue{which shows how
our problem can be mapped to the new problem $\mathbb{P}4$. Since the costs of the edges in the graph represent
the power consumption of turning on an AP or connecting a user to an AP, a solution that minimizes the total cost
in problem $\mathbb{P}4$ while having a flow of value of at least 1 from the virtual source $S$ to every
destination is a solution that minimizes the power consumption in our problem $\mathbb{P}3$ while satisfying the
users' demands.}

\begin{figure}
\centering
\includegraphics[scale = 0.6]{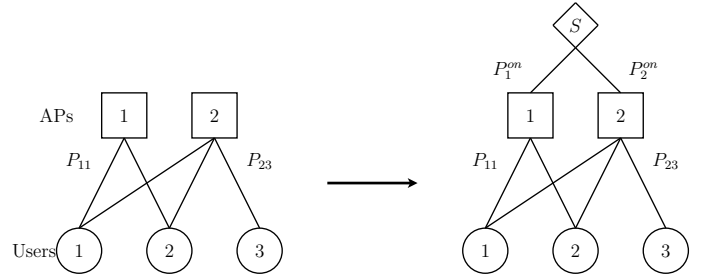}
\caption{Constructing the graph for the online algorithm.}
\label{fig:graph}
\end{figure}

The online algorithm is described in Algorithm \ref{alg:alg1} \txtblue{and the notations used in the algorithm are
presented in Table~\ref{table:notation}}. The algorithm operates in an online fashion upon each user's arrival.
The algorithm consists of two phases. In the first phase, the algorithm computes a fractional solution. The second
phase is a randomized rounding phase, where the algorithm rounds the fractional solution computed in the first
phase to an integral solution. \txtblue{
\begin{table}
\caption{Notations used in Algorithm 1}\label{table:notation}
\centering
\begin{tabular}{|c|c|}
\hline
Notation & Definition\\
\hline
$\alpha$ & Guess of the value of the optimal fractional solution\\
\hline
$c_e$ & Cost of edge $e$\\
\hline
$c_e^{'}$ & Normalized version of $c_e$\\
\hline
\txtred{$c_{tot}^{frac}$} & \txtred{Total fractional cost of the online algorithm}\\
\hline
$c_{tot}$ & Total \txtred{integral} cost of the solution of the online algorithm\\
\hline
$w_e$ & Weight of edge $e$\\
\hline
$w_e^{'}$ & Weight of edge $e$ updated by the online algorithm\\
\hline
$\gamma_e$ & Value of the threshold associated with edge $e$\\
\hline
\txtgreen{$\mathcal{C}$} & \txtgreen{Minimum weight cut} \\
\hline
\txtgreen{$\Gamma(e,j)$}& \txtgreen{Set of random variables associated with edge $e$}\\
\hline
\end{tabular}
\end{table}
}

\begin{algorithm}
\caption{Online Hybrid RF-VLC System} \label{alg:alg1}
\begin{algorithmic}[1]
\scriptsize \STATE{Solve the ILP Optimization Problem \eqref{eqn:basic_formulation1} \txtgreen{using CPLEX} to
find the set of turned-on VLC APs, $\mathcal{M}'$.}

\STATE{\txtred{$\forall e$ connecting $S$ to $m, m \in \mathcal{M}'$, $c_e = 0$}}

\STATE{\txtred{$\forall e$ connecting $S$ to $m, m \notin \mathcal{M}'$, $c_e = P_m^{on}$}}

\STATE{$n' = 0$}

\STATE{$\alpha = \min_{e} c_e \quad \vert e \text{~connecting~} S \text{~to~} m, m \notin \mathcal{M}'$}

\STATE{$w_e = w_e^{'} = \frac{1}{M^2}$ }

\STATE{$c_{tot} = 0$}

\STATE{Upon the arrival of a new user $u$}

\STATE{$n' \leftarrow n' + 1$}

\STATE{\txtblue{$\forall e$ connecting $m$-th AP to $u$, $c_e = P_{mu}$}}

\STATE{$\forall e$, keep $\lceil 2\log(n' + 1) \rceil$ independent random variables $\Gamma(e,j)$, $1 \leq j \leq
\lceil 2\log(n' + 1) \rceil$, uniformly distributed in the interval [0, 1]. Define a threshold $\gamma_e = \min_j
\Gamma(e,j)$ (\txtgreen{for randomized rounding})}

\STATE{START:}

\STATE{$\forall e$ such that $c_e \leq \frac{\alpha}{M}$, set $w_e = w_e^{'} = 1$}

\STATE{$\forall e$ such that $\frac{\alpha}{M} \leq c_e \leq \alpha$, set $c_e^{'} = \frac{c_e}{\alpha/M}$}

\IF{the maximum flow between $S$ and $u$ is at least 1 (\txtgreen{\emph{i.e.,} user is already satisfied})}

\STATE{do nothing}

\ELSE

\WHILE{the flow is less than 1 (\txtgreen{\emph{i.e.,} user is not satisfied yet})}

\STATE{Compute the minimum weight cut $\mathcal{C}$ between $S$ and $u$}

\STATE{$\forall e \in \mathcal{C}$, $w_e^{'} \leftarrow w_e^{'}(1 + \frac{1}{c_e^{'}})$ (\txtgreen{, weight
augmentation step})}

\ENDWHILE

\ENDIF

\STATE{$w_e = \max\{w_e, w_e^{'}\}$}

\STATE{\txtred{$c_{tot}^{frac} = \sum_e w_e c_e^{'}$}}

\IF{$w_e \geq \gamma_e$ \txtgreen{(Randomized Rounding)}}

\STATE{$w_e = 1$}

\ELSE

\STATE{\txtgreen{$w_e = 0$}}

\ENDIF

\STATE{$c_{tot} = \sum_e w_{e}c_e^{'}$}

\IF{$c_{tot}^{frac} > 2\alpha \log(M) + \alpha + 1$}

\STATE{$\alpha \leftarrow 2\alpha$}

\STATE{Go to START}

\ENDIF
\end{algorithmic}
\end{algorithm}

For the first phase, the algorithm \txtmag{uses the doubling technique to} guess the value of the optimal
fractional solution denoted by $\alpha$. \txtmag{To get the intuition of the doubling technique, suppose that the
online algorithm has a competitive ratio of $\Theta$ and the true cost of the optimal solution is $c^*$. We begin
with the initial guess of the optimal cost $\alpha$, and run the algorithm assuming this guess is the correct
estimate of $c^*$. If the online algorithm fails to find a feasible solution of a cost at most $\Theta \alpha$, we
double the value of $\alpha$ and continue with the algorithm. Eventually, $\alpha$ will exceed $c^*$ by at most a
factor of 2, and for this value of $\alpha$, the algorithm will compute a feasible solution (since all demands are
satisfied \txtgreen{as shown in Section~\ref{sec:PerformanceAnalysis}}) of cost at most $2\Theta c^*$. Therefore,
we can assume that the value of the optimal solution is known.}

The initial guess of $\alpha$ is set to $\alpha = \min_e c_e$. Based on the value of $\alpha$, we sort the edges
into three categories. The first category includes all edges with a cost less than $\frac{\alpha}{M}$. All edges
in the first category are assigned a weight of 1, \txtmag{paying at most an additional cost of $\sum_{e \in
\mathcal{M}'} \frac{\alpha}{M} = \alpha$}. The second category includes all edges with a cost greater than
$\alpha$. All edges in the second category are excluded from the solution. The third category includes all the
remaining edges \txtmag{where $\frac{\alpha}{M} \leq c_e \leq \alpha$}. All edges in the third category are
assigned an initial weight of $\frac{1}{M^2}$ and their costs are normalized \txtmag{by $\frac{\alpha}{M}$} to be
between $1$ and $M$. The algorithm then updates the weights of the edges in the third category until the minimum
weighted cut is greater than 1. \txtmag{During the execution of the algorithm, it may turn out that a demand
cannot be satisfied (due to the excluded edges from the second category mentioned above), or that the true value
of the optimal solution is greater than the current value of $\alpha$ (which can be verified by checking if the
cost of the fractional solution exceeds an upper bound on the cost of the algorithm, which is $2\alpha \log(M) +
\alpha + 1$, known through the competitive-ratio analysis of the algorithm (Theorem~\ref{theorem:theorem2}). In
this case, we double the value of $\alpha$, which means that more edges are available for the online algorithm to
satisfy the demands, ``forget'' about the weights assigned to the edges so far, double the value of $\alpha$, and
continue the algorithm. Although we ``forget'' about the weights when $\alpha$ is doubled, the actual weight of an
edge used to calculate the total cost of the algorithm is the maximum weight assigned to that edge so far. This
ensures that the edges previously chosen will not be unselected over the iterations.} If the total cost of the
online algorithm is less than \txtblue{$2\alpha \log(M) + \alpha + 1$}, then we are guaranteed that we achieve a
fractional solution that is within a $\log(M)$ factor of \txtmag{$\alpha$}.

After the fractional solution is obtained, the algorithm rounds the solution to an integral solution which is
within a $\log(N)$ factor of the fractional solution. \txtblue{This is done by comparing the weight $w_e$ of an
edge $e$ to the threshold $\gamma_e$ (as computed in line 11 of Algorithm 1) assigned to that edge. If the weight
of the edge is greater than the value of the threshold (line 23 of Algorithm 1), the weight of the edge is set to
1. This randomized rounding process introduces a $\mathcal{O}(\log(N))$ factor to the competitive ratio.}
Therefore, the competitive ratio of our algorithm is $\mathcal{O}(\log(M)\log(N))$.

The complexity of the online algorithm mainly comes from two parts. The first part is solving the ILP optimization
problem in line 1 of Algorithm 1. We note that this optimization problem is only solved once before the arrival of
users' requests. \txtredd{The solution to the problem can be saved for future reference when the same illumination
requirements occur again.} Moreover, this ILP problem only considers illumination requirements, it has a lower
complexity than the original problem $\mathbb{P}1$. The second part is finding the minimum weight cut for every
unsatisfied user. There are many algorithms that can find the minimum weight cut in polynomial time
\cite{WiKiCut}. Therefore, the rest of the algorithm has a polynomial time complexity.

\section{Performance Analysis}\label{sec:PerformanceAnalysis}
\txtred{In this section, we prove that the competitive ratio of the online algorithm is
$\mathcal{O}(\log(M)\log(N))$ with respect to Problem $\mathbb{P}4$}. Moreover, we prove that the best competitive
ratio achieved by any online algorithm under our settings is $\Omega(\log(M))$. \txtblue{The adversary model used
in the proof is an oblivious one.} We note that the optimization problem \eqref{eqn:basic_formulation1} is solved
\txtred{using CPLEX \cite{cplex}} at the beginning of the online algorithm. Therefore, the following proof is for
lines 8-28 in Algorithm~\ref{alg:alg1}. \txtmag{Since $\alpha$ is the guess of the optimal solution, which is
assumed to be known through the doubling technique, $\alpha = \sum_{e} c_{e}^{'}w_{e}^{*}$, where $w_{e}^{*}$
denotes the weight assigned to edge $e$ by the optimal solution.} All logarithms are to the base 2.

\begin{theorem}\label{theorem:theorem2}
For a fixed $\alpha$, the online algorithm produces an integral solution that is:
\begin{itemize}
\item $\mathcal{O}(\log(M)\log(N))$ competitive.
\item The solution is feasible with probability $1 - \frac{1}{N}$.
\end{itemize}
\end{theorem}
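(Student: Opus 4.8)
\textbf{The plan} is to analyze lines~8--28 of Algorithm~\ref{alg:alg1} for a fixed guess $\alpha$ that is at least the optimal fractional cost of $\mathbb{P}4$; the doubling rule then reduces the general case to this one at a constant-factor loss. Following the excerpt I treat $\alpha=\sum_e c_e' w_e^{*}$, where $w^{*}$ is an optimal fractional solution of $\mathbb{P}4$, so that every $S$--$n$ cut $\mathcal{C}$ satisfies $\sum_{e\in\mathcal{C}} w_e^{*}\ge 1$ by max-flow--min-cut. I would split the proof into a bound on the fractional cost produced by the augmentation loop, a bound on the cost after rounding, and a bound on the feasibility after rounding.

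\textbf{Fractional phase.} First I would observe that one weight-augmentation step on a cut $\mathcal{C}$ raises the fractional cost $\sum_e w_e' c_e'$ by exactly $\sum_{e\in\mathcal{C}} w_e'$, and that this is strictly below $1$ because a cut is augmented only while its weight is below $1$. To count the steps I would use the potential $\Phi=\sum_{e}c_e' w_e^{*}\log w_e'$ over the augmentable (``category~3'') edges: since those edges start at weight $M^{-2}$, $\Phi$ starts at $-2\alpha\log M$, and since an augmented edge has weight below $1$ and is multiplied by $1+1/c_e'\le 2$ its weight never exceeds $2$, so $\Phi\le\alpha$ throughout. Meanwhile each augmentation increases $\Phi$ by $\sum_{e\in\mathcal{C}}c_e'w_e^{*}\log(1+1/c_e')\ge\sum_{e\in\mathcal{C}}w_e^{*}\ge 1$, using $\log(1+1/x)\ge 1/x$ for $x\ge 1$ and the cut inequality for $w^{*}$. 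Hence at most $2\alpha\log M+\alpha$ augmentations occur; adding the $O(1)$ contribution of the initial weights and the category-1 edges bounds the total fractional cost by $2\alpha\log M+\alpha+1=O(\alpha\log M)$, which is exactly the quantity the doubling test compares against, so for $\alpha\ge\alpha^{*}$ the guess is never doubled. The inner while loop terminates because each pass strictly increases the bounded potential, and on exit the minimum $S$--$u$ cut has weight $\ge 1$, i.e.\ $max\_flow(S,u,w)\ge 1$; the excluded ``category~2'' edges cannot orphan a demand since, as $\alpha\ge\alpha^{*}$, an optimal fractional routing of $u$ uses only edges of cost $\le\alpha$.

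\textbf{Rounding.} At termination each threshold is $\gamma_e=\min_{1\le j\le k}\Gamma(e,j)$ with $k=\lceil 2\log(N+1)\rceil$ i.i.d.\ uniforms, so $\Pr[e\text{ is selected}]=1-(1-w_e)^k$, which is at most $k\,w_e$ and at least $1-\exp(-kw_e)$. The upper bound and linearity of expectation give $\mathbb{E}[\sum_e w_e c_e']=\sum_e\Pr[e\text{ selected}]\,c_e'\le k\sum_e w_e c_e'=O(\alpha\log M\log N)$, and since $\alpha$ does not exceed the optimum of $\mathbb{P}4$ the integral solution is $O(\log M\log N)$-competitive in expectation, which is the first bullet. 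For the second bullet, fix a user $u$; just before rounding $max\_flow(S,u,w)\ge 1$, so every $S$--$u$ cut $\mathcal{C}$ has $w$-weight $\ge 1$ and is therefore left entirely uncovered with probability at most $\prod_{e\in\mathcal{C}}(1-w_e)^k\le\exp\!\bigl(-k\sum_{e\in\mathcal{C}}w_e\bigr)\le\exp(-k)\le(N+1)^{-2}$. Using the three-layer structure of the constructed graph, $u$ is disconnected only if for every AP $m$ that can serve $u$ the edge $(S,m)$ or the edge $(m,u)$ is unselected, and because the threshold variables on distinct edges are independent this probability factorises as $\prod_m\bigl(1-\Pr[(S,m)\text{ sel}]\,\Pr[(m,u)\text{ sel}]\bigr)$; bounding each factor via $1-(1-w)^k\ge 1-\exp(-kw)$ and using $\sum_m\min(w_{Sm},w_{mu})=max\_flow(S,u,w)\ge 1$ to exhibit enough APs carrying non-negligible fractional weight should push $\Pr[u\text{ disconnected}]$ down to $O(N^{-2})$. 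A union bound over the $N$ users then yields feasibility with probability at least $1-\frac{1}{N}$.

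\textbf{The main obstacle.} The fractional and expected-cost arguments are routine online primal--dual and linearity-of-expectation bookkeeping. The delicate step is the feasibility-after-rounding estimate: the number of $S$--$u$ cuts is exponential in the number of APs serving $u$, so a blanket union bound over cuts is hopelessly lossy, and one must instead exploit the per-AP independence of the rounding together with the fact that the augmentation loop produces fractional weights concentrated enough on near-saturated $S$--$m$--$u$ paths that $\sum_m\min(w_{Sm},w_{mu})\ge 1$ already forces the disconnection probability to $O(N^{-2})$; calibrating the number of threshold copies so that this lines up with the target probability $1-\frac{1}{N}$ is where the real work lies.
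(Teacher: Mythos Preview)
Your fractional-phase argument and the expected-cost bound after rounding coincide with the paper's essentially line for line: the same potential $\sum_e c_e' w_e^{*}\log w_e'$ with the same three properties, the same per-step cost increase $\le 1$, and the same $k\,w_e$ upper bound on the selection probability under linearity of expectation.

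Where you diverge is the feasibility estimate. The paper does \emph{not} carry out the per-AP product analysis you sketch; it simply asserts that the event ``user $n$ is not served'' coincides with ``all of the edges connecting $S$ to user $n$ are unselected,'' and then bounds
\[
\prod_{j}\prod_{e}(1-w_e)\le\prod_j\exp\Bigl(-\sum_e w_e\Bigr)\le\exp(-k)\le\frac{1}{N^2},
\]
invoking $\sum_e w_e\ge 1$ from fractional feasibility on an $S$--$n$ cut, and finishes with a union bound over users. In other words, the paper treats the rounding exactly as in single-hop online set cover. You have correctly noticed that in the two-hop graph these events are not literally the same---$n$ can be disconnected even when some edge in every cut is selected, because a selected $(m,n)$ edge is useless without the matching $(S,m)$ edge---and you propose the natural refinement via $\prod_m\bigl(1-\Pr[(S,m)\text{ sel}]\Pr[(m,n)\text{ sel}]\bigr)$ together with $\sum_m\min(w_{Sm},w_{mn})\ge 1$. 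So relative to the paper you are being \emph{more} careful, not less: the ``main obstacle'' you flag is a subtlety the paper glosses over rather than extra work the paper performs and you omit. If your goal is to reproduce the paper's proof, the simple single-cut calculation above is all that appears there; if your goal is a fully rigorous feasibility bound for the two-hop graph, then your instinct that more is needed than $k=\lceil 2\log(N+1)\rceil$ independent thresholds is well founded, and closing that gap would indeed require the additional structural argument you outline.
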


\begin{proof}
\begin{proof}
We first show that the fractional solution of the online algorithm is within a $\mathcal{O}(\log(M))$ factor of
the optimal offline solution. Then, we show that the integral solution of the online algorithm is within a
$\mathcal{O}(\log(N))$ factor of the fractional solution. Therefore, the integral solution is within an
$\mathcal{O}(\log(M)\log(N))$ factor of the optimal offline solution.

We prove that the fractional solution of the online algorithm is within $\mathcal{O}(\log(M))$ of the optimal
solution by proving the following two claims:
\begin{itemize}
\item The number of weight augmentation steps of the online algorithm is \txtblue{$\mathcal{O}(\alpha\log(M))$}.
\item In each weight augmentation step, the increment of \txtblue{the total cost of the solution returned by the
    online algorithm} does not exceed 1.
\end{itemize}

To prove the first claim, consider the following potential function:
\begin{displaymath}
\beta = \sum_e c_{e}^{'}w_{e}^{*}\log(w_{e}^{'})
\end{displaymath}
where $w_e^*$ is the weight of edge $e$ computed by the optimal solution, \txtblue{$c_{e}^{'}$ is the normalized
version of the cost $c_e$ of edge $e$ as computed in line 14 of Algorithm~\ref{alg:alg1}} and $w_e^{'}$ is the
weight of edge $e$ updated by the online algorithm. We show that if the potential function is updated after each
weight augmentation step, then we have the following properties of the potential function:
\begin{itemize}
\item Initially, $\beta = -2\alpha\log(M)$.
\item When the algorithm terminates, $\beta \leq \alpha$.
\item In each weight augmentation step, $\Delta\beta \geq 1$.

\end{itemize}
These properties guarantee that the number of weight augmentation steps performed by the online algorithm is at
most $\alpha + 2\alpha\log(M) \txtgreen{+ 1}= \mathcal{O}\log(M)$. To prove the first property, note that
initially, $w_{e}^{'} = \frac{1}{M^2}$ and $\beta = \sum_e c_{e}^{'}w_{e}^{*}\log(\frac{1}{M^2}) =
-2\alpha\log(M)$. To prove the second property, the weight of each link is at most 2 when the algorithm
terminates, and $\beta \leq \sum_e c_{e}^{'}w_{e}^{*}\log(2) = \alpha$. To prove the third property, note that in
each augmentation step, we have:
\begin{align}
\Delta\beta &= \sum_{e \in \mathcal{C}}c_{e}^{'}w_{e}^{*}\log(w_{e}^{'}(1 + \frac{1}{c_{e}^{'}})) - \sum_{e \in \mathcal{C}}c_{e}^{'}w_{e}^{*}\log(w_{e}^{'})\nonumber\\
&= \sum_{e \in \mathcal{C}}c_{e}^{'}w_{e}^{*}\log(1 + \frac{1}{c_{e}^{'}})
\geq \sum_{e \in \mathcal{C}}w_{e}^{*} \geq 1 \nonumber
\end{align}
where the last inequality follows from the fact that the total weight assigned by the optimal solution \txtblue{to
the edges in the cut} is at least 1. \txtblue{Therefore, the number of weight augmentation steps performed by the
online algorithm is at most $2\alpha \log(M) + \alpha + 1 = \mathcal{O}(\alpha\log(M))$.}

To prove the second claim, consider an iteration in which a new user arrives and the weight augmentation step is
performed. The total weight of the links in $\mathcal{C}$ is less than 1, and the weight of each link $e \in
\mathcal{C}$ will increase by $w_{e}^{'}c_{e}^{'}$. Therefore, the total increase of \txtblue{the cost of the
solution returned by the online algorithm} in a single weight augmentation step is:
\begin{displaymath}
\sum_{e \in \mathcal{C}}\frac{w_{e}^{'}}{c_{e}^{'}}c_{e}^{'} = \sum_{e \in \mathcal{C}}w_{e}^{'} \leq 1
\end{displaymath}
From the first claim, the total number of weight augmentation steps is \txtblue{$\mathcal{O}(\alpha\log(M))$}.
Therefore, the fractional solution of the online algorithm is within a $\log(M)$ factor of the optimal solution.

\begin{figure*}
\centering
\includegraphics[scale = 0.35]{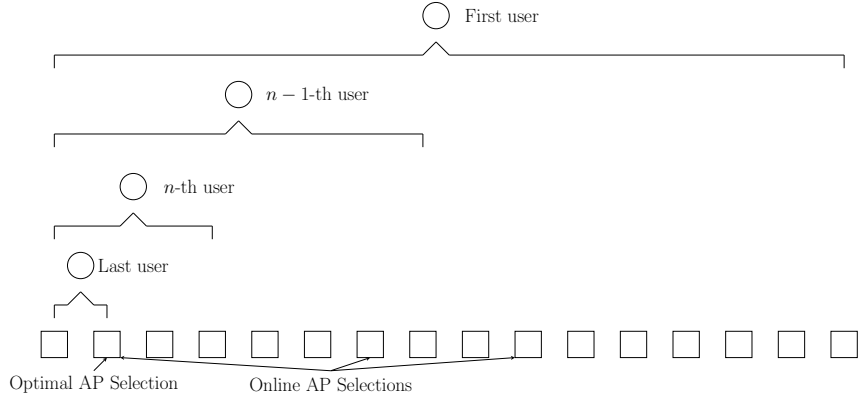}
\caption{The system used in the proof of Theorem \ref{theorem:lower}.}
\label{fig:lower}
\end{figure*}

The proof that the integral solution of the online algorithm is within $\mathcal{O}(\log(N))$ of the fractional
solution, and is feasible with probability $1 - \frac{1}{N}$ is explained as follows.

For each $j, 1 \leq j \leq 2\log(N+1)$, the probability that $\Gamma(e,j) \leq \txtgreen{w_{e}}$ is exactly
$\txtgreen{w_{e}}$. Let \txtblue{$Z$} denote the set of edges whose weights are set to 1 after the rounding
process. Note that the probability that an edge $e \in \txtblue{Z}$ is the probability that there exists a $j$
such that $\Gamma(e,j) \leq \txtgreen{w_{e}}$. Let $Y(e,j)$ denote the indicator of the event that $\Gamma(e,j)
\leq \txtgreen{w_{e}}$ (i.e., $Y(e,j) = 1$ if $\Gamma(e,j) \leq \txtgreen{w_{e}}$, and 0 otherwise). Since
$\Gamma(e,j)$ is chosen uniformly at random in the range [0, 1], the probability and the expectation of the
indicator $Y(e,j)$ is at most $\txtgreen{w_{e}}$. Therefore, the expected cost of the integral solution is
\begin{align}
 \mathbb{E} \bigl[\sum_{e \in \txtblue{Z}}c_e \bigr] &\leq \sum_{e \in E}\sum_{j=1}^{2\log(N+1)}c_e \mathbb{E}[Y(e,j)]\nonumber\\
&\leq \sum_{e \in E}\sum_{j=1}^{2\log(N+1)}c_e \txtgreen{w_{e}}= 2\log(N+1) \sum_{e \in E}c_e \txtgreen{w_{e}} \nonumber
\end{align}

Therefore, the expected cost of the integral solution is at most $2\log(N+1)$ times the cost of the fractional
solution.

To prove that the solution is feasible with probability $1 - \frac{1}{N}$, pick any user $n$. The probability that
user $n$ is not served is the probability that all of the edges connecting the virtual source $S$ to user $n$ has
$\Gamma(e,j) > \txtgreen{w_{e}}, \forall j, 1\leq j \leq 2\log(N+1)$. This probability is:
\begin{align}
\prod_{j}\prod_{e}(1 - Y(e,j)) &= \prod_{j}\prod_{e}(1 - \txtgreen{w_{e}}) \nonumber\\
&\leq \prod_{j}\exp(-\sum_{e}\txtgreen{w_{e}}) \nonumber\\
&\leq \prod_{j}exp(-1) = \exp(-2\log(N+1)) \nonumber\\
&\leq \frac{1}{N^2} \nonumber
\end{align}
The first inequality follows since $(1 - \txtgreen{w_{e}}) \leq \exp(-\txtgreen{w_{e}})$. \txtgreen{The second
inequality follows since user $n$ is satisfied in the fractional solution and thus $\sum_e w_e \geq 1$. Using the
union bound, the probability that there exists an unserved user in the integral solution is at most
$N\frac{1}{N^2} = \frac{1}{N}$.}
\end{proof}
\end{proof}

Note that the process of doubling $\alpha$ results in an additional factor of 2 of the competitive ratio, and
``forgetting'' the weights when $\alpha$ is doubled results in an additional factor of 2, to a total additional
factor of 4. Nevertheless, the competitive ratio of the online algorithm is $\mathcal{O}(\log(M)\log(N))$.

Now we prove a lower bound on the competitive ratio achieved by any online algorithm by proving the following
theorem:

\begin{theorem}\label{theorem:lower}
The best competitive ratio achieved by any online algorithm is $\Omega(\log(M))$.
\end{theorem}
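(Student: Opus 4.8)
The plan is to exhibit one family of hard instances and invoke Yao's minimax principle. I would use an instance with no illumination requirement ($\mathcal{I}_w = 0$ for every location $w$, so that solving $\mathbb{P}2$ switches on no AP), $M$ APs with $P_m^{on} = 1$ each and $P_{mn} = 0$ for every AP--user pair, and with the $P_m^{max}$ constraints dropped exactly as in Section~\ref{online}; there are $N = M$ users arriving one by one. On such an instance the cost of any online or offline solution equals the number of APs it switches on. To randomize the instance I would draw a uniformly random permutation $\sigma$ of $\{1,\dots,M\}$ and declare that user $u_i$ can be served by exactly the APs $\{\sigma(i),\sigma(i+1),\dots,\sigma(M)\}$; equivalently AP $\sigma(j)$ serves precisely $u_1,\dots,u_j$, so the serving sets form a chain (Fig.~\ref{fig:lower} depicts a small instance of this nested structure). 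For every realization of $\sigma$ the offline optimum switches on only $\sigma(M)$, which lies in every user's serving set, so $\mathcal{P}_{off} = 1$ deterministically.

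Next I would lower-bound the expected cost of an arbitrary \emph{deterministic} online algorithm under this distribution. When $u_i$ arrives the algorithm has learned only the identities of $\sigma(1),\dots,\sigma(i-1)$ (the APs that failed some earlier user); conditioned on that history, the restriction of $\sigma$ to the remaining $M-i+1$ APs is still a uniformly random bijection onto $\{i,\dots,M\}$. If none of the APs switched on so far lies in $u_i$'s serving set --- a \emph{forced step} --- the algorithm must switch on at least one new AP from $\{\sigma(i),\dots,\sigma(M)\}$, paying at least $1$. By exchangeability, if it switches on a single AP that AP equals $\sigma(j)$ for a uniformly random $j \in \{i,\dots,M\}$ and serves exactly $u_i,\dots,u_j$, so the next forced step occurs at $u_{j+1}$. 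Hence the forced steps form the classical records process, starting at position $0$ and jumping from position $p < M$ to a uniformly random position in $\{p+1,\dots,M\}$. Writing $E_p$ for the expected number of remaining jumps, one has $E_M = 0$ and $E_p = 1 + \frac{1}{M-p}\sum_{j=p+1}^{M} E_j$, whose solution is $E_p = H_{M-p}$, the $(M-p)$-th harmonic number; so the expected number of forced steps, and therefore the expected cost, is $H_M = \Theta(\log M)$. Since $\mathcal{P}_{off} = 1$ always, Yao's principle (any randomized online algorithm against an oblivious adversary has expected cost at least $\min_{\text{deterministic }B}\mathbb{E}_\sigma[\text{cost of }B]$ on some input in the support) yields competitive ratio $\Omega(\log M)$.

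The one point that needs care --- and the step I expect to be the main obstacle --- is that the algorithm may switch on several APs in a single forced step, which could in principle shorten the chain of forced steps below $H_M$. I would rule this out with a potential argument: if at a forced step the current serving set contains $r$ not-yet-tried APs and the algorithm switches on $t \ge 1$ of them, it pays $t$, while (again by exchangeability) the switched-on set is a uniformly random $t$-subset, so the maximum of $t$ uniform samples leaves in expectation about $r/(t+1)$ APs still untried above it; taking $\log_2$ of the size of the untried part as the potential, the expected drop per forced step is at most $\log_2(t+1) + O(1)$, which is at most $t + O(1)$ for every integer $t \ge 1$. Summing over all forced steps the total potential drop is $\log_2 M$, so the total expected payment is $\Omega(\log M)$ regardless of how the algorithm batches its turn-ons (switching on APs already known not to serve the current user only wastes money and can be ignored). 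This closes the proof and, with the $H_M = O(\log M)$ bound above, shows $\Theta(\log M)$ is the exact order for this family.
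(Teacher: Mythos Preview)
Your proof is correct and, in two respects, more careful than the paper's. Both arguments build a family of instances in which the users' serving sets form a nested chain (so the offline optimum opens a single AP), but the paper uses $\log M$ users with halving sets $M, M/2, M/4,\ldots$, while you take $N=M$ users whose serving sets shrink by one and hide the nesting behind a uniformly random permutation. Either parameterization gives $\Omega(\log M)$; the paper's has the minor advantage of needing only $N=\log M$ users.

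What you gain is rigor on two points the paper leaves implicit. First, the paper asserts that the AP opened for user $n-1$ lies in user $n$'s set ``with probability $\tfrac12$'' without saying what is random; on a fixed instance against a deterministic algorithm that probability is either $0$ or $1$, so the statement is really a Yao argument with randomly relabeled APs, which you make explicit. Second, the paper tacitly assumes the online algorithm opens exactly one AP per user and never discusses batching several turn-ons; you correctly flag this as the main obstacle and supply a potential argument. That argument is sound and becomes completely clean if you replace $\log_2 r$ by the harmonic potential $\Phi=H_r$: conditioning on the smallest of the $t$ sampled positions and inducting on $t$ yields $E[H_r-H_{R_t}]\le t$ exactly, so the expected cost of any deterministic algorithm is at least $H_M$ with no slack.
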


\begin{proof}
\txtgreen{To prove this theorem, we provide an example network such that any online algorithm implemented over
this network cannot achieve a competitive ratio better than $\Omega(\log(M))$.} Consider a system with $M$ APs and
$\log(M)$ users. The first user is connected to all APs, the second user is connected to the first $\frac{M}{2}$
APs, the $n$-th user is connected to the first $\frac{M}{2^{n-1}}$. The system has the following properties:
\begin{itemize}
\item The power consumption required to turn on an AP is the same for all APs and is equal to $f$.
\item The power consumption required for data transmission between a user and an AP is equal to 0 if the user is
    connected to the AP, and is otherwise equal to infinity.
\end{itemize}
The system used in the proof is shown in Fig.~\ref{fig:lower}.

In this system, users arrive sequentially and in-order starting from the first user up to the last user.
Sequentially means that the online algorithm has to satisfy a user before considering the next one. Consider the
$n$-th user. The probability that the $n$-th user is connected to an AP, which was turned on to satisfy the
($n-1$)-th user is, $\frac{1}{2}$. This is because the AP that was turned on due to the ($n-1$)-th user is either
connected to both the ($n-1$)-th and $n$-th users, in which case the $n$-th user is satisfied without turning on
any additional APs, or the turned on AP is connected to the ($n-1$)-th user but is not connected to the $n$-th
user, in which case the $n$-th user is satisfied by turning on an additional AP. Therefore, the total power
consumption of the online algorithm is $\sum_{n=1}^{\log(M)}\frac{f}{2} = \frac{f}{2}\log(M)$. The optimal
algorithm will turn on a single AP connected to the last user, and so the total power consumption of the optimal
algorithm is \txtblue{$f$}. Therefore, the best competitive ratio achieved by any online algorithm is
$\Omega(\log(M))$.
\end{proof}

\section{Numerical Results}\label{sim}
In this section, we present the simulation results for the following schemes:

\begin{itemize}
\item Hybrid: our optimal hybrid scheme.
\item VLC: in this scheme, communications is performed using the VLC APs only and the same optimization problem
    considered for the hybrid scheme is performed but only on the VLC APs.
\item WiFi: in this scheme, communications is performed using the WiFi APs only and the same optimization
    problem considered for the hybrid scheme is performed but only on the WiFi APs.
\item Online: our proposed online algorithm.
\end{itemize}

Note that the first three schemes are offline and represented by NP-complete problems. The simulations are
conducted using Matlab R2013b. CPLEX 12.6.1 \cite{cplex2014v12} is utilized to run the offline optimization
problems.

\subsection{Simulation settings}
We consider the first floor of the Electrical and Computer Engineering Department building at New Jersey Institute
of Technology. The entire storey consists of 16 external rooms, 4 internal rooms, annular corridors and stairways
located in corners. Each external room has a lateral window while the internal rooms do not. In the simulations,
we assume that user terminals (UTs) are uniformly distributed at random in rooms. There are 4 WiFi APs deployed in
corners on the fourth floor. In each room, one light source equipped with 4 VLC APs is mounted on the ceiling. The
beamangle (i.e. formed by the central luminous flux and the central vertical line) of each VLC AP is
pre-configured in order to position the intersection of the center luminous flux and the horizontal UTs plane at
the center of each square region. The benefit of this new light source configuration is manifested in
\cite{shao2015joint}. These system model parameters are summarized in Table~\ref{table_system_model}.

\begin{table}
\centering
\caption{System Model Parameters}
\begin{tabular}{c c}
\hline
\hline
floor size&18.0 m $\times$ 18.0 m $\times$ 3.0 m\\
room size&3.0 m $\times$ 3.0 m $\times$ 3.0 m\\
number of internal rooms&4\\
number of external rooms&16\\
height of desk&0.85 m\\
number of VLC APs&80\\
number of WiFi APs&4\\
\hline
\hline
\end{tabular}
\label{table_system_model}
\end{table}

For a VLC AP, we set $P_{m}^{on}=15$ watts. The wallplug efficiency factors $\eta_{m}^{DC}$ and $\eta_{m}^{AC}$
denote the ratios of emitted optical power to injected electrical power when the $m$-th VLC AP is only sending DC
signals and AC signals, respectively. $\eta_{m}^{DC}$ for all the VLC APs is set to 0.1
\cite{bright2014efficient}, while $\eta_{m}^{AC}$ is varied to represent different driver circuitry. The channel
bandwidth of each VLC AP is 100 MHz. In each room, 4 VLC APs transmit data over 4 different channels, while other
4 VLC APs in a different room can reuse those 4 channels due to the obstacles posed by walls. When multiple UTs
are connected to the same VLC AP, they access the same spectrum resource by time division multiplexing (TDM). We
assume that, in any fraction of one time slot, a VLC AP is either sending AC signals or DC signals. Thus, the
maximum power consumption of a VLC AP $P_{m}^{max}=P_{m}^{on}\times\eta_{m}^{DC}/\eta_{m}^{AC}$. To calculate
$P_{mn}$ for the $n$-th UT connected to the $m$-th VLC AP, we start by calculating the VLC link capacity $C_{mn}$
based on Shannon-Hartley theorem \cite{cover2012elements}, assuming the VLC AP is always sending AC signals and
the AC signal strength is twice that of the average DC level. Then, the additive power consumed by the $n$-th UT
is \txtgreen{$\frac{R_{n}}{C_{mn}}\times P_{m}^{on}\times(\frac{\eta_{m}^{DC}}{\eta_{m}^{AC}}-1)$}, where $R_{n}$
is the required data rate of the $n$-th UT. The semi-angle at half power of each VLC AP is set to 30$^\circ$. The
constant Gaussian noise, dominated by thermal noise, is calculated from the parameters in
\cite{komine2004fundamental} and set to be 4.7$\times$10$^{-14}$ A$^{2}$. The receiver parameters (i.e., FOV of
receiver, detector area of a photodiode, gain of optical filter, refractive index of lens and O/E conversion
efficiency) are the same as those in \cite{komine2004fundamental}. The required illuminance level is above 300 lux
and the LED luminosity efficacy is 150 lumen per watt of electricity \cite{Zyga2010white}. These VLC parameters
are summarized in Table~\ref{table_VLC}.

\begin{table}
\centering
\caption{VLC Parameters}
\begin{tabular}{c c}
\hline
\hline
$P_{m}^{on}$ of VLC AP&15 W\\
semi-angle at half power&30$^\circ$\\
VLC channel bandwidth&100 MHz\\
constant Gaussian noise&4.7$\times$10$^{-14}$ A$^{2}$\\
detector area of photodiode&1.0 cm$^{2}$\\
O/E conversion efficiency&0.54 A/W\\
gain of optical filter&1.0\\
refractive index of lens&1.5\\
FOV of receiver&90 deg.\\
LED luminosity efficacy&150 lm/W\\
DC efficiency factor&0.1\\
\hline
\hline
\end{tabular}
\label{table_VLC}
\end{table}

For a WiFi AP, we set $P_{m}^{on}=10$ watts \cite{WiFiEnergy} and the maximum power consumption of a WiFi AP
$P_{m}^{max}=4$ watts for the NETGEAR model WNDR3400v3. Since WiFi APs are deployed on the fourth floor, the WiFi
signal strength attenuation due to floors' obstruction is set to -30 dB \cite{house2010rf}. A typical WiFi noise
level $N_{0}$ is -90 dBm \cite{kessler2011diagnosing}. When multiple UTs are connected to the same WiFi AP, they
access the network simultaneously by frequency division multiplexing (FDM). The WiFi bandwidth $W$ allocated to
each UT is set to 2 MHz. To calculate $P_{mn}$ for the $n$-th UT connected to the $m$-th WiFi AP, we start by
calculating the received power to satisfy the required data rate $R_{n}$ at the $n$-th user
$P_{rx}=N_{0}(2^{\frac{R_{n}}{W}}-1)$. Based on $P_{rx}$, we calculate the transmitted RF power $P_{tx}$ using
Friis equation \cite{rappaport1996wireless}. When the gains of Tx and Rx antennas are set to 1,
$P_{tx}=P_{rx}(\frac{4\pi r_{mn}}{\lambda})^{2}$, where $\lambda=0.125$ meters according to the 2.4 GHz carrier
frequency. To convert the transmitted RF power to electrical power consumption of the WiFi AP, we divide $P_{tx}$
by an efficiency factor $\eta_{WiFi}$ obtained from \cite{ebert2002measurement}. According to Fig.~4(a) in
\cite{ebert2002measurement}, if the RF power level is changed from 1 to 50 mW, the increase in electrical power
consumption is about 500 mW. Therefore, we set $\eta_{WiFi}$ to be 0.1. These WiFi parameters are summarized in
Table~\ref{table_WiFi}.

\begin{table}
\centering
\caption{WiFi Parameters}
\begin{tabular}{c c}
\hline
\hline
$P_{m}^{on}$ of WiFi AP&10 W\\
WiFi channel bandwidth&2 MHz\\
Carrier Frequency&2.4 GHz\\
Maximum power consumption of WiFi AP&14 W\\
WiFi noise level&-90 dBm\\
Attenuation due to floors&-30 dB\\
Efficiency factor&0.1\\
\hline
\hline
\end{tabular}
\label{table_WiFi}
\end{table}

To evaluate the ambient light level in each room, we utilize the concept of daylight factor (DF) introduced in
\cite{li2006average}. In particular, DF$=(E_{i}/E_{o})\times100\%$, where $E_{i}$ is illuminance due to daylight
at a point on the indoors working plane and $E_{o}$ is simultaneous outdoor illuminance on a horizontal plane from
an unobstructed hemisphere of overcast sky. In external rooms, DF increases when the evaluated point get closer to
the lateral window. While in internal room, DF~$=0$ for the entire UTs plane. According to
\cite{littlefair1988measurements}, the luminous efficacy of sunlight $\rho=93$ lm/W, where the watt represents
optical power. The variation of solar radiation $R_{sun}$ [W/m$^{2}$] is collected from \cite{Soda}. Given these
parameters, the indoor ambient light level at position $w$ can be estimated by
$\mathcal{I}_{ambient}^{w}=~$DF$^{w}\times\rho\times R_{sun}\times0.01$.

\subsection{Simulation Results}
We measure the power consumption of all schemes (i.e. Hybrid, VLC, WiFi and Online) when we change the throughput
requirement per UT, the number of UTs, and the hour of the day. For different throughput requirements and number
of UTs, we simulate the power consumption for day ($R_{sun}=110$ W/m$^{2}$) and night ($R_{sun}=0$ W/m$^{2}$). The
results are averaged over 100 runs and shown in Fig.~\ref{fig_throughput_night} to Fig.~\ref{fig_hours}. In
different figures, we vary $\eta_{m}^{AC}$ from 0.06 to 0.09, in order to evaluate the performance of the four
schemes as the efficiency of the VLC driver circuitry improves. The power consumption we measure here is after
subtracting the minimum power consumption required for illumination.

\txtgreen{First, from all the figures, we note that the power consumption of the online scheme is at 2-4 times the
power consumption of the hybrid scheme, which is within the $\mathcal{O}(\log(N)\log(M))$ factor proved in
Section~\ref{sec:PerformanceAnalysis}.}

In Fig.~\ref{fig_throughput_night} and Fig.~\ref{fig_throughput_day}, 100 UTs are distributed uniformly at random.
At night (Fig.~\ref{fig_throughput_night}), as all the VLC APs are turned on for illumination, the power
consumption for communication of VLC scheme \txtgreen{and online scheme} is much lower than that of WiFi scheme.
Note that the sudden increases of WiFi performance curve (i.e. green curve) are due to the turning-on of
additional WiFi APs. The total additive power of VLC scheme is even lower than the power consumption of turning on
a WiFi AP, and thus the hybrid scheme performs the same as VLC. As the wallplug efficiency factor for AC signals
$\eta_{m}^{AC}$ increases, the power consumption of VLC and hybrid schemes become negligible, and the performance
of online scheme is highly enhanced. During the day (Fig.~\ref{fig_throughput_day}), if communications is not
needed, half of the VLC APs are turned on for illumination. When the number of UTs is large, it may force most of
the VLC APs to be turned on for communications and produce unnecessary illumination. The results in
Fig.~\ref{fig_throughput_day} validate the analysis, manifested as the higher power consumption of VLC scheme
compared to that of WiFi. It is worth noticing that, at 2.5 Mbps, an extra WiFi AP needs to be turned on, which
leads to the superiority of the hybrid scheme. Since some UTs are located in the square regions such that the
corresponding VLC APs have already been turned on for illumination, those UTs will be connected to VLC APs in the
hybrid scheme instead of being connected to an extra WiFi AP. In contrast to the results at night, the performance
of online scheme during daytime does not change much as the $\eta_{m}^{AC}$ increases. \txtgreen{We also note from
the figures that the the online scheme consumes half the power compared to WiFi scheme at high throughputs during
the night, and consumes 7 to 17 times less power than the VLC scheme during the day.}

In Fig.~\ref{fig_numofUT_night} and Fig.~\ref{fig_numofUT_day}, the throughput requirement per UT is 6 Mbps. The
results are similar to those shown in Fig.~\ref{fig_throughput_night} and Fig.~\ref{fig_throughput_day}. The only
noticeable point is that in Fig.~\ref{fig_numofUT_day}, as the number of UTs increases or in other words the
uniformity of UTs increases, the probability of a UT located in a square region such that the corresponding VLC AP
has not been turned on for illumination increases. Therefore, the power consumption of the VLC scheme increases
fast, as the number of UTs increases. \txtgreen{We also note from the figures that the power consumption of WiFi
can reach 4 times the power consumption of the online scheme during the night, and the power consumption of the
VLC scheme can reach 7 times the power consumption of the online scheme during the day.}

In Fig.~\ref{fig_hours}, we set the number of UTs to 100 and throughput requirement per UT to 6 Mbps, to evaluate
the performance under the crowded environment. It can be seen that, from 6am to 7pm, the WiFi scheme outperforms
the VLC scheme due to the large power penalty of turning on VLC APs, and also the hybrid scheme outperforms both
WiFi and VLC schemes due to the fact that some UTs are located under the turned-on VLC APs and the total additive
power caused by those UTs are less than the power consumption of turning on a WiFi AP. From 7pm to 6am, the power
consumption of the VLC scheme is negligible as compared to the WiFi scheme, and thus the performance of the hybrid
scheme is the same as that of VLC. The power-saving of the hybrid scheme over WiFi and VLC schemes could be over
90\%. When $\eta_{m}^{AC}=0.09$, the online scheme saves around 75\% of the power consumption of the WiFi scheme
from 6am to 7pm. While from 7pm to 6am, the online scheme consumes slightly more power than the WiFi scheme.
However, this additional power consumption is acceptable since the online scheme does not have the knowledge of
the future arrivals of UTs while the WiFi scheme does.

\begin{figure*}
    \centering
    \begin{subfigure}[b]{0.238\textwidth}
        \includegraphics[width=\textwidth]{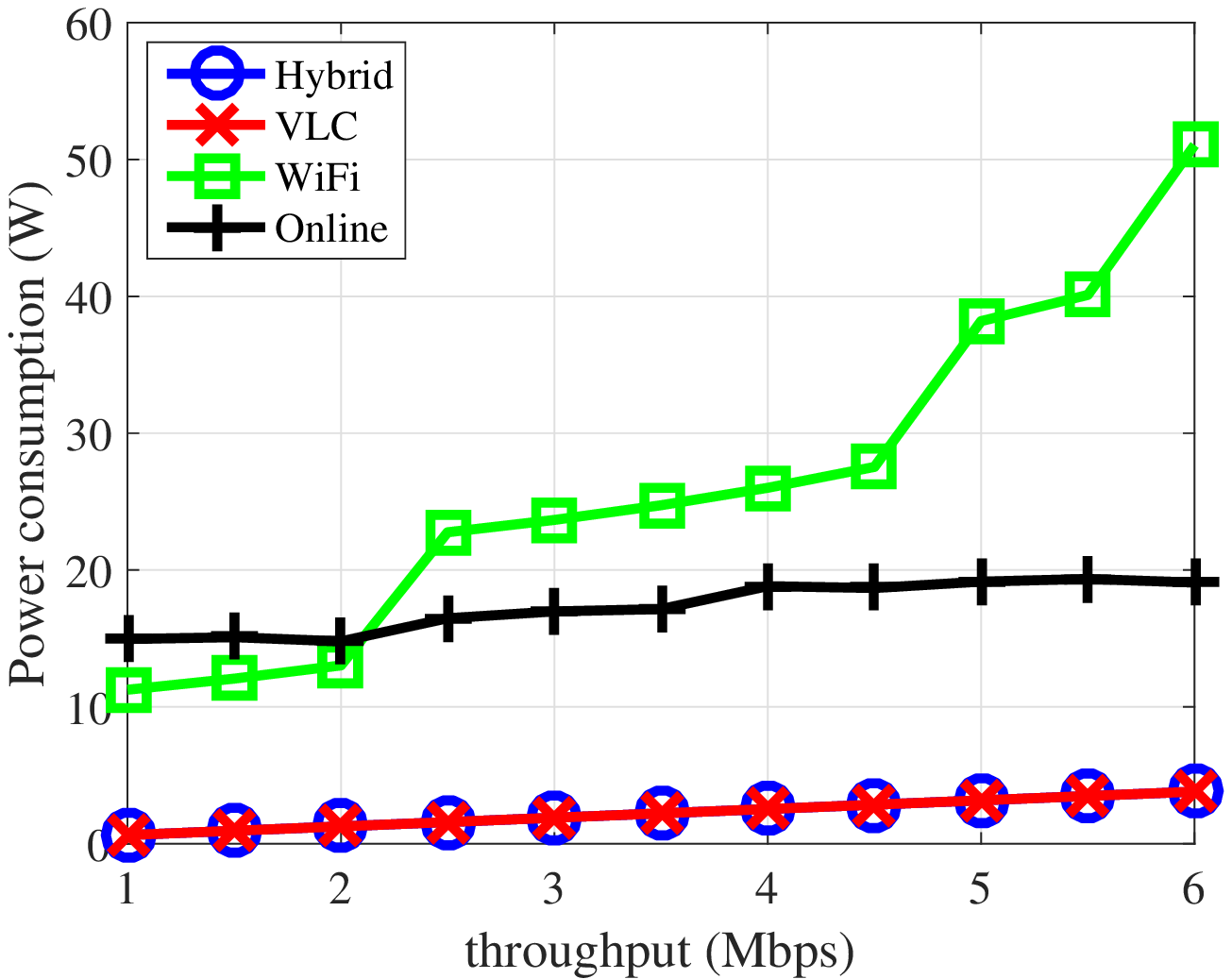}
        \caption{$\eta_{m}^{AC}=0.06$}
        \label{fig_throughput_night006}
    \end{subfigure}
    ~ 
    \begin{subfigure}[b]{0.238\textwidth}
        \includegraphics[width=\textwidth]{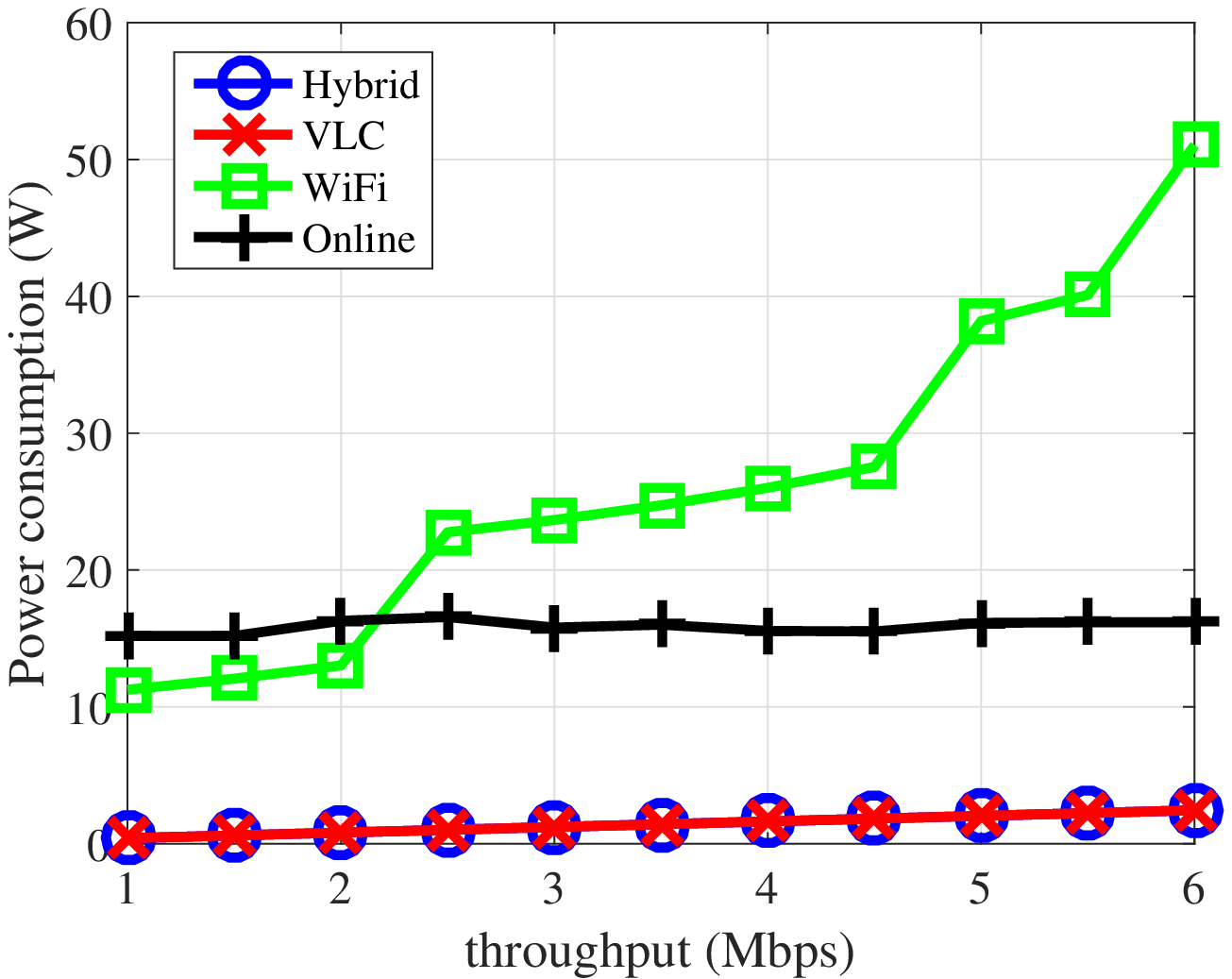}
        \caption{$\eta_{m}^{AC}=0.07$}
        \label{fig_throughput_night007}
    \end{subfigure}
    ~ 
    \begin{subfigure}[b]{0.238\textwidth}
        \includegraphics[width=\textwidth]{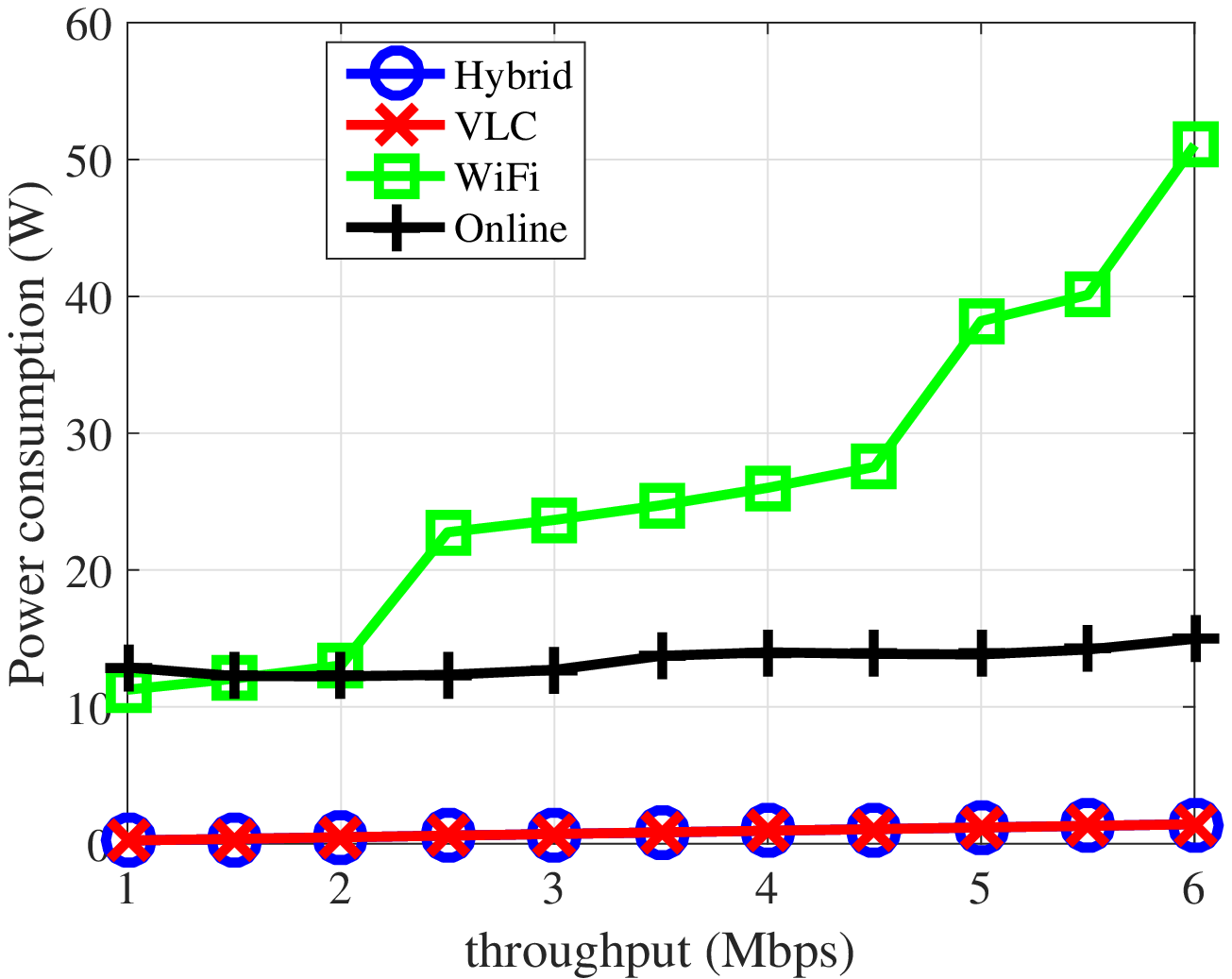}
        \caption{$\eta_{m}^{AC}=0.08$}
        \label{fig_throughput_night008}
    \end{subfigure}
    \begin{subfigure}[b]{0.238\textwidth}
        \includegraphics[width=\textwidth]{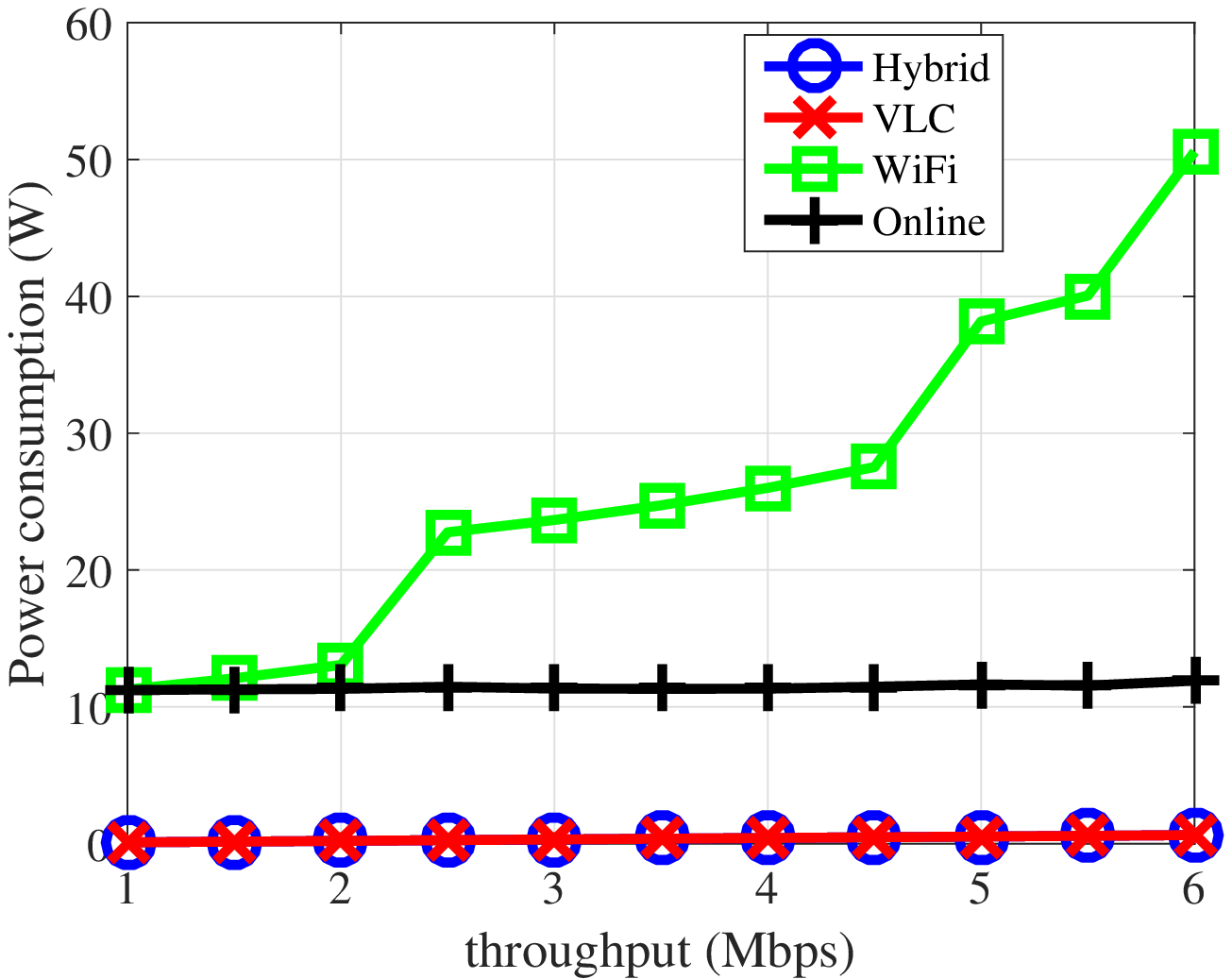}
        \caption{$\eta_{m}^{AC}=0.09$}
        \label{fig_throughput_night009}
    \end{subfigure}
    \caption{Power consumption in terms of throughput requirement per user at night}\label{fig_throughput_night}
\end{figure*}

\begin{figure*}
    \centering
    \begin{subfigure}[b]{0.238\textwidth}
        \includegraphics[width=\textwidth]{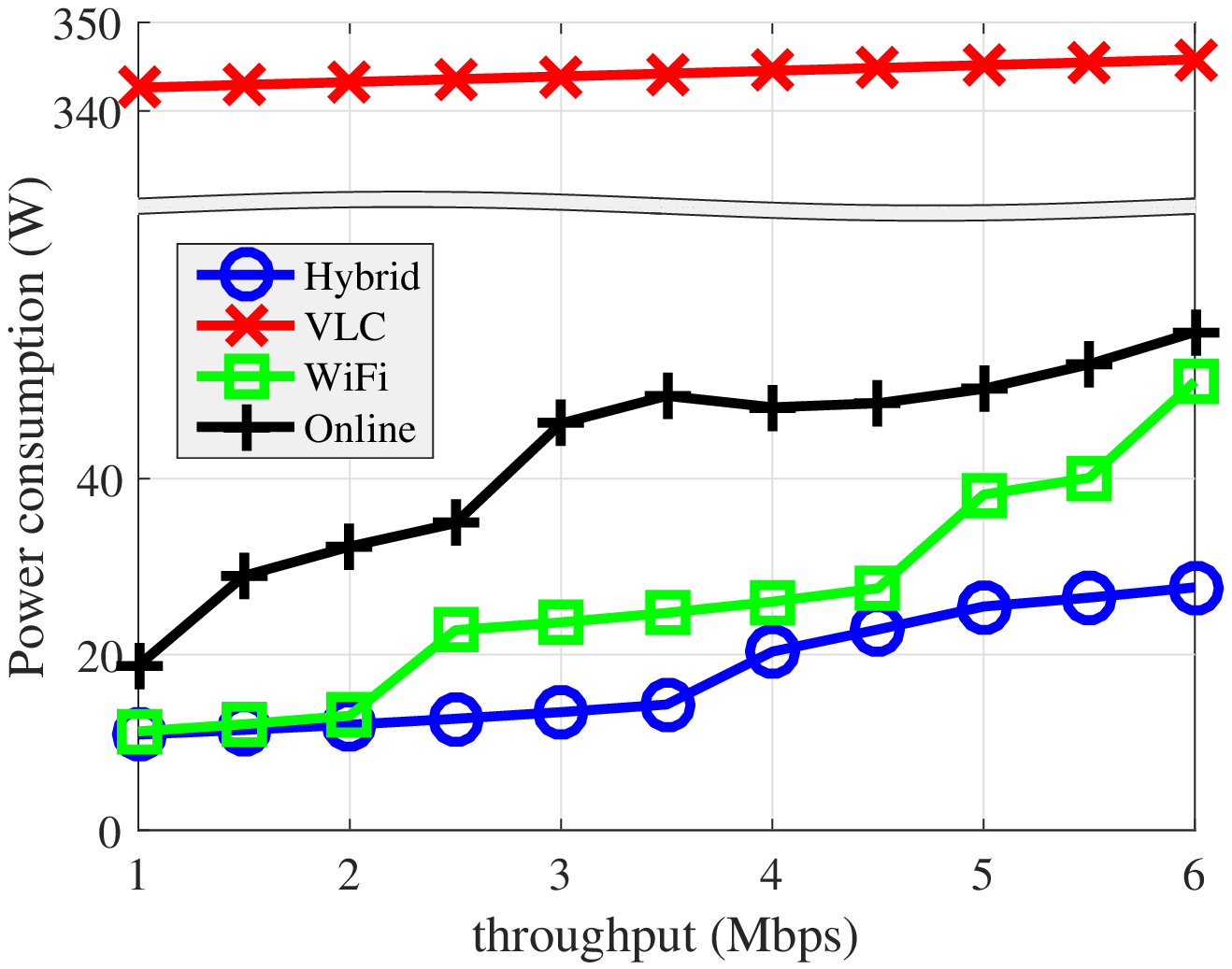}
        \caption{$\eta_{m}^{AC}=0.06$}
        \label{fig_throughput_day006}
    \end{subfigure}
    ~ 
    \begin{subfigure}[b]{0.238\textwidth}
        \includegraphics[width=\textwidth]{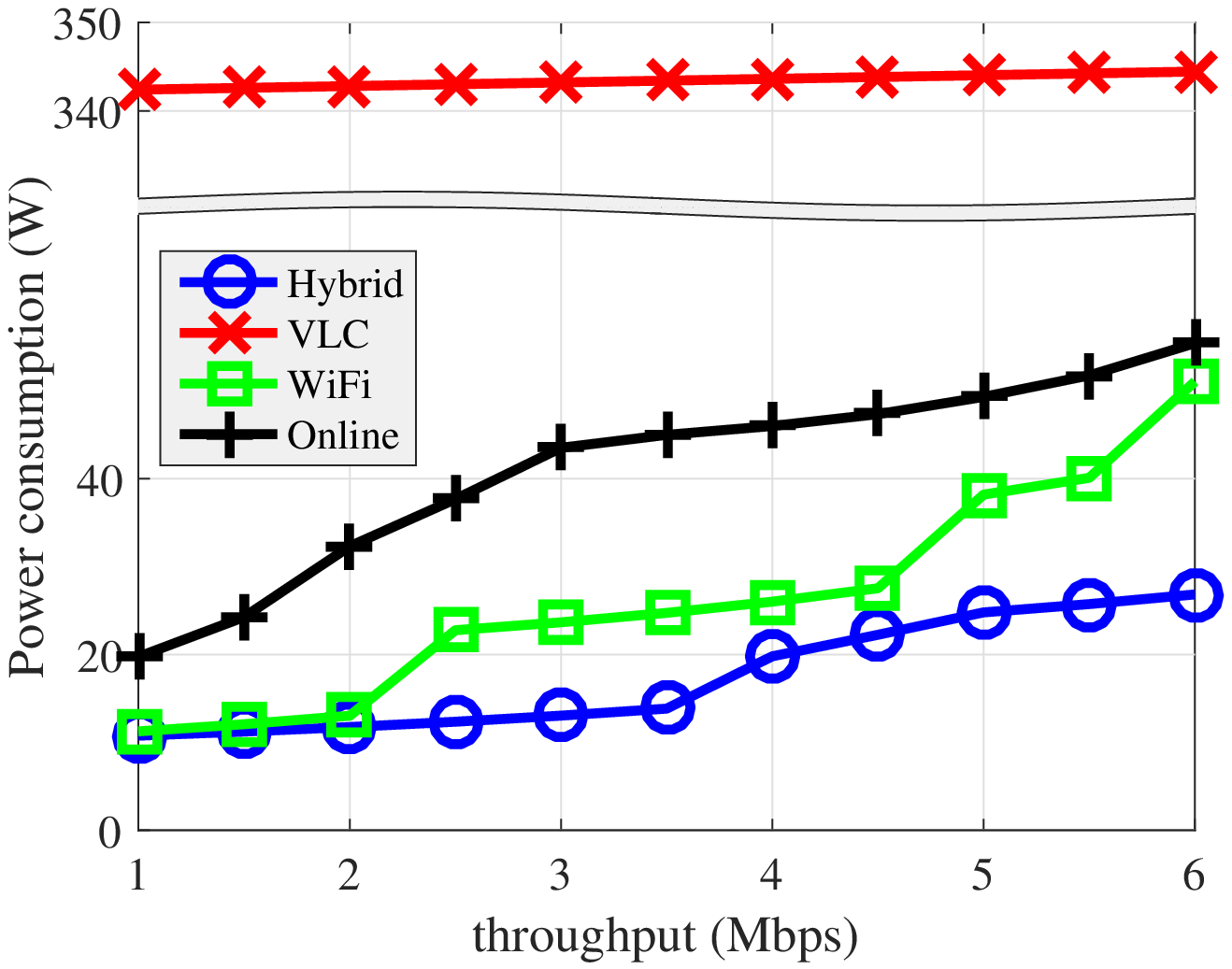}
        \caption{$\eta_{m}^{AC}=0.07$}
        \label{fig_throughput_day007}
    \end{subfigure}
    ~ 
    \begin{subfigure}[b]{0.238\textwidth}
        \includegraphics[width=\textwidth]{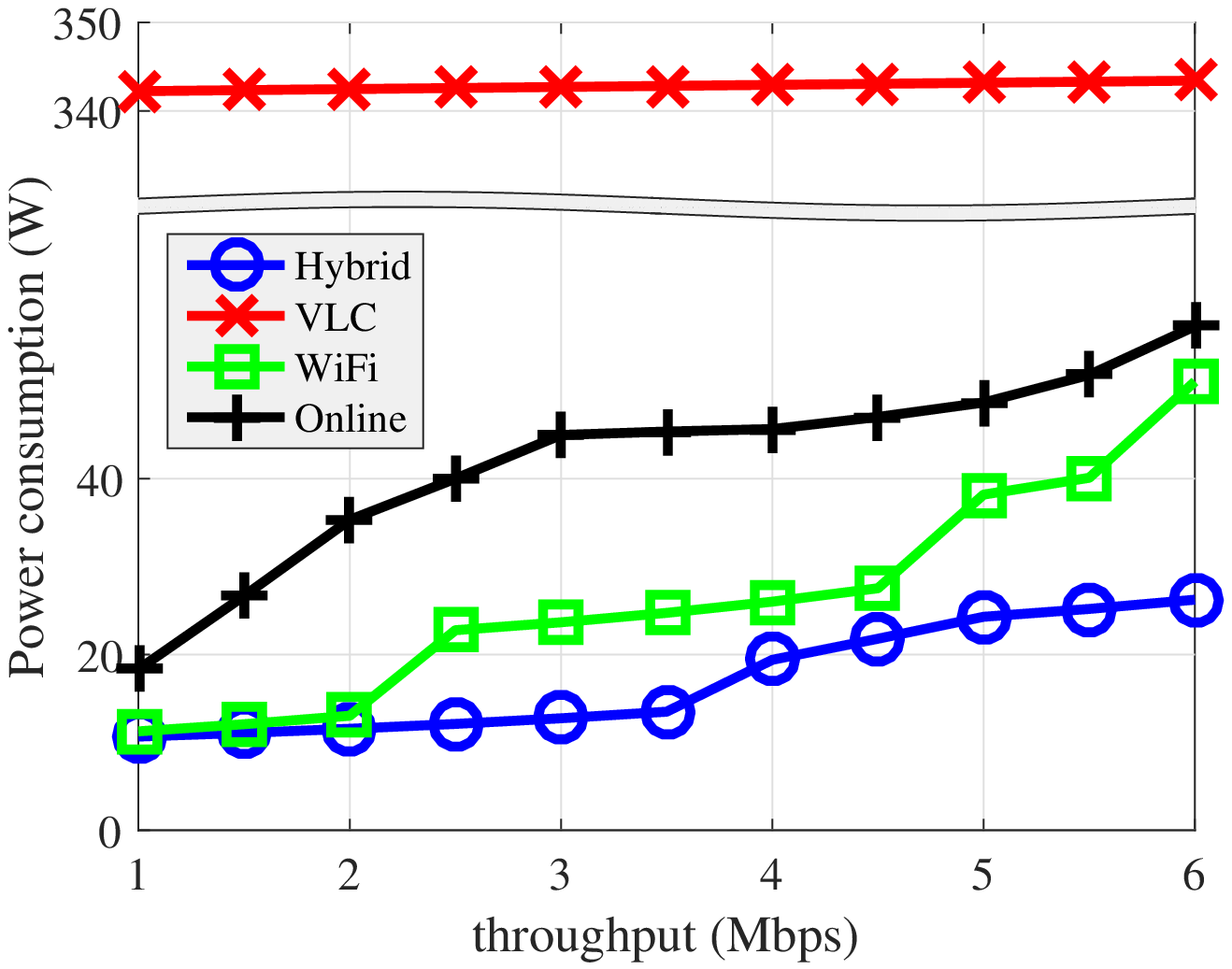}
        \caption{$\eta_{m}^{AC}=0.08$}
        \label{fig_throughput_day008}
    \end{subfigure}
    \begin{subfigure}[b]{0.238\textwidth}
        \includegraphics[width=\textwidth]{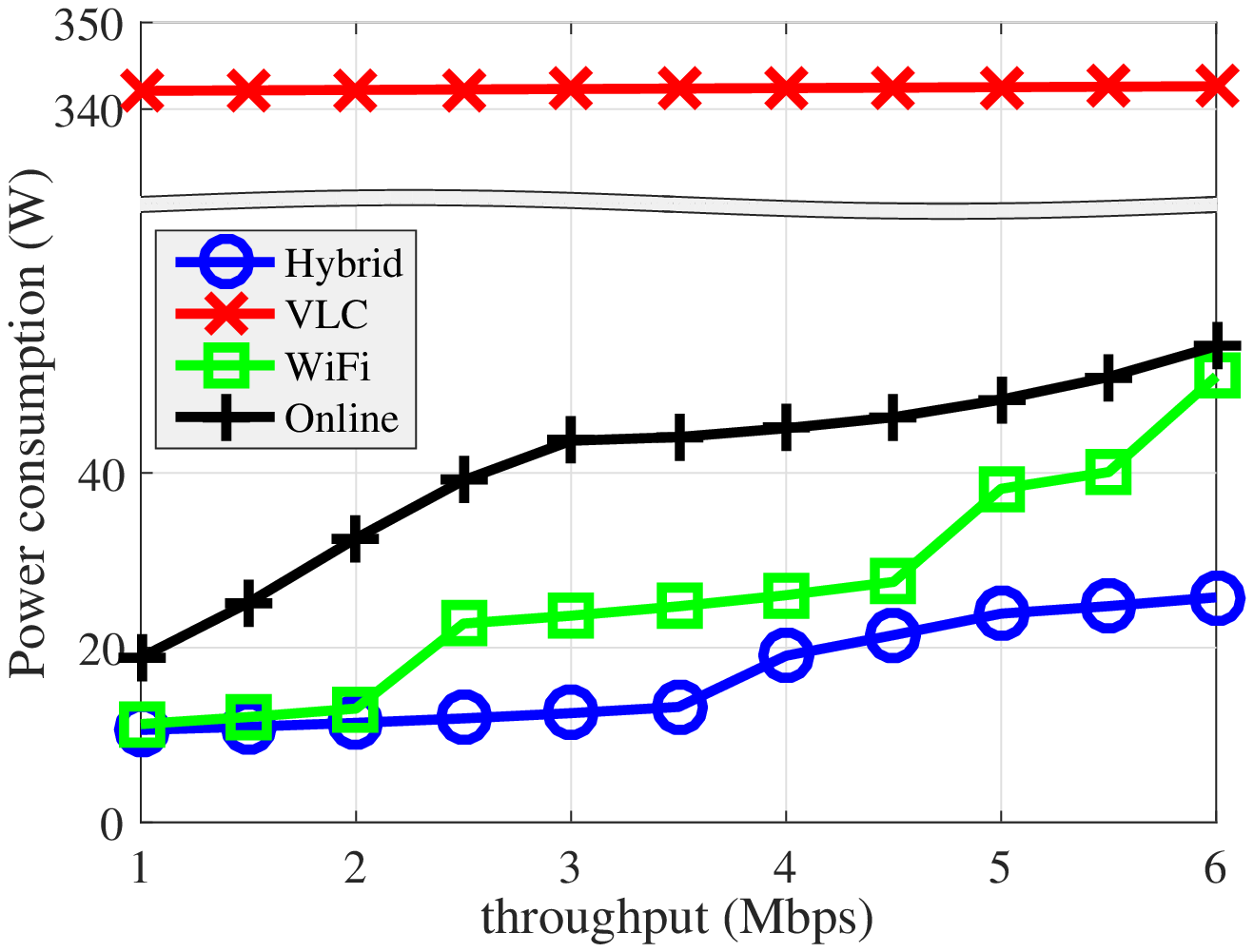}
        \caption{$\eta_{m}^{AC}=0.09$}
        \label{fig_throughput_day009}
    \end{subfigure}
    \caption{Power consumption in terms of throughput requirement per user at day}\label{fig_throughput_day}
\end{figure*}

\begin{figure*}
    \centering
    \begin{subfigure}[b]{0.238\textwidth}
        \includegraphics[width=\textwidth]{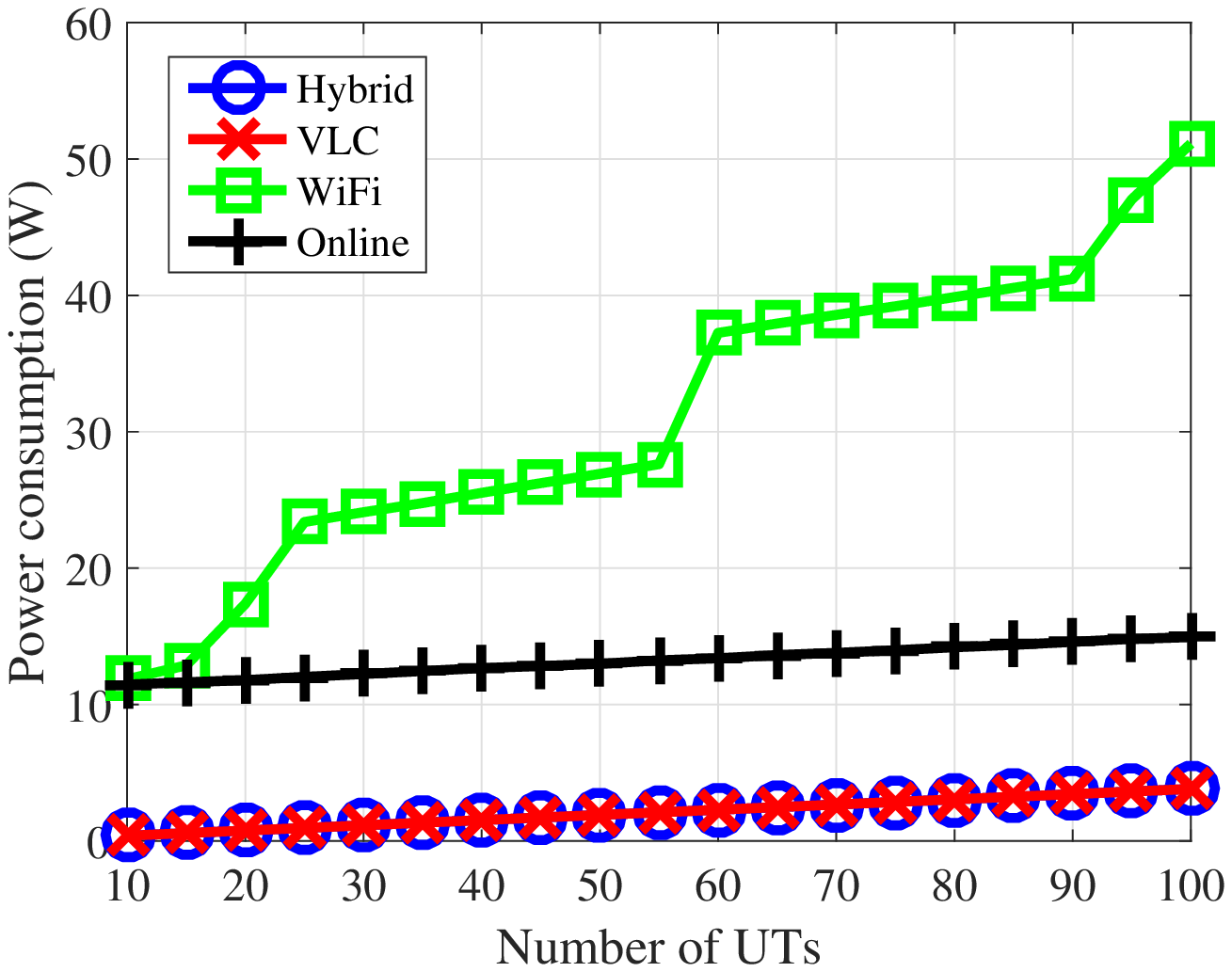}
        \caption{$\eta_{m}^{AC}=0.06$}
        \label{fig_numofUT_night006}
    \end{subfigure}
    ~ 
    \begin{subfigure}[b]{0.238\textwidth}
        \includegraphics[width=\textwidth]{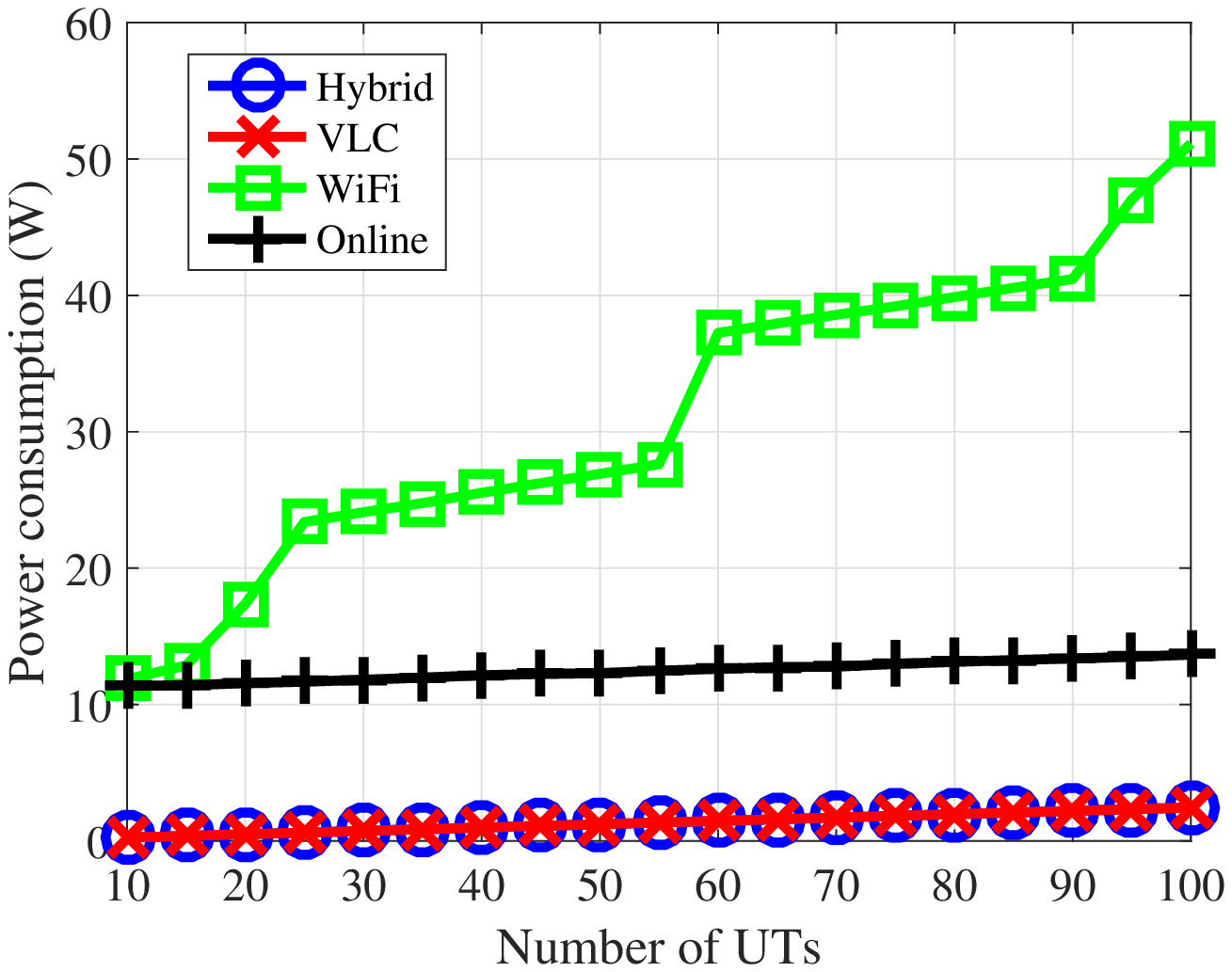}
        \caption{$\eta_{m}^{AC}=0.07$}
        \label{fig_numofUT_night007}
    \end{subfigure}
    ~ 
    \begin{subfigure}[b]{0.238\textwidth}
        \includegraphics[width=\textwidth]{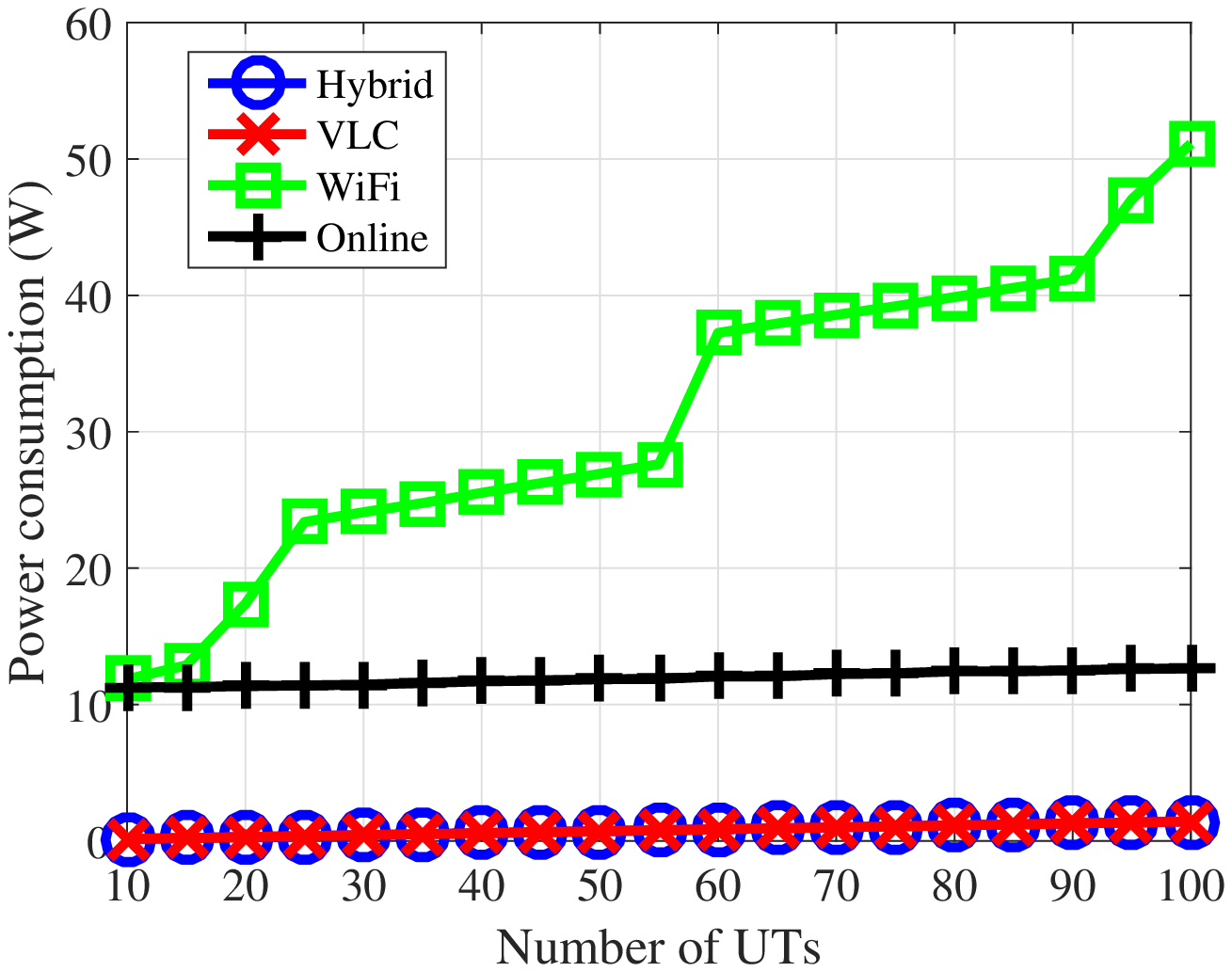}
        \caption{$\eta_{m}^{AC}=0.08$}
        \label{fig_numofUT_night008}
    \end{subfigure}
    \begin{subfigure}[b]{0.238\textwidth}
        \includegraphics[width=\textwidth]{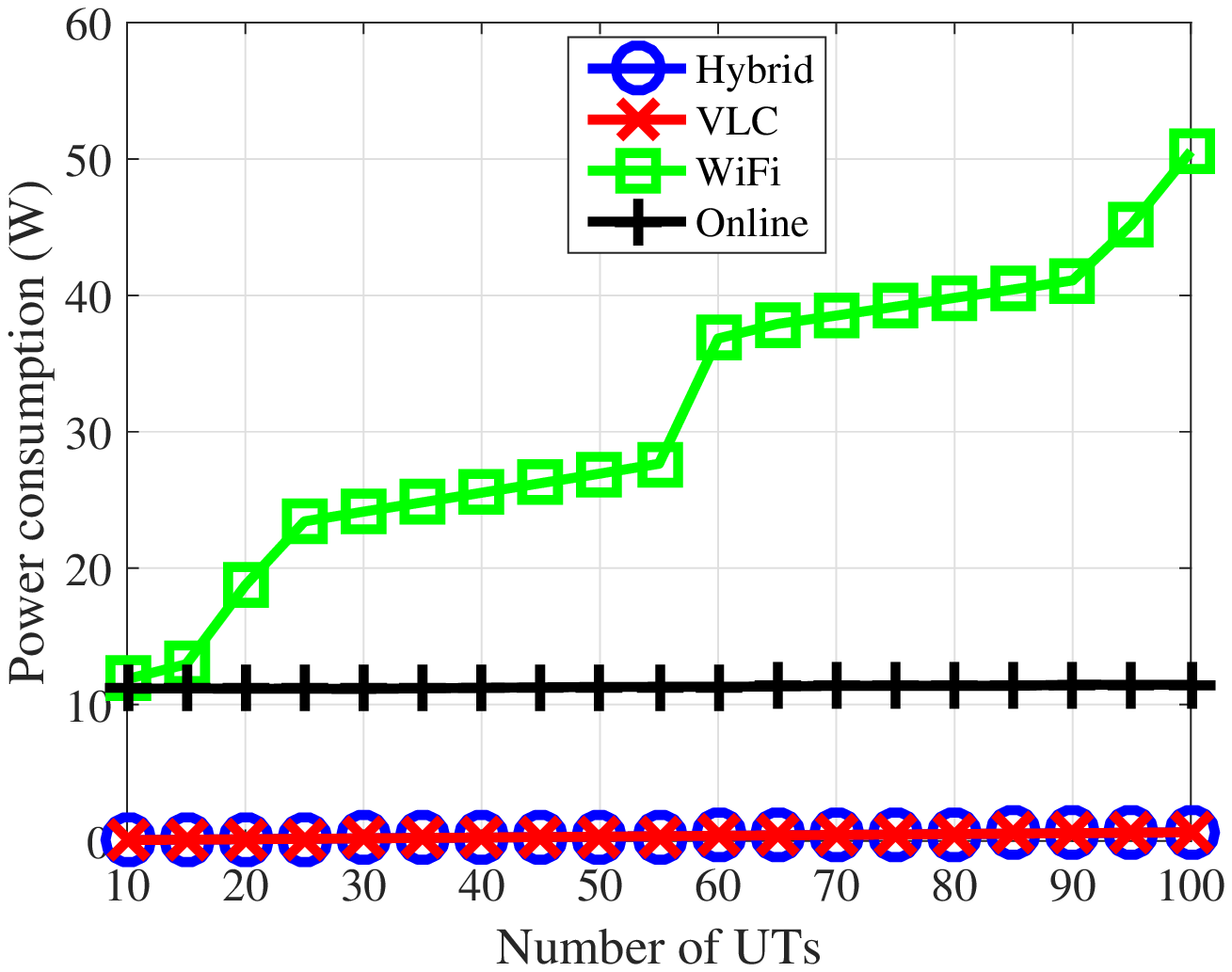}
        \caption{$\eta_{m}^{AC}=0.09$}
        \label{fig_numofUT_night009}
    \end{subfigure}
    \caption{Power consumption in terms of number of UTs at night}\label{fig_numofUT_night}
\end{figure*}

\begin{figure*}
    \centering
    \begin{subfigure}[b]{0.238\textwidth}
        \includegraphics[width=\textwidth]{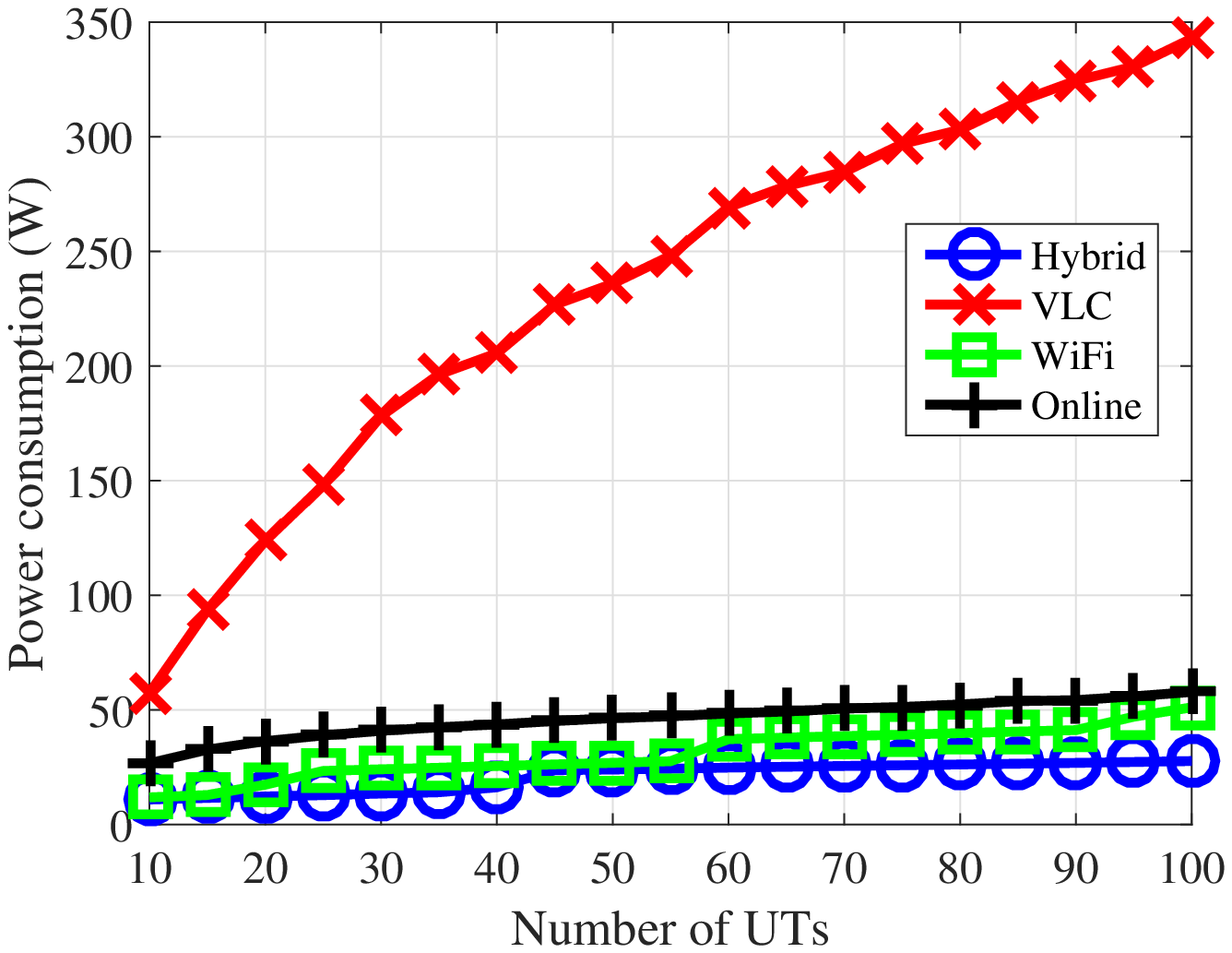}
        \caption{$\eta_{m}^{AC}=0.06$}
        \label{fig_numofUT_day006}
    \end{subfigure}
    ~ 
    \begin{subfigure}[b]{0.238\textwidth}
        \includegraphics[width=\textwidth]{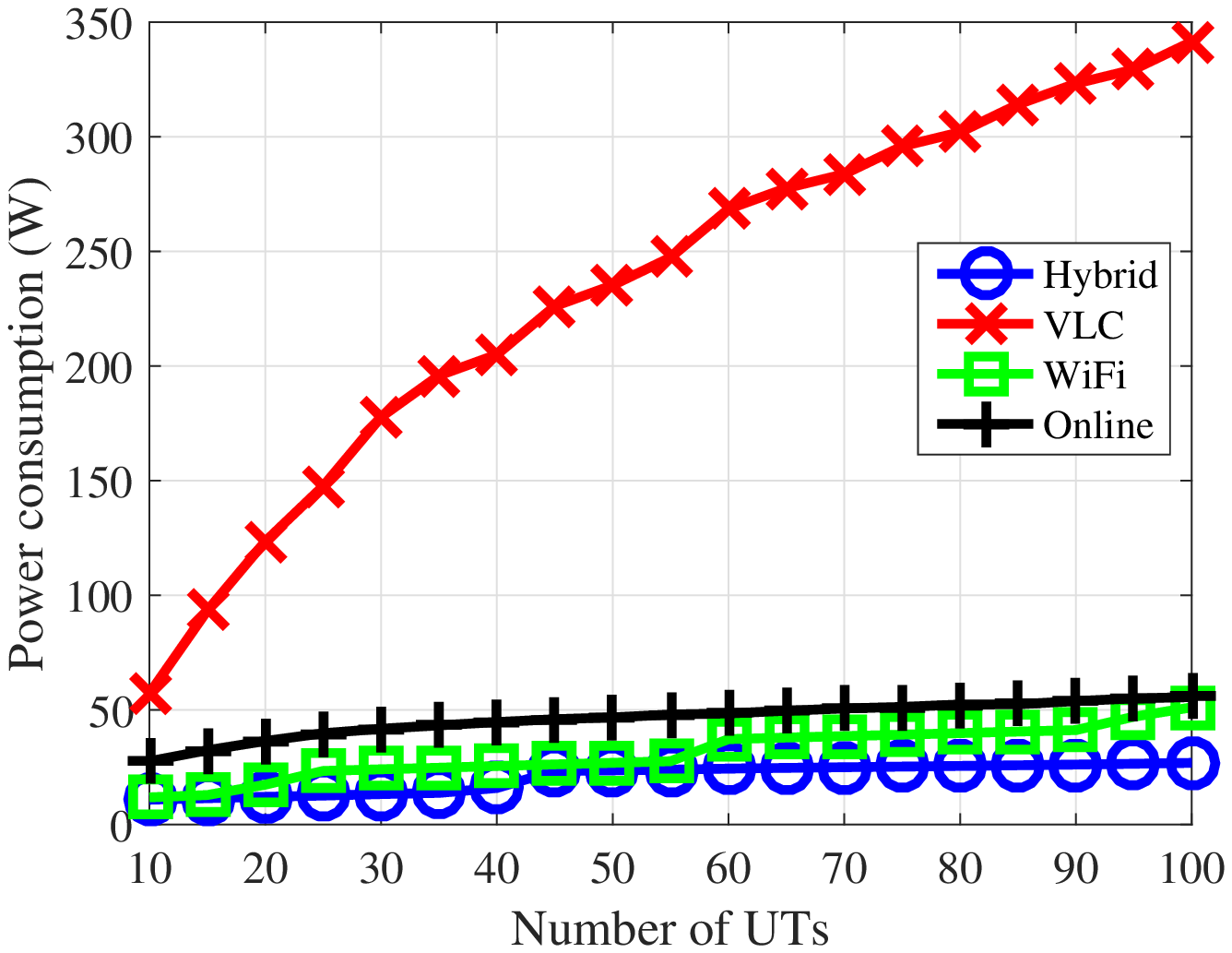}
        \caption{$\eta_{m}^{AC}=0.07$}
        \label{fig_numofUT_day007}
    \end{subfigure}
    ~ 
    \begin{subfigure}[b]{0.238\textwidth}
        \includegraphics[width=\textwidth]{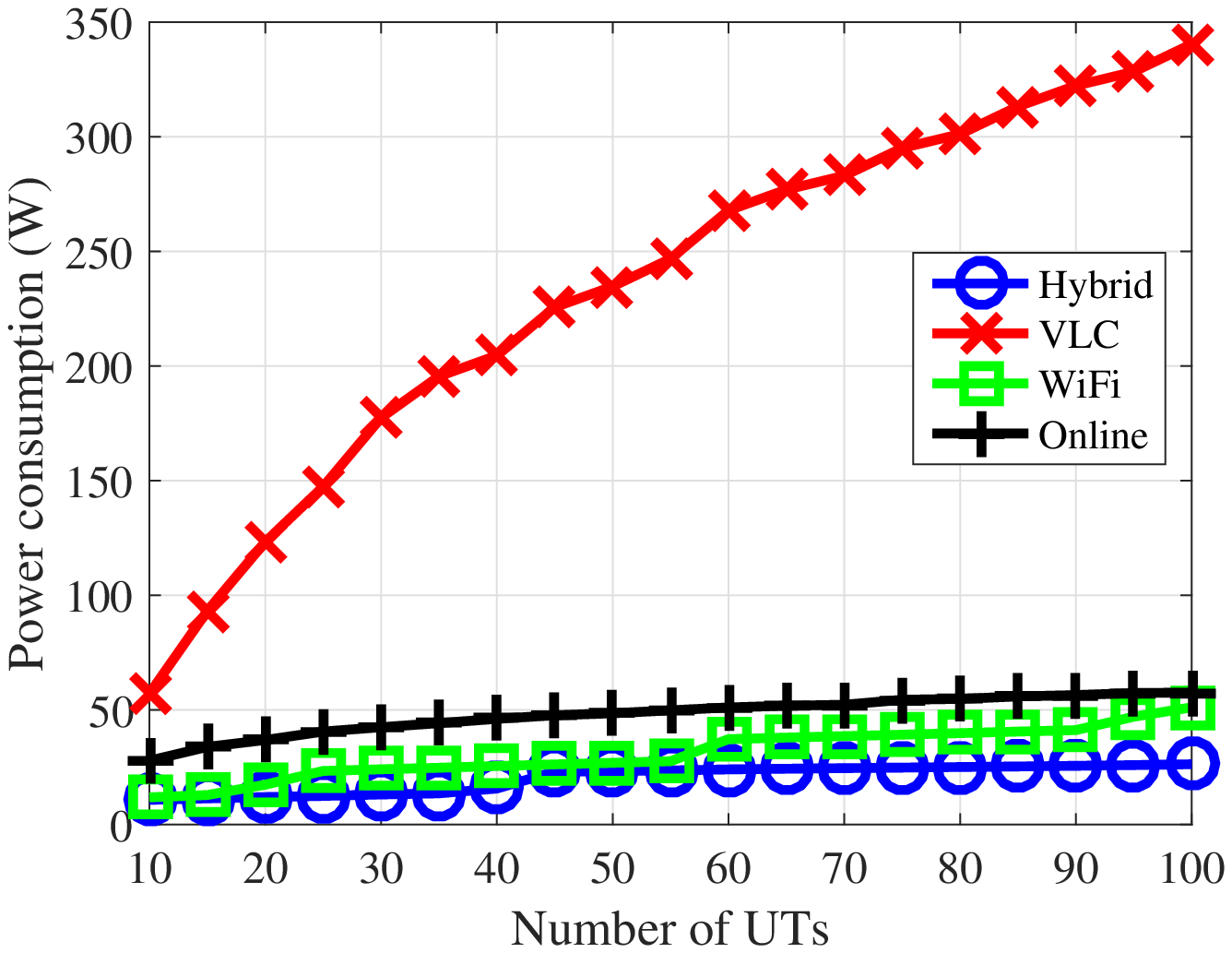}
        \caption{$\eta_{m}^{AC}=0.08$}
        \label{fig_numofUT_day008}
    \end{subfigure}
    \begin{subfigure}[b]{0.238\textwidth}
        \includegraphics[width=\textwidth]{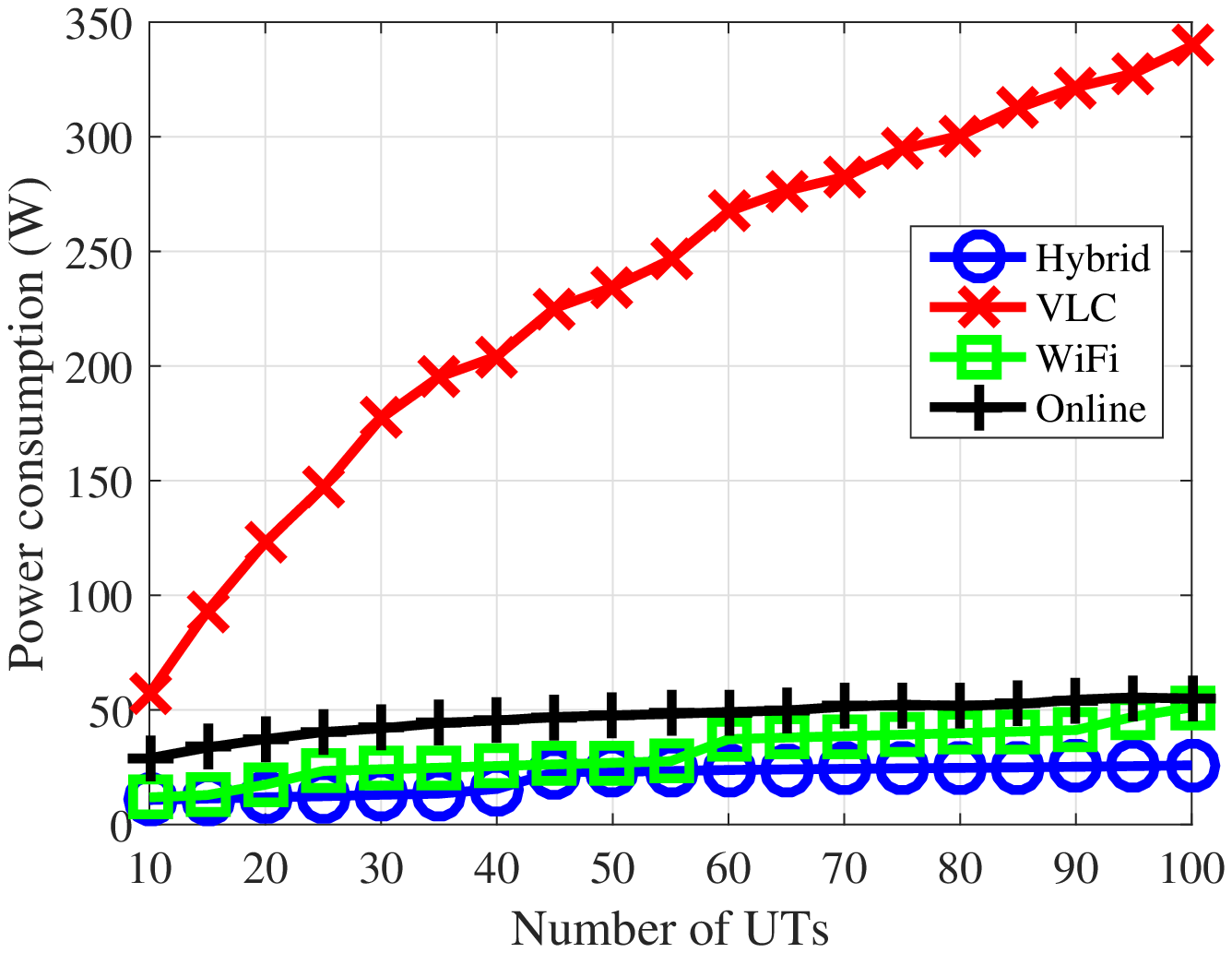}
        \caption{$\eta_{m}^{AC}=0.09$}
        \label{fig_numofUT_day009}
    \end{subfigure}
    \caption{Power consumption in terms of number of UTs at day}\label{fig_numofUT_day}
\end{figure*}

\begin{figure*}
    \centering
    \begin{subfigure}[b]{0.238\textwidth}
        \includegraphics[width=\textwidth]{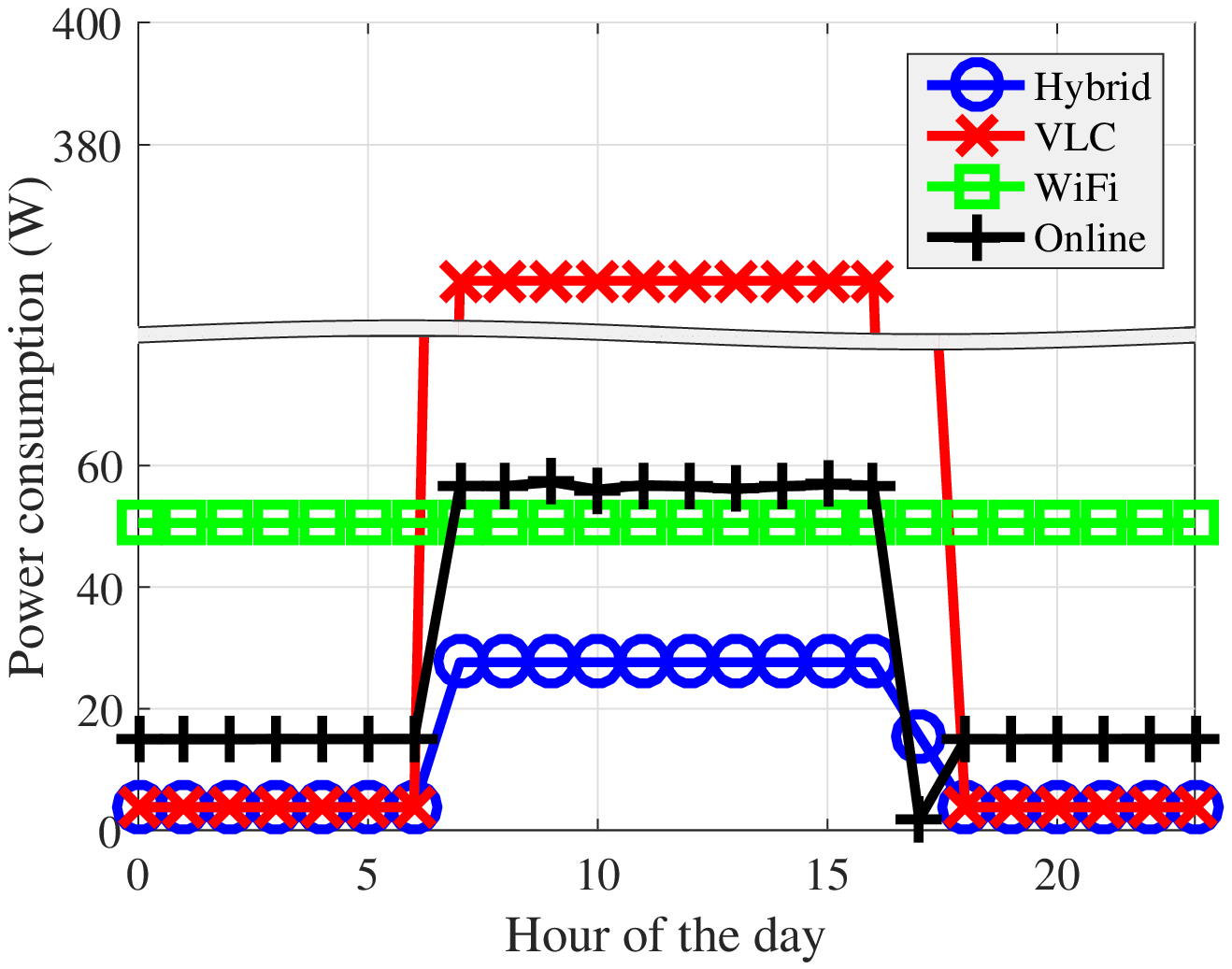}
        \caption{$\eta_{m}^{AC}=0.06$}
        \label{fig_hours006}
    \end{subfigure}
    ~ 
    \begin{subfigure}[b]{0.238\textwidth}
        \includegraphics[width=\textwidth]{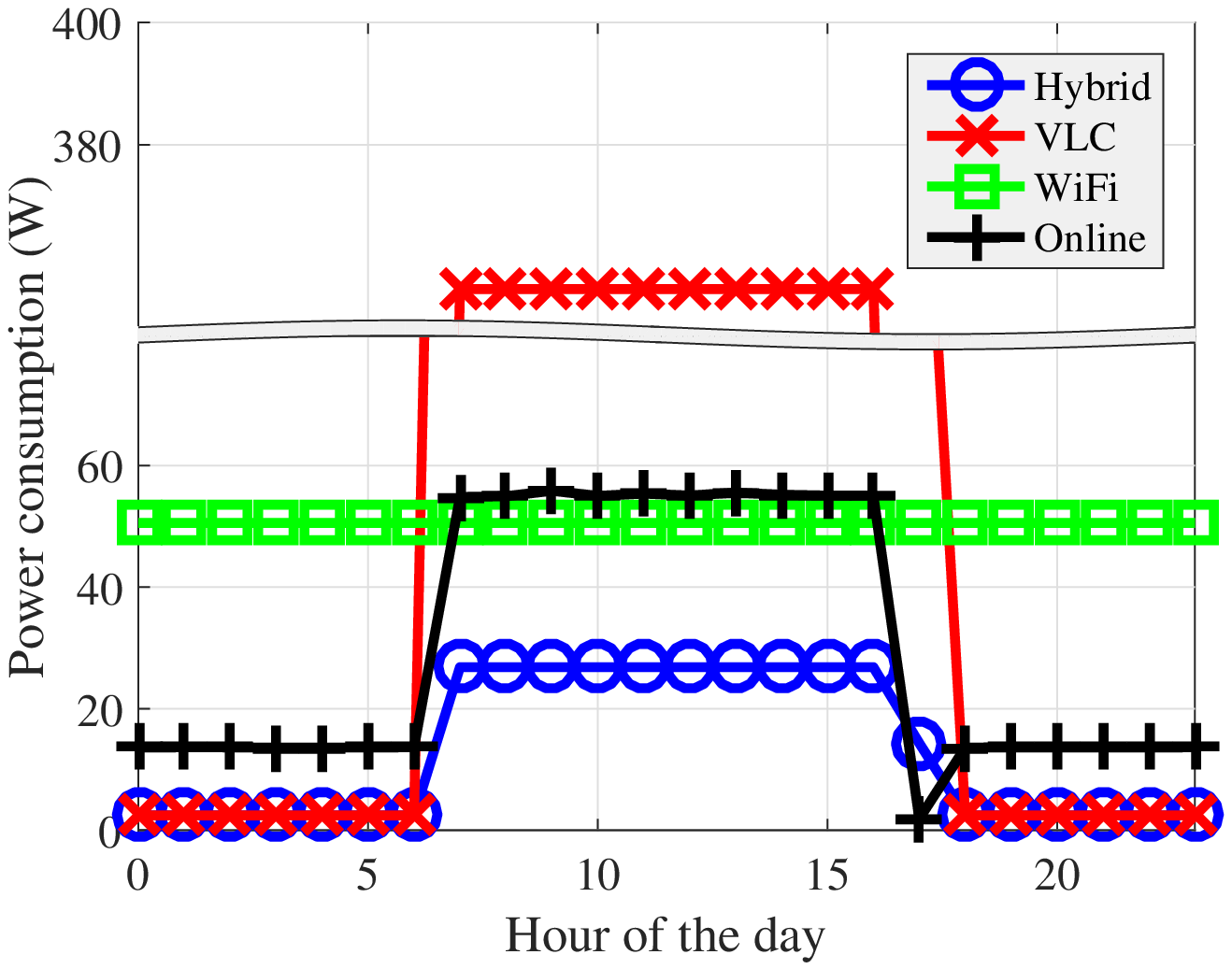}
        \caption{$\eta_{m}^{AC}=0.07$}
        \label{fig_hours007}
    \end{subfigure}
    ~ 
    \begin{subfigure}[b]{0.238\textwidth}
        \includegraphics[width=\textwidth]{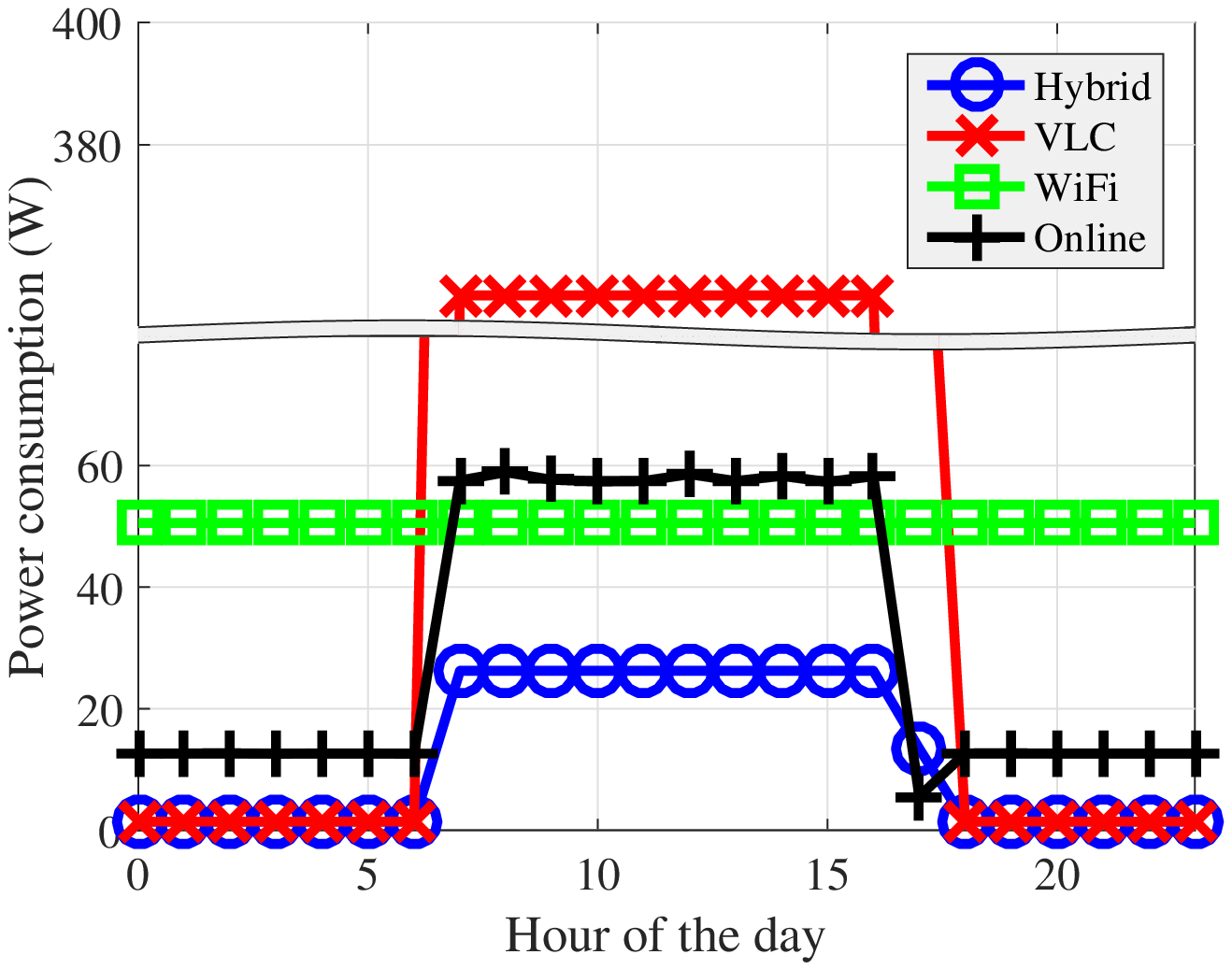}
        \caption{$\eta_{m}^{AC}=0.08$}
        \label{fig_hours008}
    \end{subfigure}
    \begin{subfigure}[b]{0.238\textwidth}
        \includegraphics[width=\textwidth]{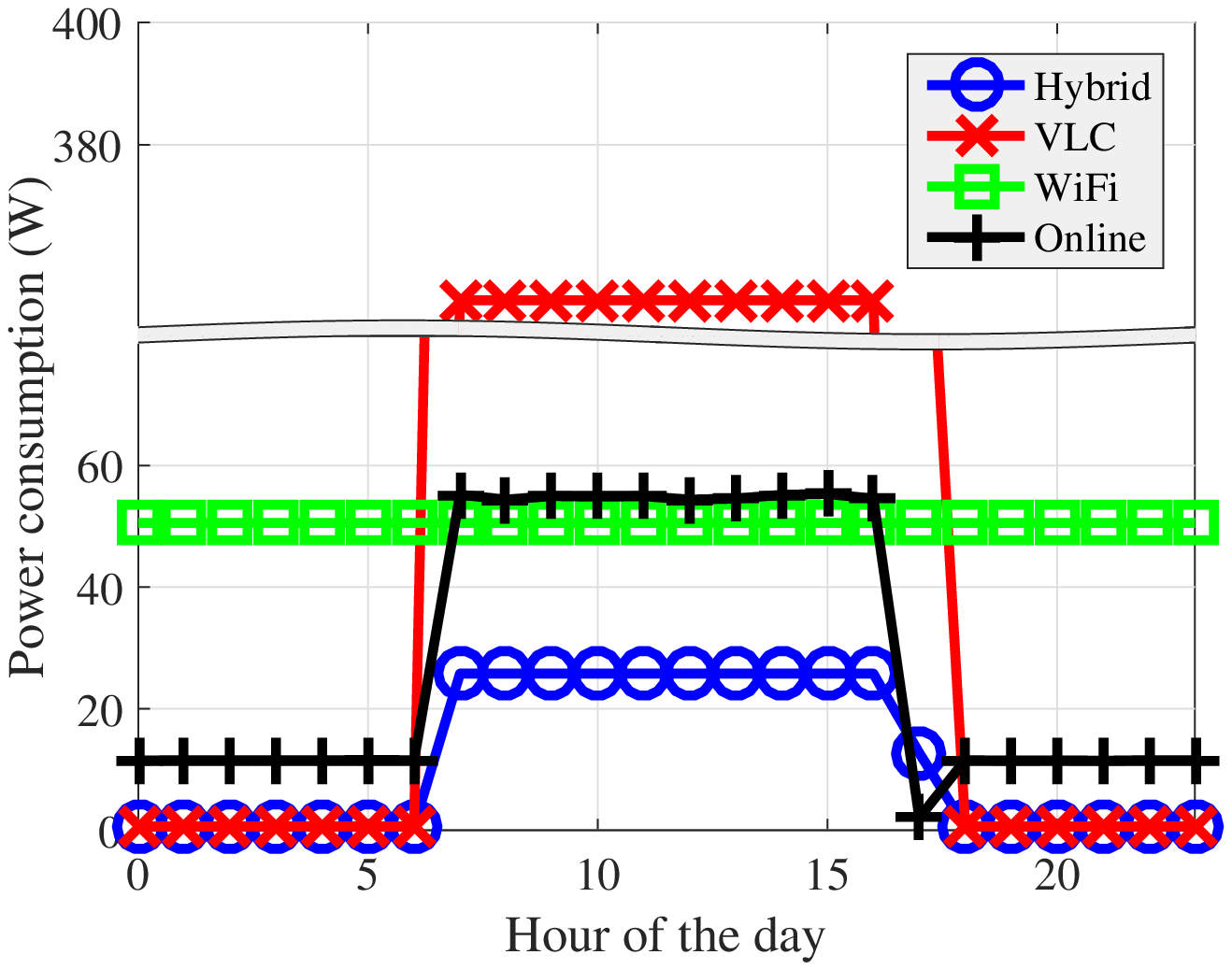}
        \caption{$\eta_{m}^{AC}=0.09$}
        \label{fig_hours009}
    \end{subfigure}
    \caption{Power consumption in terms of hour of the day}\label{fig_hours}
\end{figure*}

\txtredd{In Fig.~\ref{fig_online_throughput_night} to~\ref{fig_online_hours}, we present simulation results of two
modified versions of our proposed online algorithm. The first version, named ``Online WiFi'', is our online
algorithm when only the WiFi APs are considered. The second version, named ``Online VLC'', is our online algorithm
when only the VLC APs are considered. In Fig.~\ref{fig_online_throughput_night} and
Fig.~\ref{fig_online_throughput_day}, 100 UTs are distributed uniformly at random. At night
(Fig.~\ref{fig_online_throughput_night}), the power consumption of our proposed algorithm and the Online VLC
scheme are lower than that of the Online WiFi scheme, as all the VLC APs are turned on for illumination. The
sudden increase in power consumption for the Online WiFi scheme is due to turning on additional WiFi APs. During
the day (Fig.~\ref{fig_online_throughput_day}), the VLC APs may be forced to be turned on when the number of UTs
is large, resulting in unnecessary illumination and higher power consumption for the Online VLC scheme compared to
our proposed algorithm and the Online WiFi scheme. We also note that the performance of the proposed algorithm is
better than that of the Online WiFi scheme, since some of the UTs are located in the square region (\emph{i.e.,}
where there are no windows) and the VLC APs are already turned on for illumination. Our proposed algorithm takes
advantage of those turned on VLC APs to provide communication at a lower power consumption.

In Fig.~\ref{fig_online_numofUT_night} and Fig.~\ref{fig_online_numofUT_day}, the throughput requirement per UT is
set to 6 Mbps. During the night (Fig.~\ref{fig_online_numofUT_night}), where all the VLC APs are turned on for
illumination, our proposed algorithm and the Online VLC scheme perform better than the Online WiFi scheme. We also
note that the power consumption of the Online VLC scheme is similar to the power consumption of the optimal VLC
scheme shown in Fig. 9. This shows that our proposed algorithm can be used to determine the assignment of UTs to
APs at a much lower complexity while achieveing similar power consumptions.

In Fig.~\ref{fig_online_hours}, the number of UTs is set to 100 and the throughput requirement per UT is set to 6
Mbps. From the figure, we note that, from 6am to 7pm, the Online WiFi scheme consumes less power than the Online
VLC scheme due to the large power consumption of turning on VLC APs. We also see that our proposed hybrid
algorithm performs better than both the Online WiFi and the Online VLC schemes during the day. This is due to the
fact that our proposed algorithm can take advantage of the already turned on VLC APs for illumination in the
square region and the WiFi APs for the external rooms. From 7pm to 6am, since all VLC APs are turned on for
illumination, the power consumption of the Online VLC scheme and our proposed algorithm is less than the power
consumption of the Online WiFi.}

\begin{figure*}
    \centering
    \begin{subfigure}[b]{0.20\textwidth}
        \includegraphics[width=\textwidth]{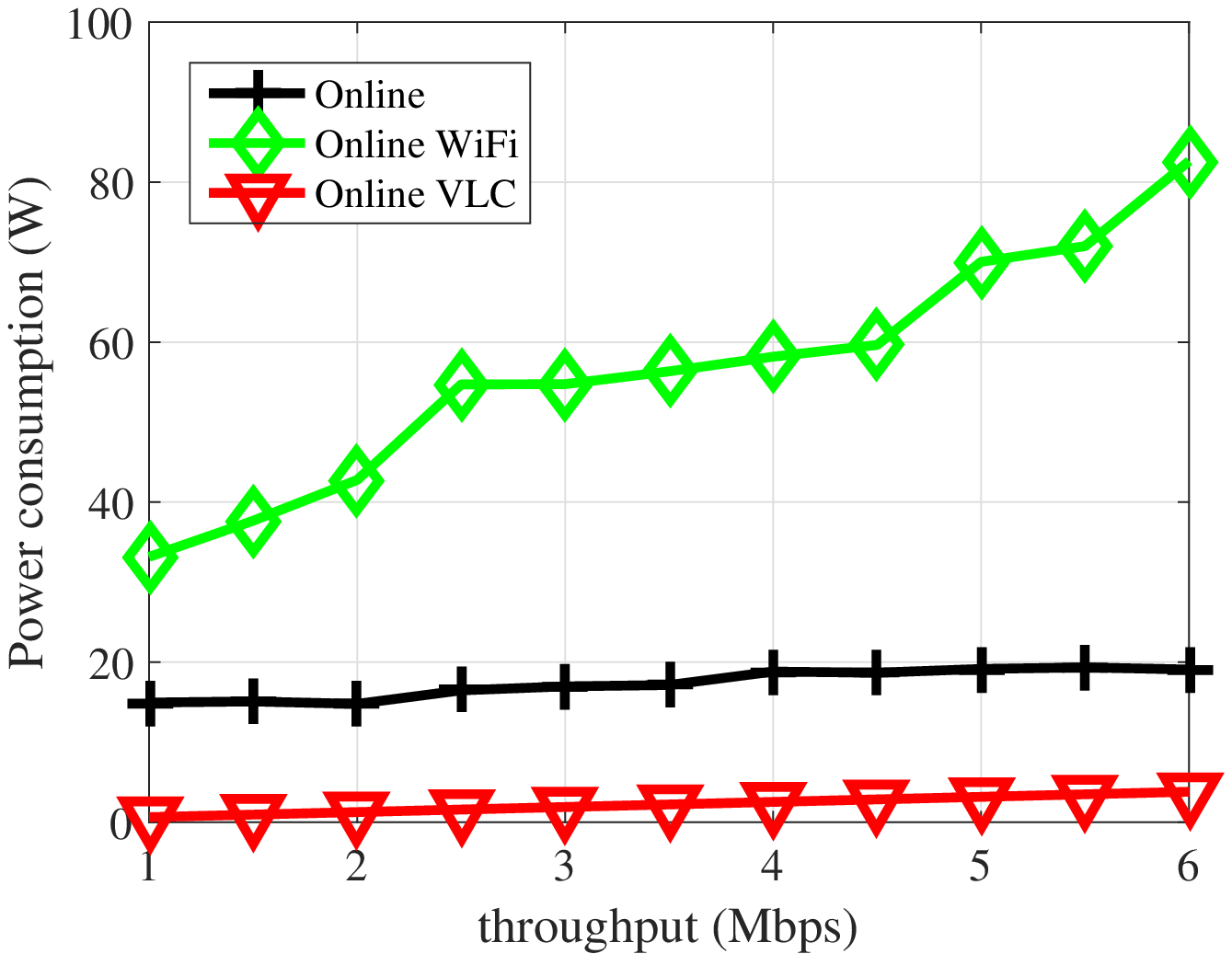}
        \caption{$\eta_{m}^{AC}=0.06$}
        \label{fig_online_throughput_night006}
    \end{subfigure}
    ~ 
    \begin{subfigure}[b]{0.20\textwidth}
        \includegraphics[width=\textwidth]{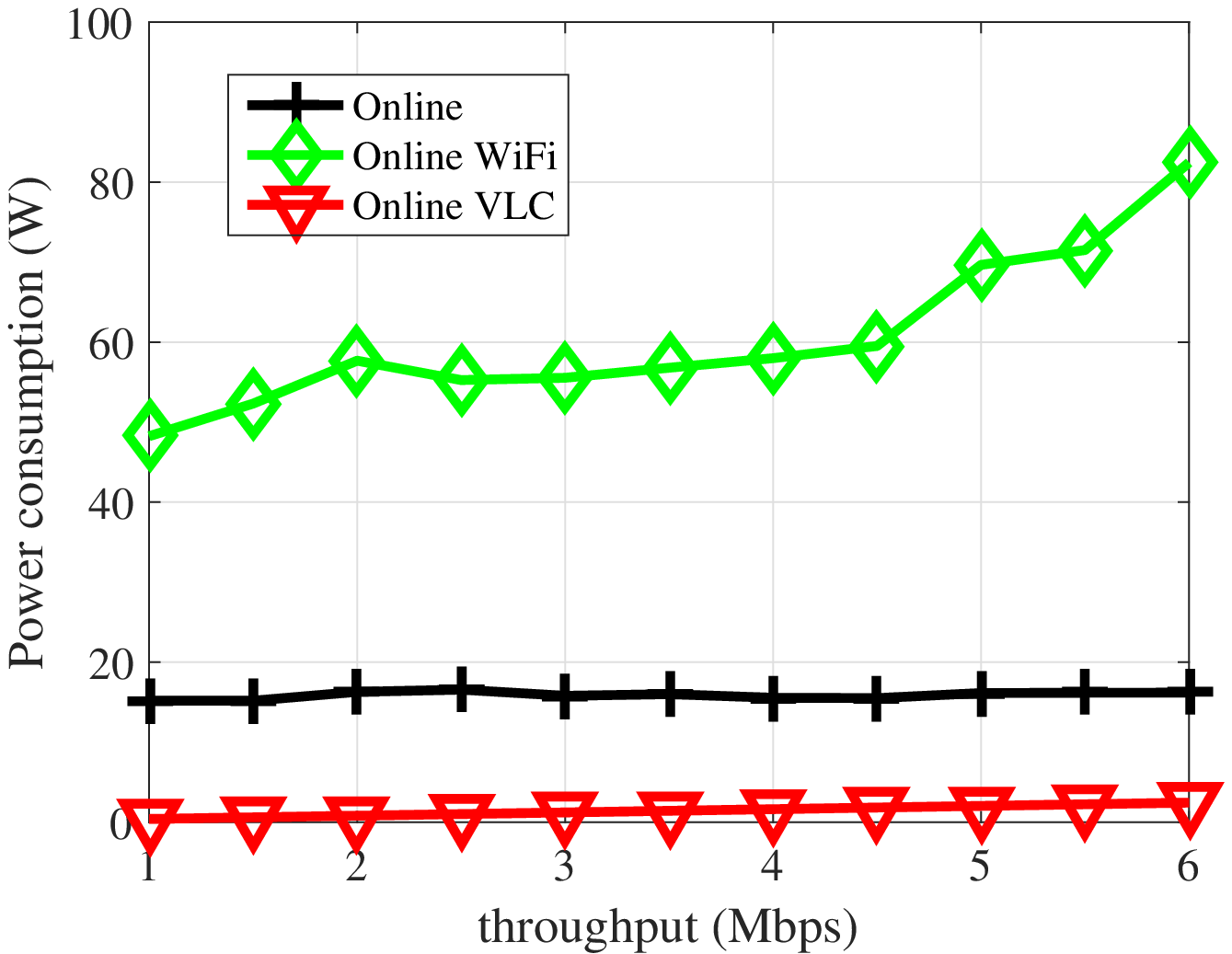}
        \caption{$\eta_{m}^{AC}=0.07$}
        \label{fig_online_throughput_night007}
    \end{subfigure}
    ~ 
    \begin{subfigure}[b]{0.20\textwidth}
        \includegraphics[width=\textwidth]{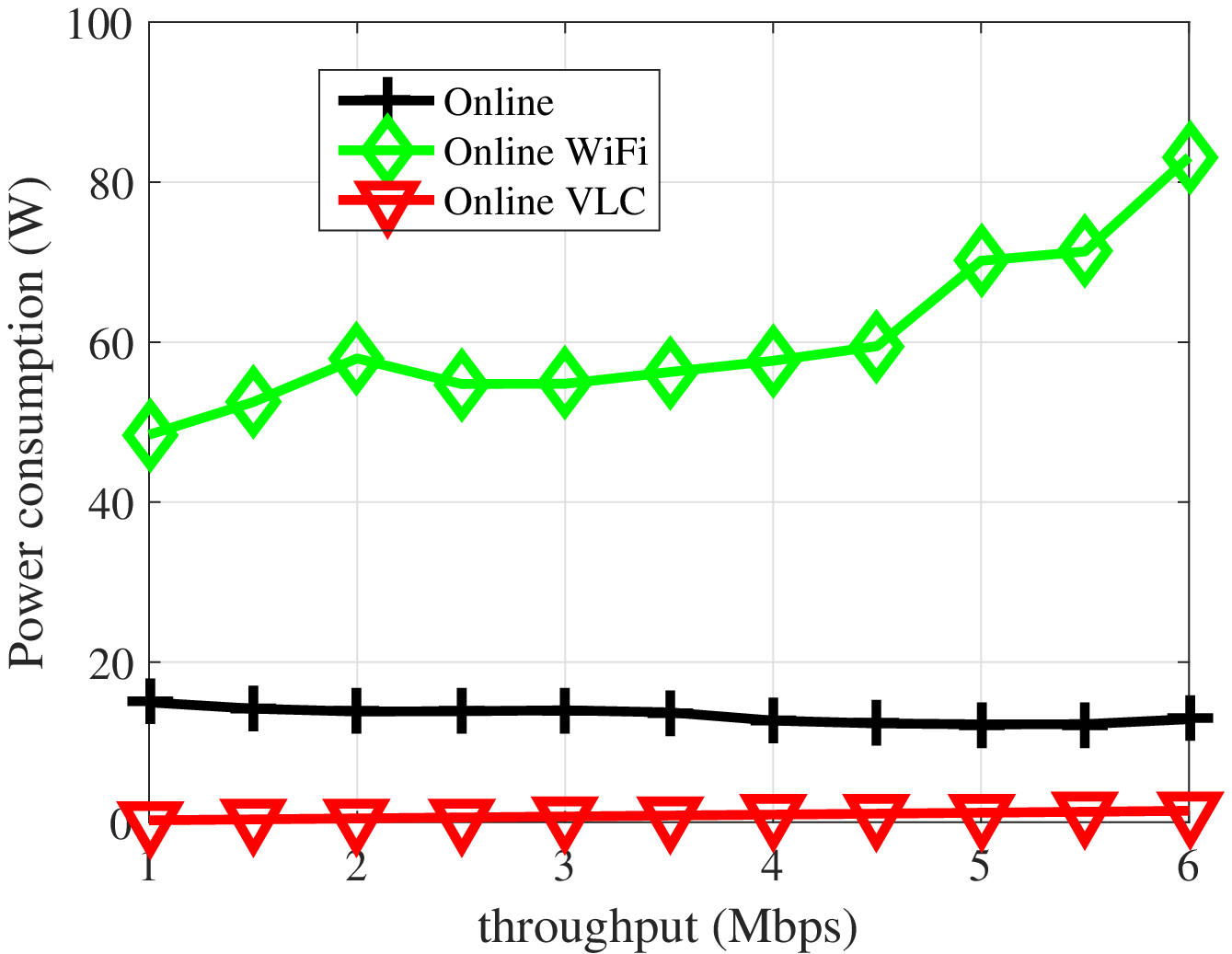}
        \caption{$\eta_{m}^{AC}=0.08$}
        \label{fig_online_throughput_night008}
    \end{subfigure}
    \begin{subfigure}[b]{0.20\textwidth}
        \includegraphics[width=\textwidth]{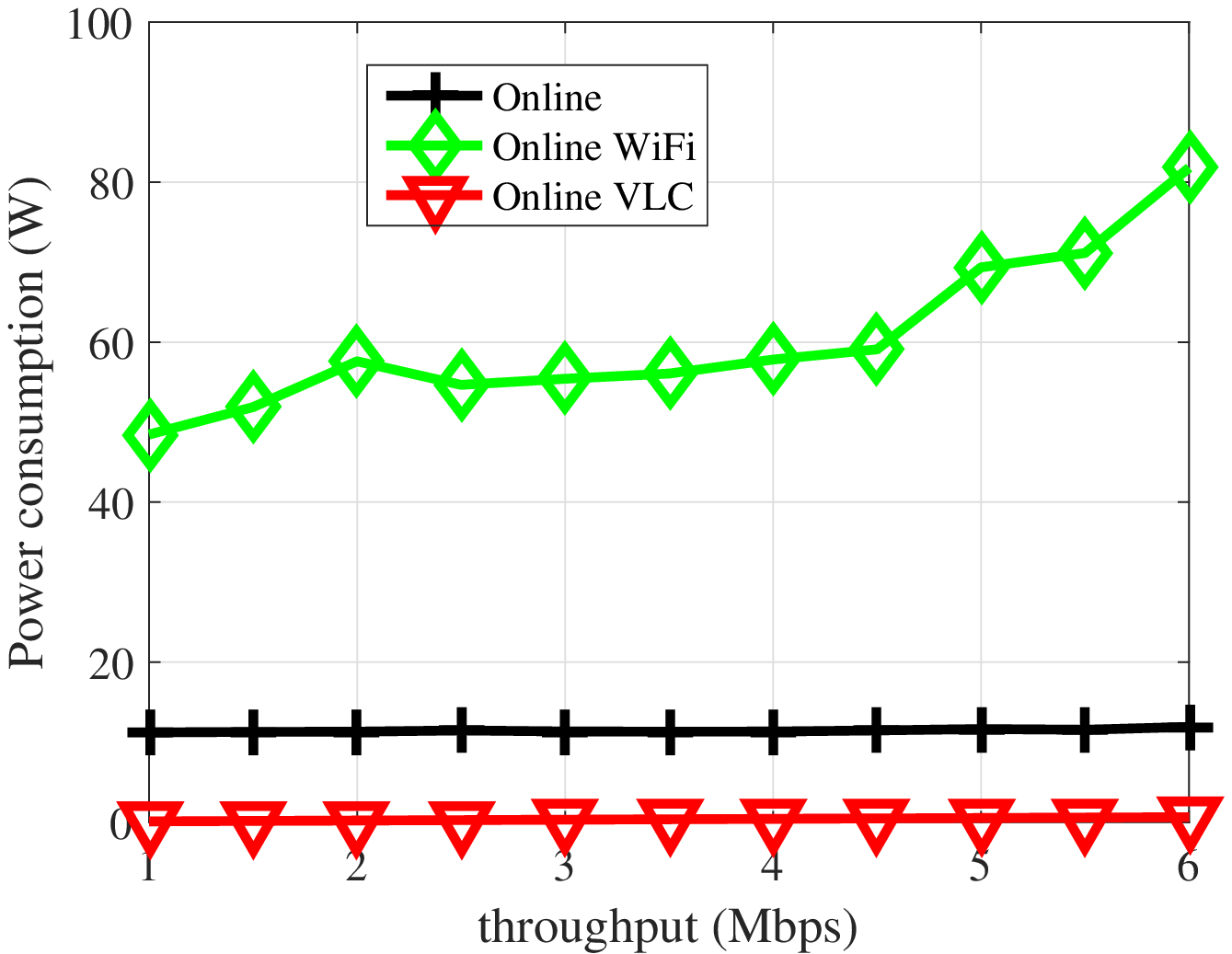}
        \caption{$\eta_{m}^{AC}=0.09$}
        \label{fig_online_throughput_night009}
    \end{subfigure}
    \caption{Power consumption in terms of throughput requirement per user at night}\label{fig_online_throughput_night}

    \begin{subfigure}[b]{0.20\textwidth}
        \includegraphics[width=\textwidth]{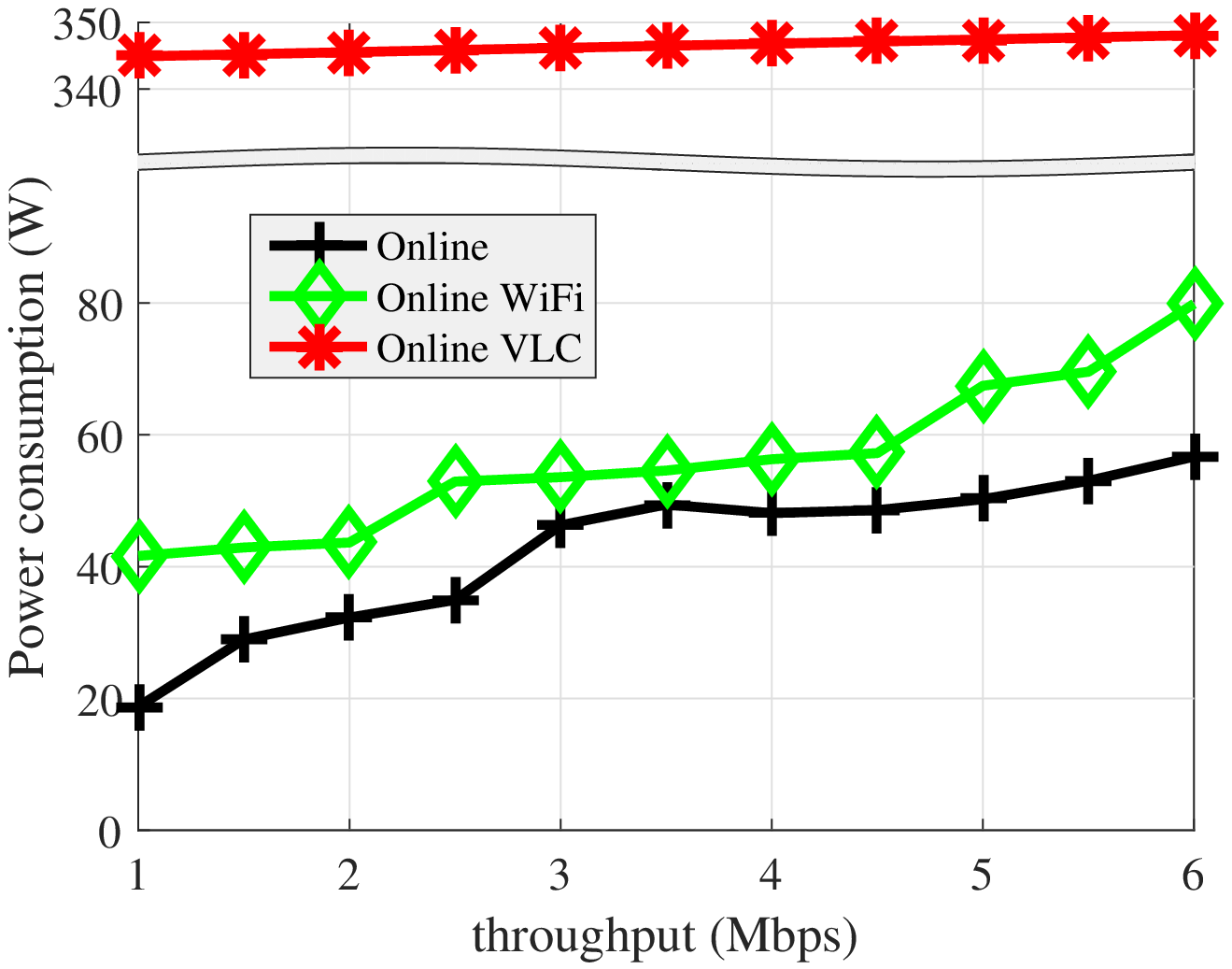}
        \caption{$\eta_{m}^{AC}=0.06$}
        \label{fig_online_throughput_day006}
    \end{subfigure}
    ~ 
    \begin{subfigure}[b]{0.20\textwidth}
        \includegraphics[width=\textwidth]{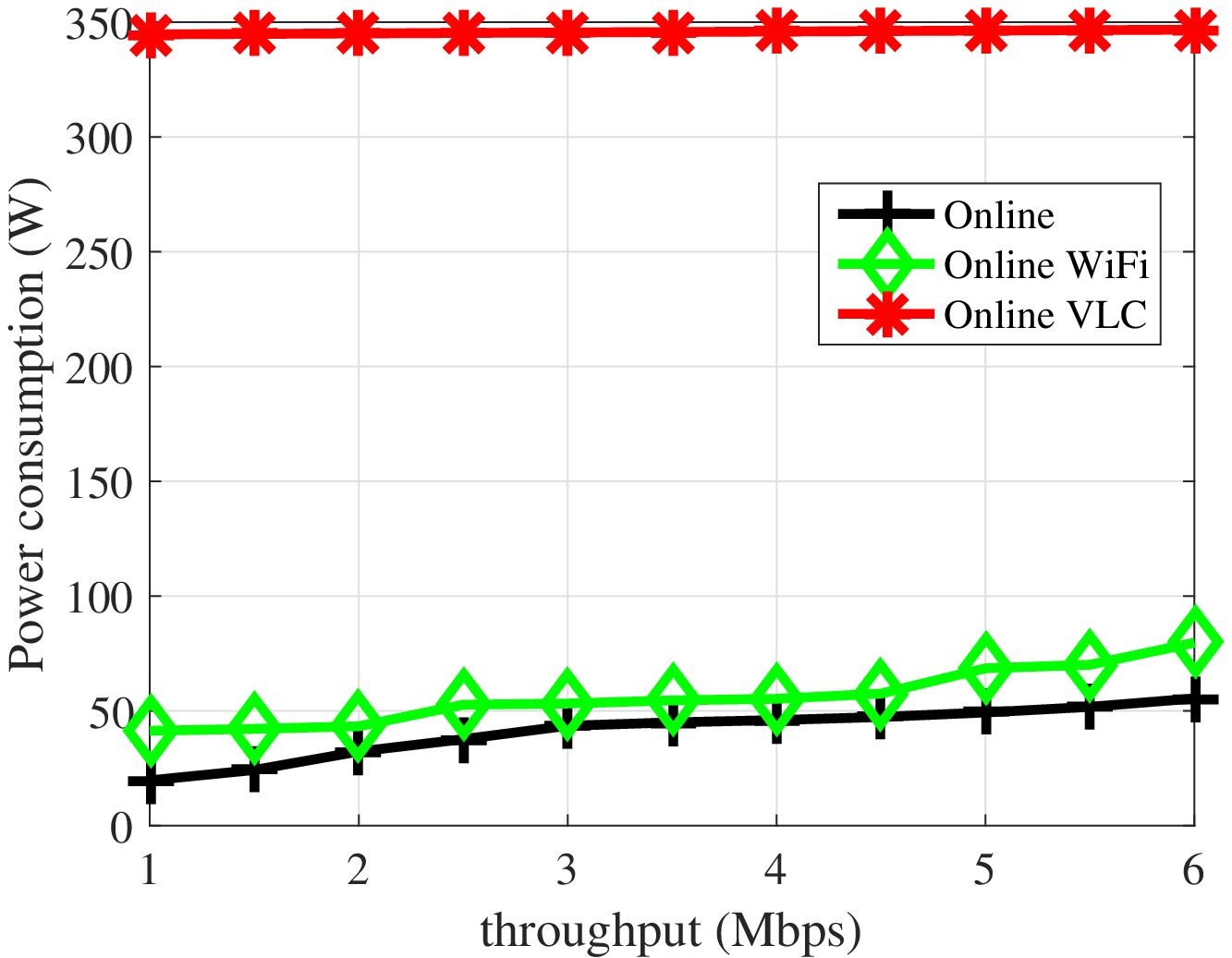}
        \caption{$\eta_{m}^{AC}=0.07$}
        \label{fig_online_throughput_day007}
    \end{subfigure}
    ~ 
    \begin{subfigure}[b]{0.20\textwidth}
        \includegraphics[width=\textwidth]{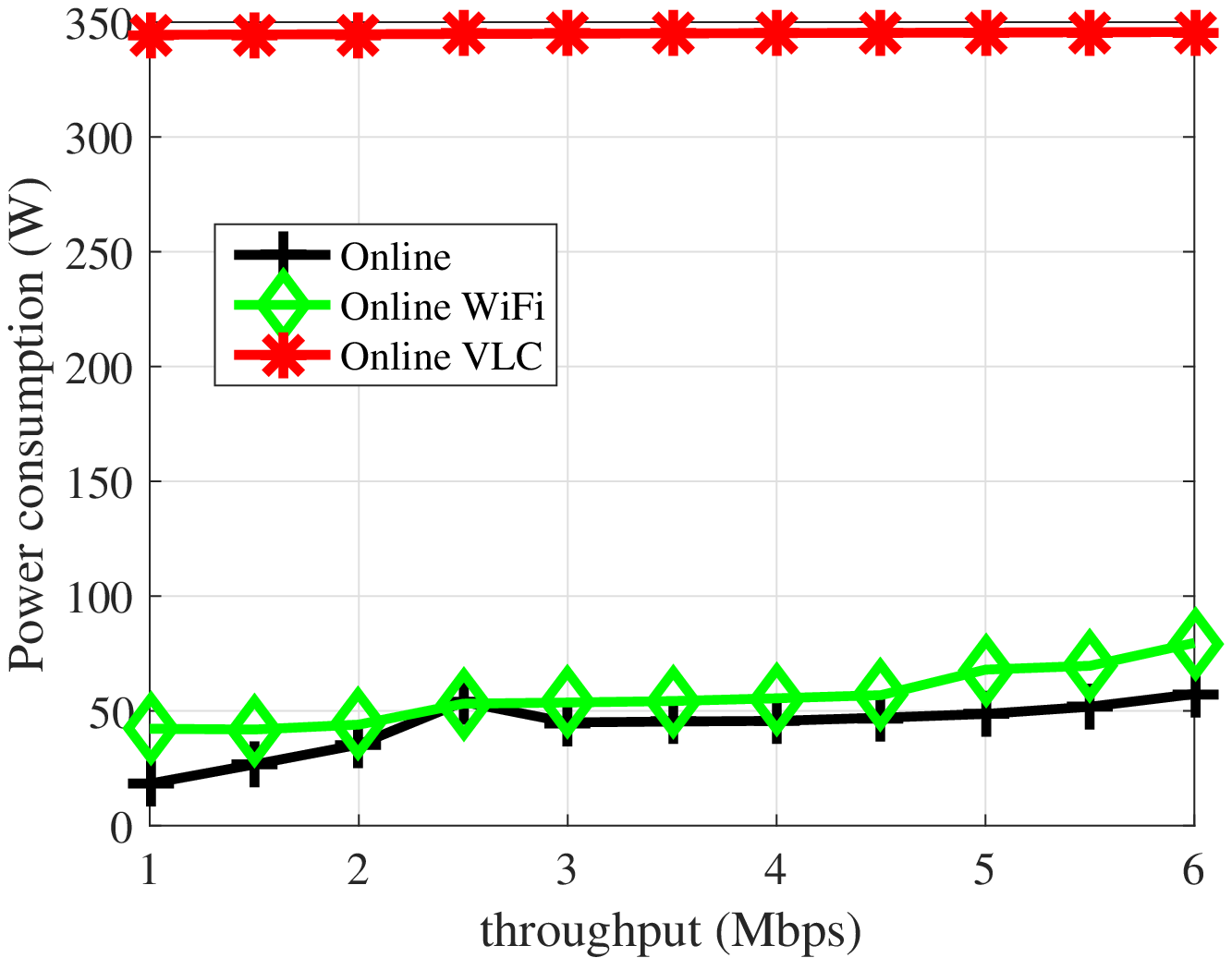}
        \caption{$\eta_{m}^{AC}=0.08$}
        \label{fig_online_throughput_day008}
    \end{subfigure}
    \begin{subfigure}[b]{0.20\textwidth}
        \includegraphics[width=\textwidth]{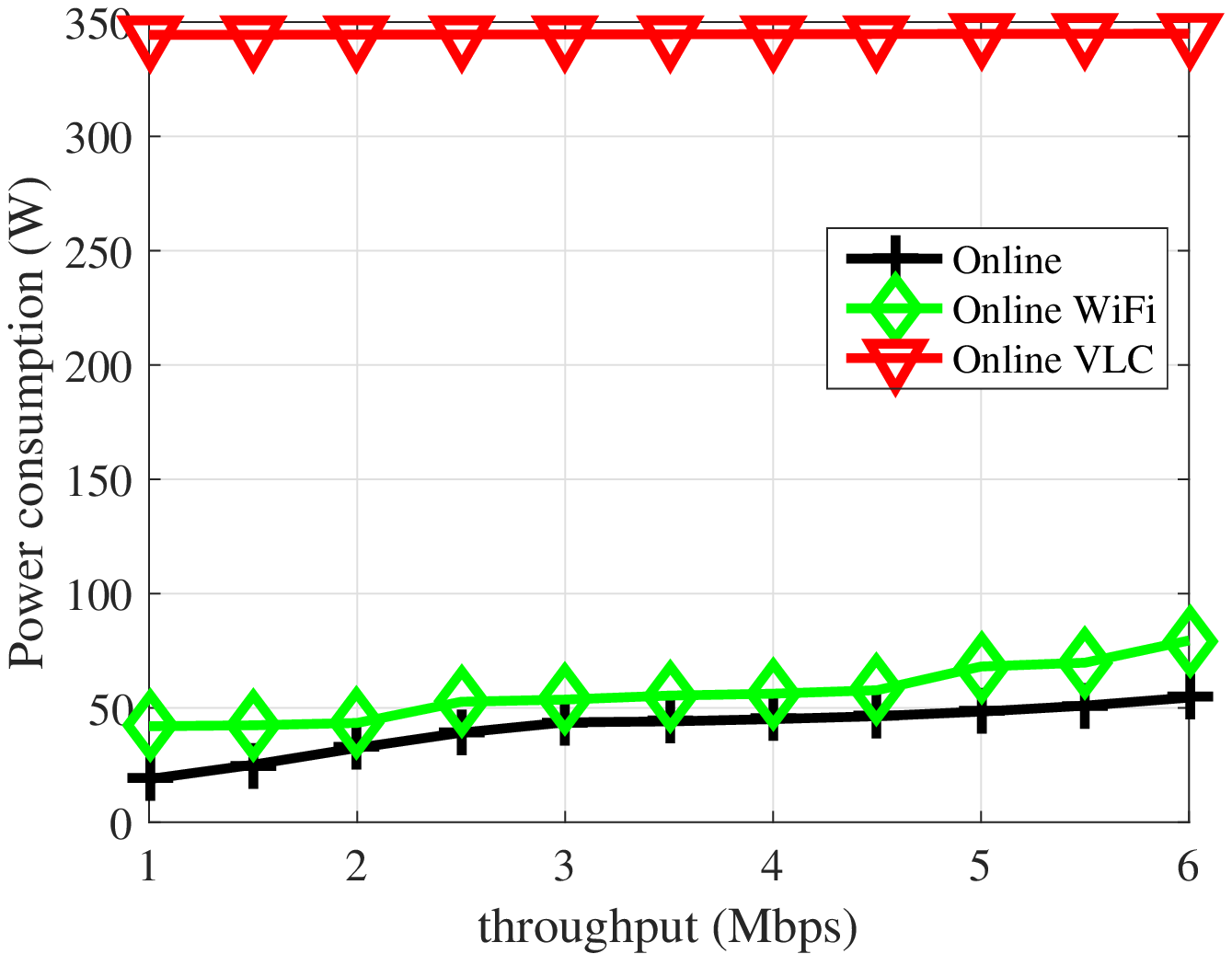}
        \caption{$\eta_{m}^{AC}=0.09$}
        \label{fig_online_throughput_day009}
    \end{subfigure}
    \caption{Power consumption in terms of throughput requirement per user at day}\label{fig_online_throughput_day}

    \begin{subfigure}[b]{0.20\textwidth}
        \includegraphics[width=\textwidth]{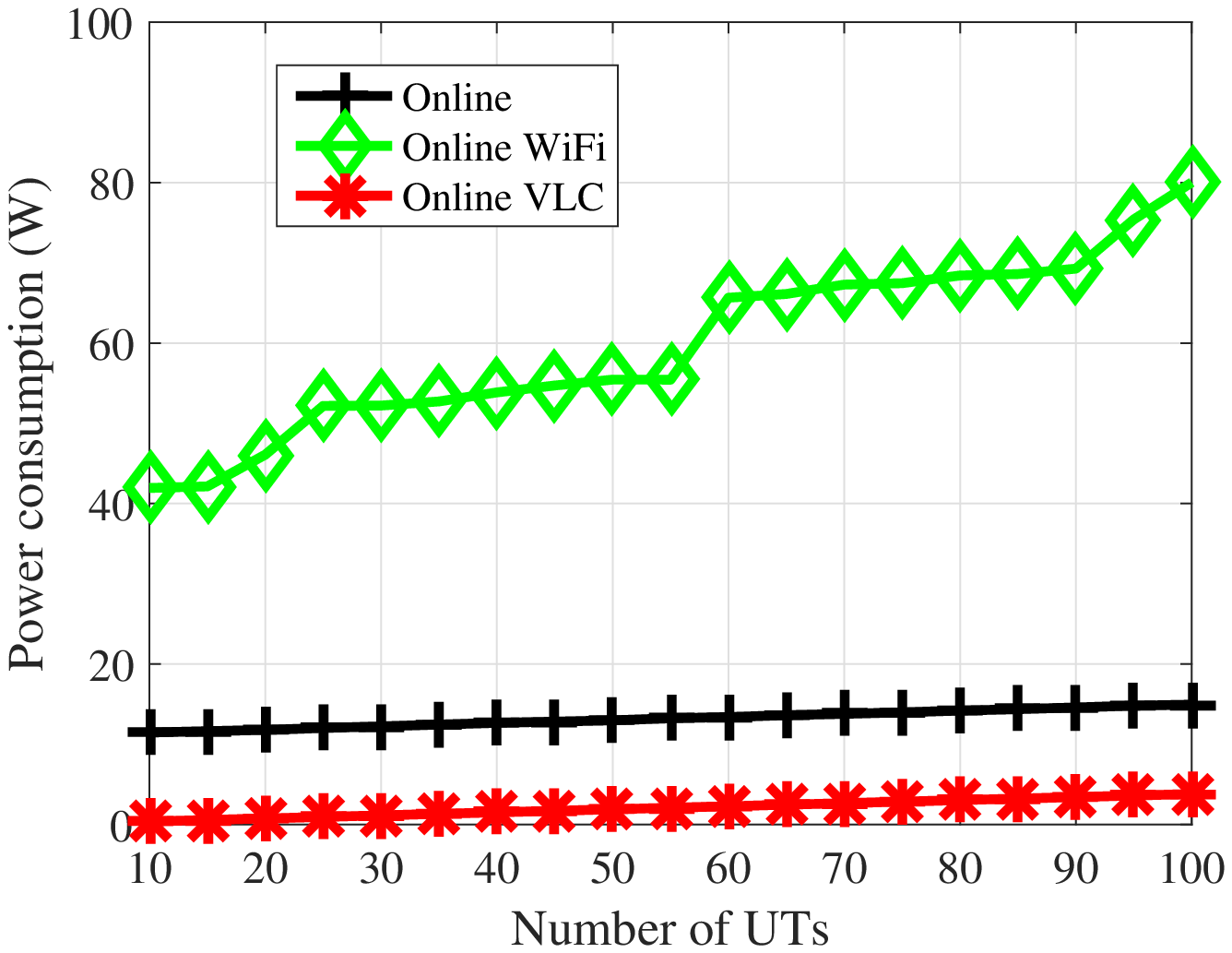}
        \caption{$\eta_{m}^{AC}=0.06$}
        \label{fig_online_numofUT_night006}
    \end{subfigure}
    ~ 
    \begin{subfigure}[b]{0.20\textwidth}
        \includegraphics[width=\textwidth]{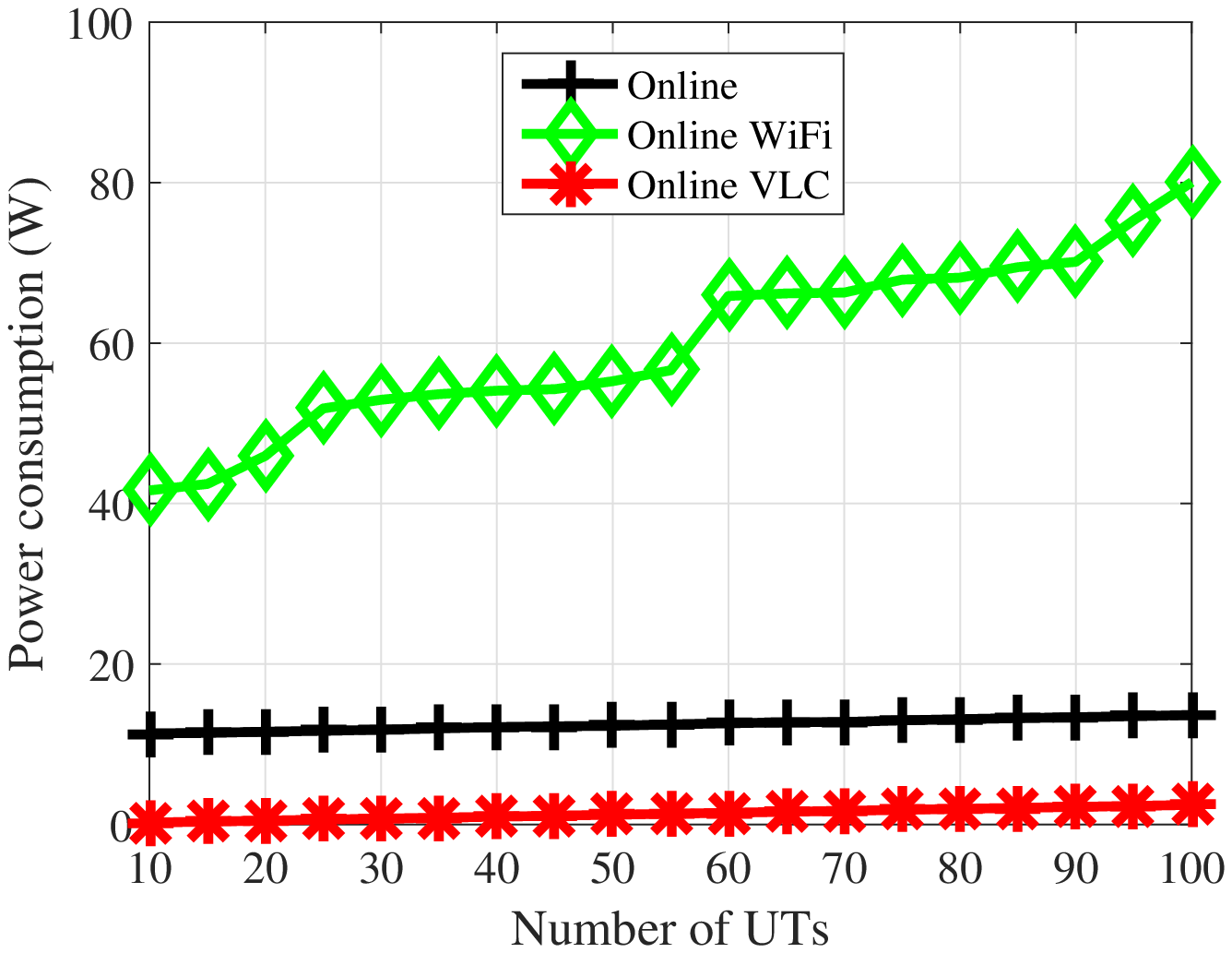}
        \caption{$\eta_{m}^{AC}=0.07$}
        \label{fig_online_numofUT_night007}
    \end{subfigure}
    ~ 
    \begin{subfigure}[b]{0.20\textwidth}
        \includegraphics[width=\textwidth]{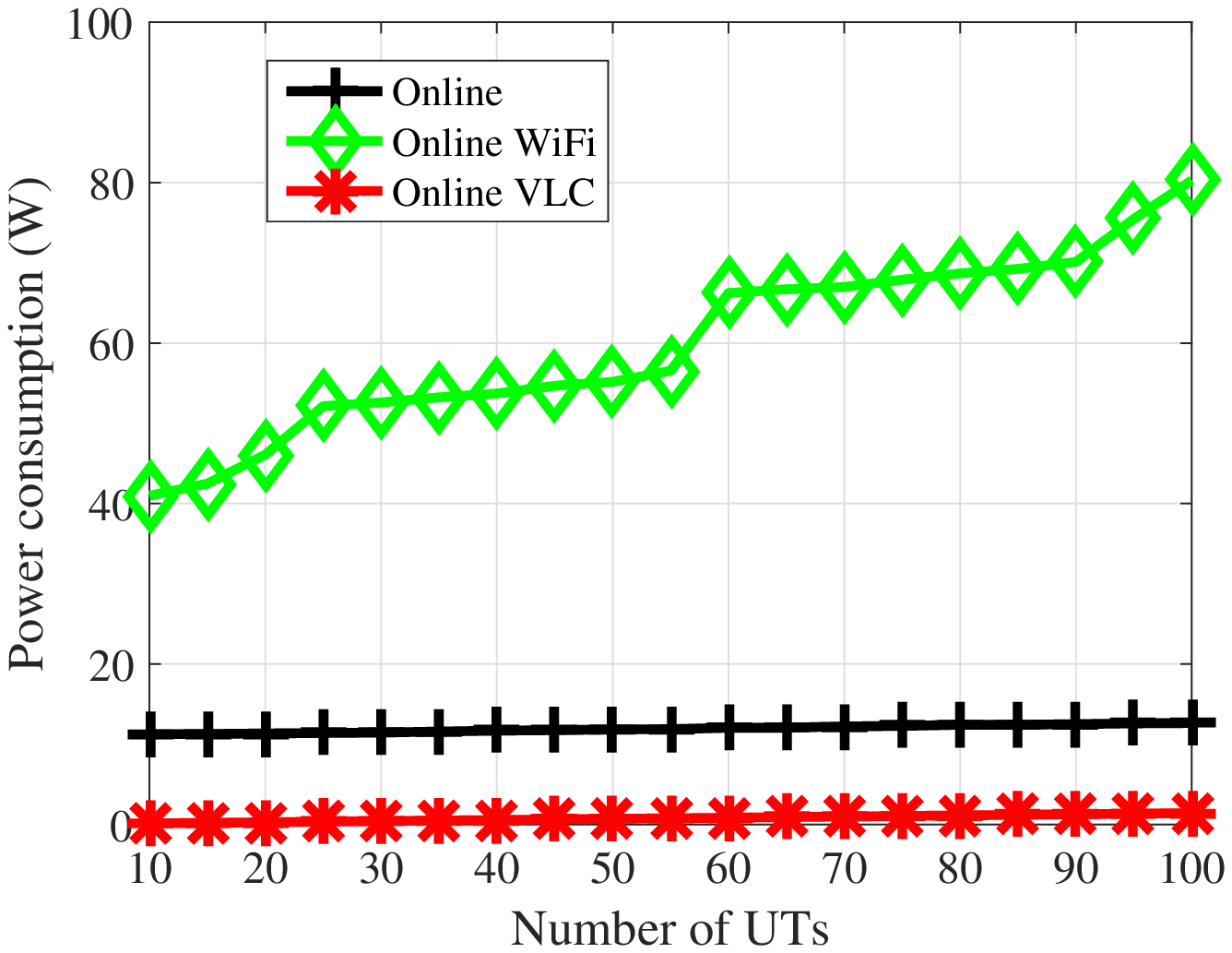}
        \caption{$\eta_{m}^{AC}=0.08$}
        \label{fig_online_numofUT_night008}
    \end{subfigure}
    \begin{subfigure}[b]{0.20\textwidth}
        \includegraphics[width=\textwidth]{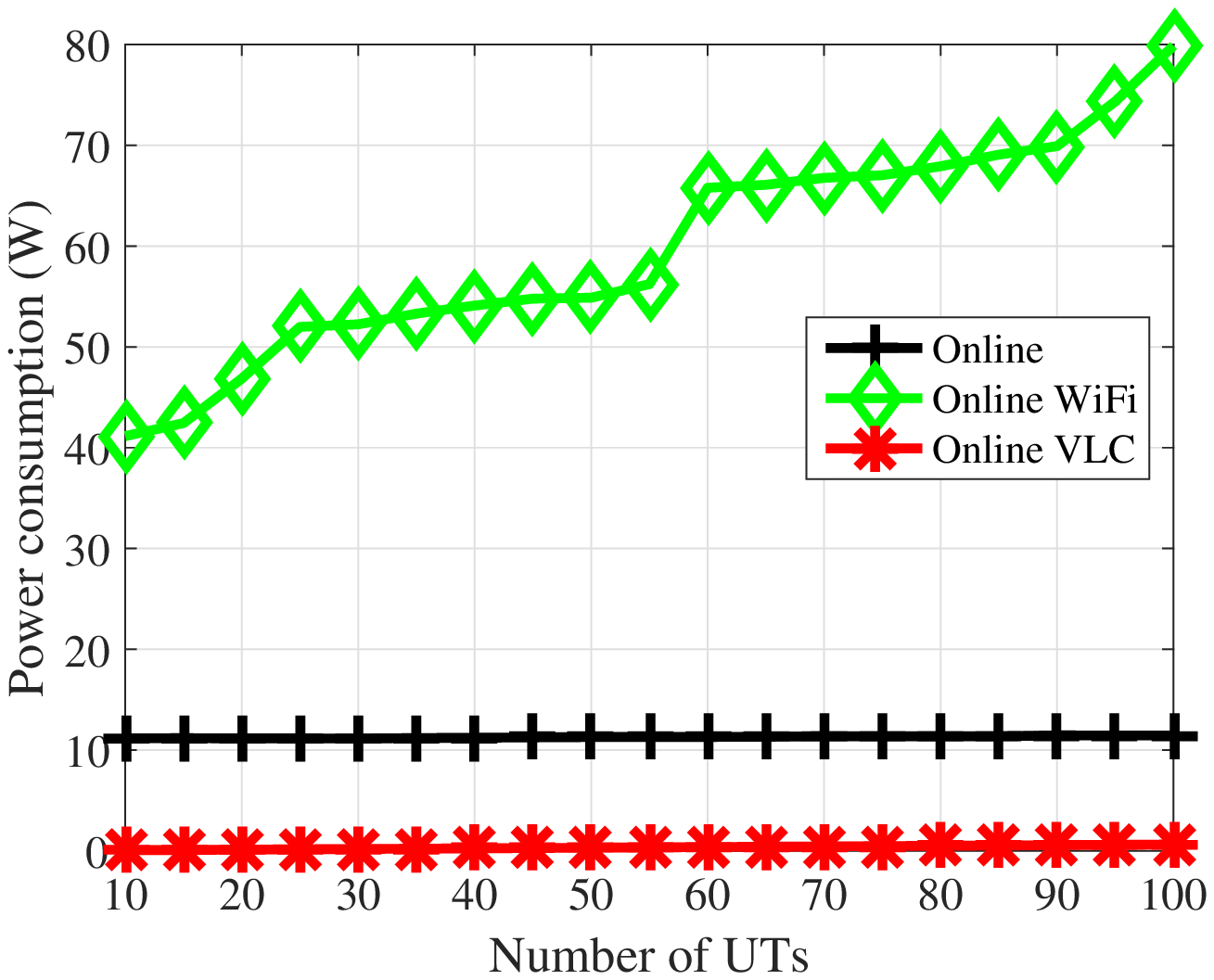}
        \caption{$\eta_{m}^{AC}=0.09$}
        \label{fig_online_numofUT_night009}
    \end{subfigure}
    \caption{Power consumption in terms of number of UTs at night}\label{fig_online_numofUT_night}

    \begin{subfigure}[b]{0.20\textwidth}
        \includegraphics[width=\textwidth]{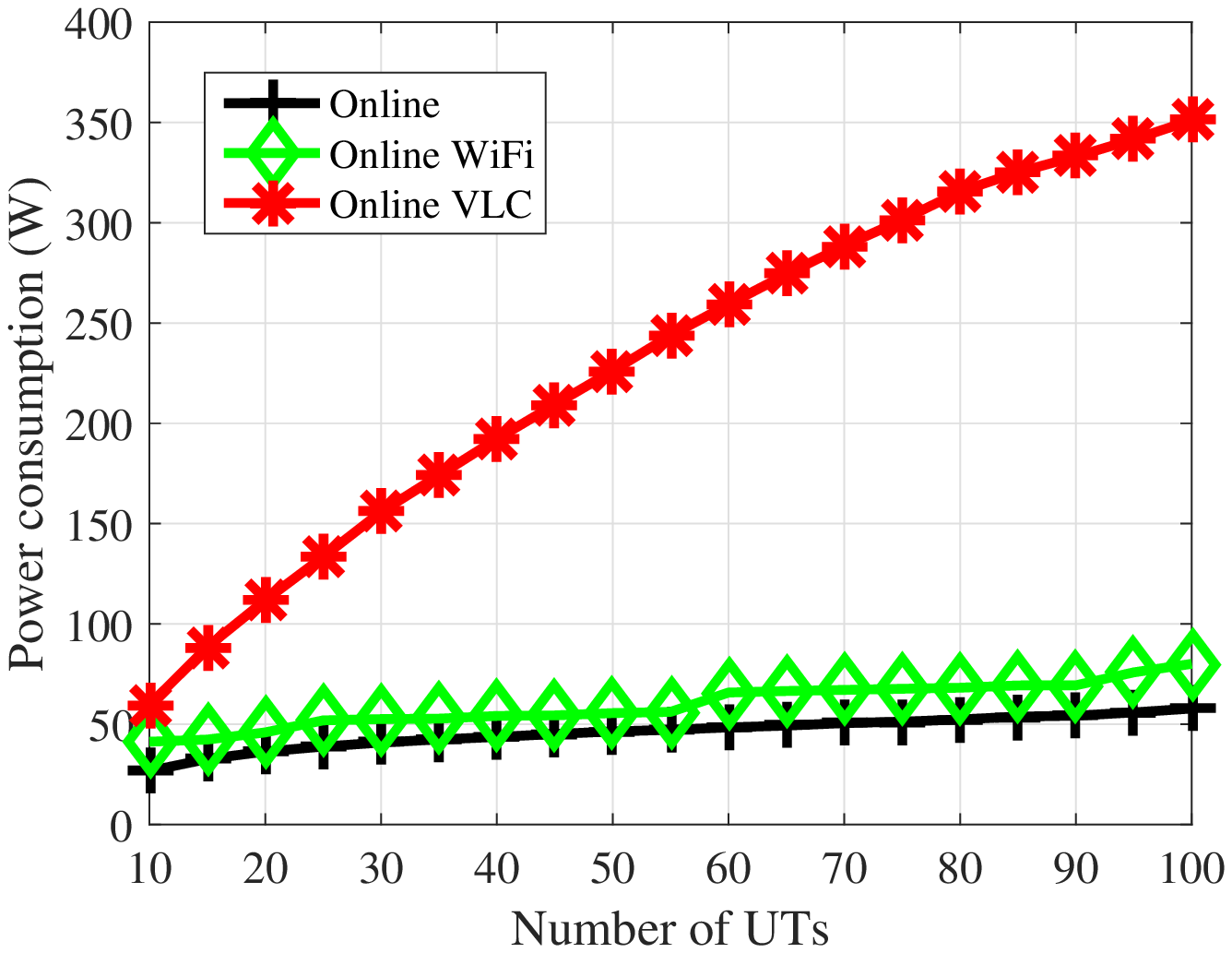}
        \caption{$\eta_{m}^{AC}=0.06$}
        \label{fig_online_numofUT_day006}
    \end{subfigure}
    ~ 
    \begin{subfigure}[b]{0.20\textwidth}
        \includegraphics[width=\textwidth]{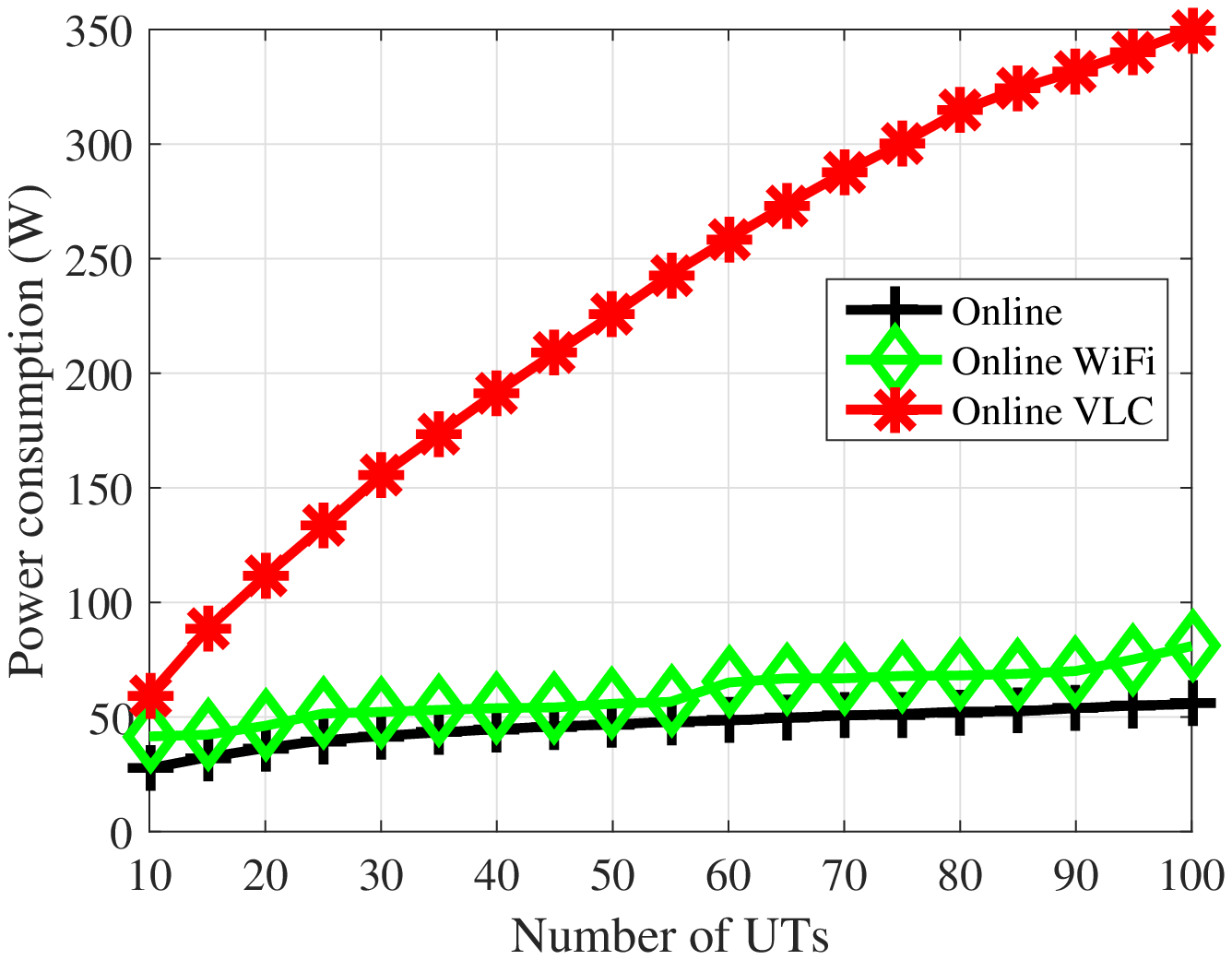}
        \caption{$\eta_{m}^{AC}=0.07$}
        \label{fig_online_numofUT_day007}
    \end{subfigure}
    ~ 
    \begin{subfigure}[b]{0.20\textwidth}
        \includegraphics[width=\textwidth]{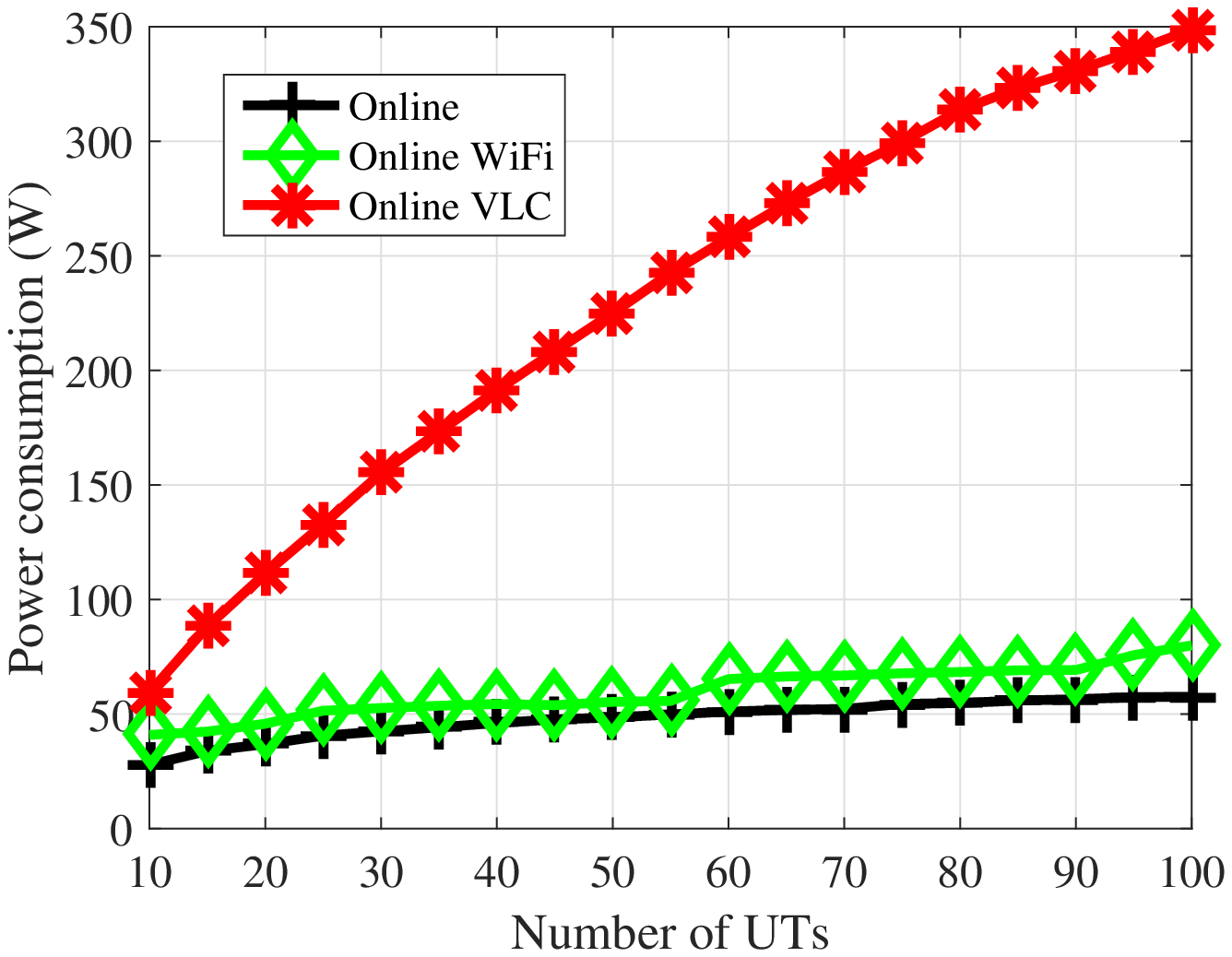}
        \caption{$\eta_{m}^{AC}=0.08$}
        \label{fig_online_numofUT_day008}
    \end{subfigure}
    \begin{subfigure}[b]{0.20\textwidth}
        \includegraphics[width=\textwidth]{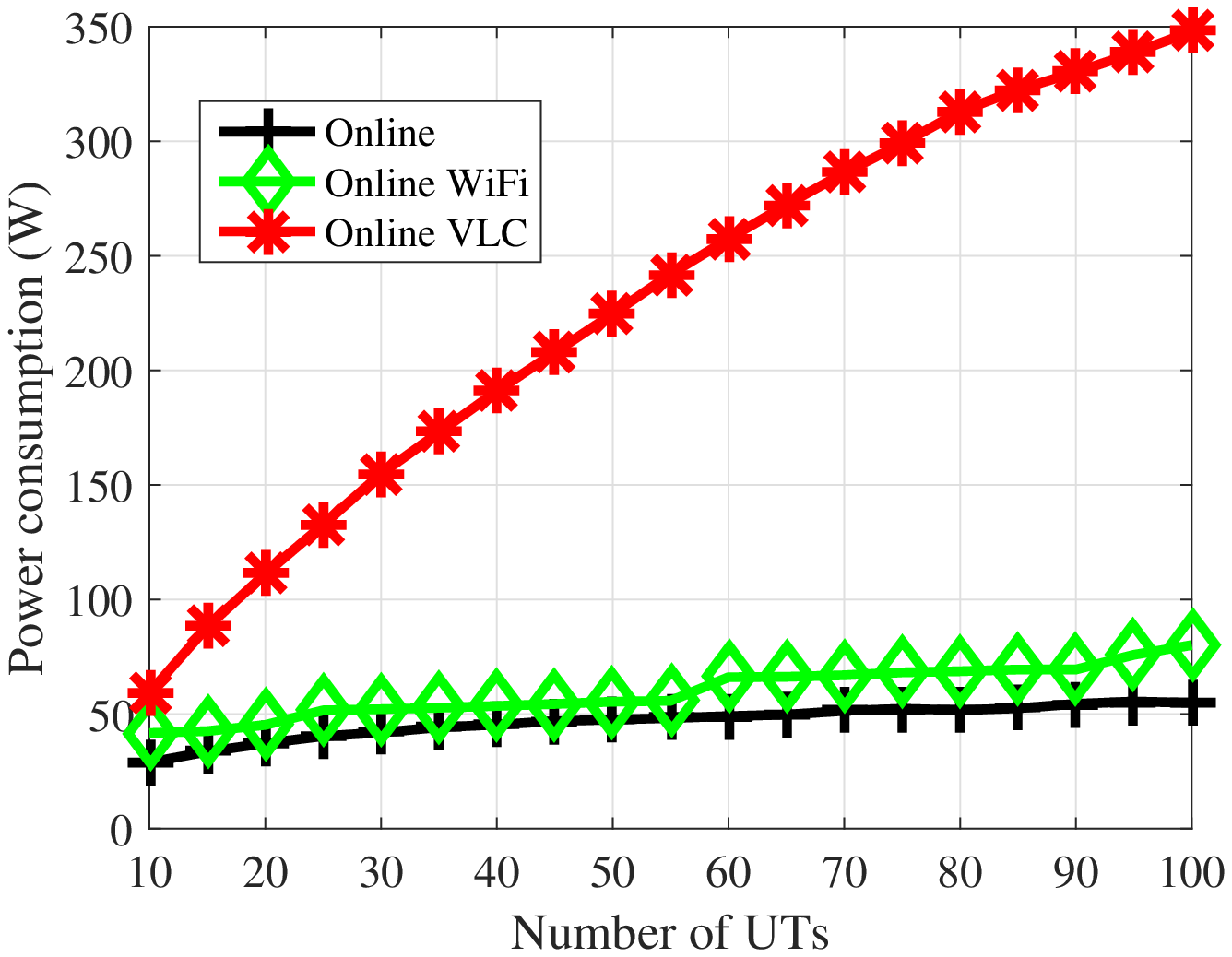}
        \caption{$\eta_{m}^{AC}=0.09$}
        \label{fig_online_numofUT_day009}
    \end{subfigure}
    \caption{Power consumption in terms of number of UTs at day}\label{fig_online_numofUT_day}

    \begin{subfigure}[b]{0.20\textwidth}
        \includegraphics[width=\textwidth]{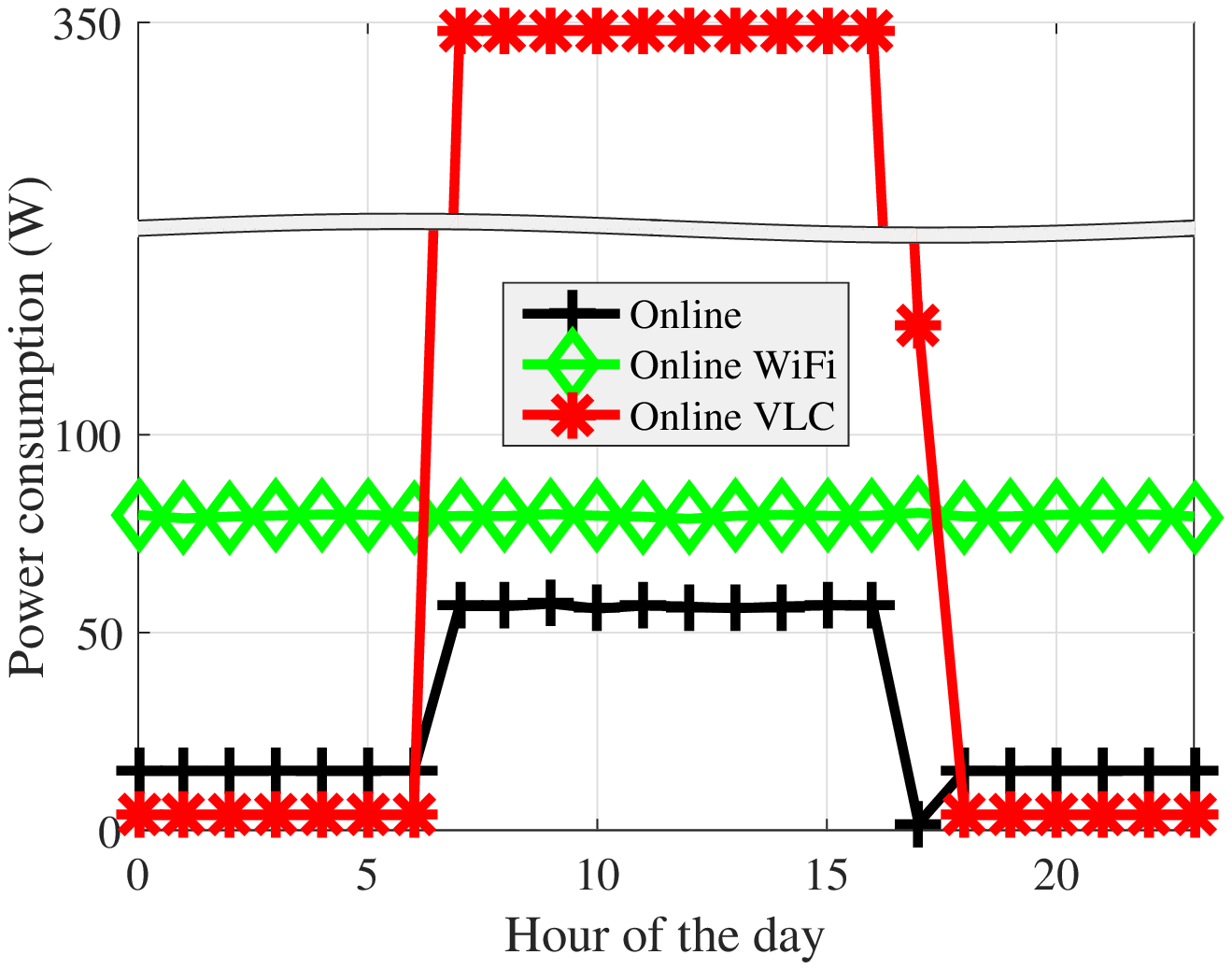}
        \caption{$\eta_{m}^{AC}=0.06$}
        \label{fig_online_hours006}
    \end{subfigure}
    ~ 
    \begin{subfigure}[b]{0.20\textwidth}
        \includegraphics[width=\textwidth]{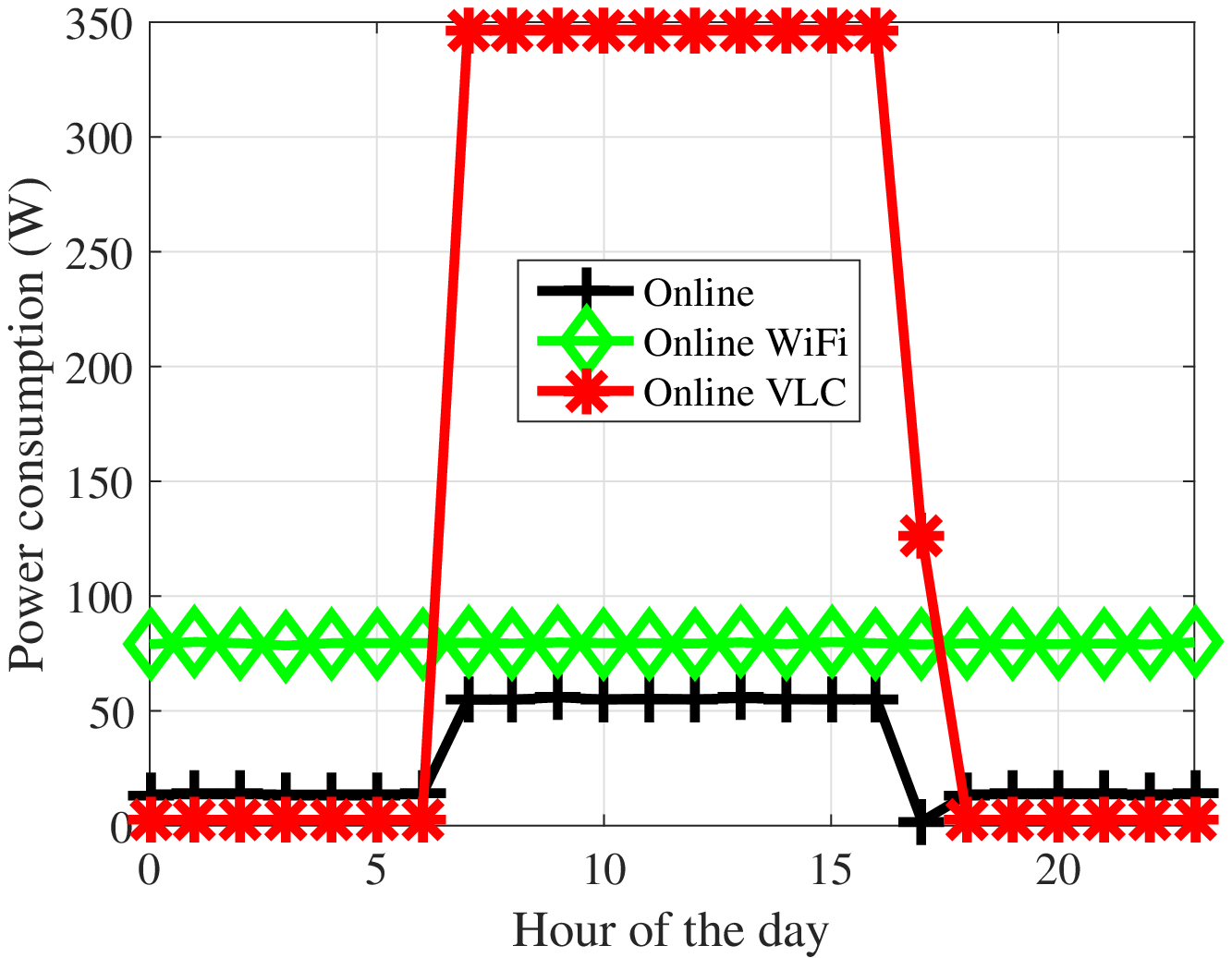}
        \caption{$\eta_{m}^{AC}=0.07$}
        \label{fig_online_hours007}
    \end{subfigure}
    ~ 
    \begin{subfigure}[b]{0.20\textwidth}
        \includegraphics[width=\textwidth]{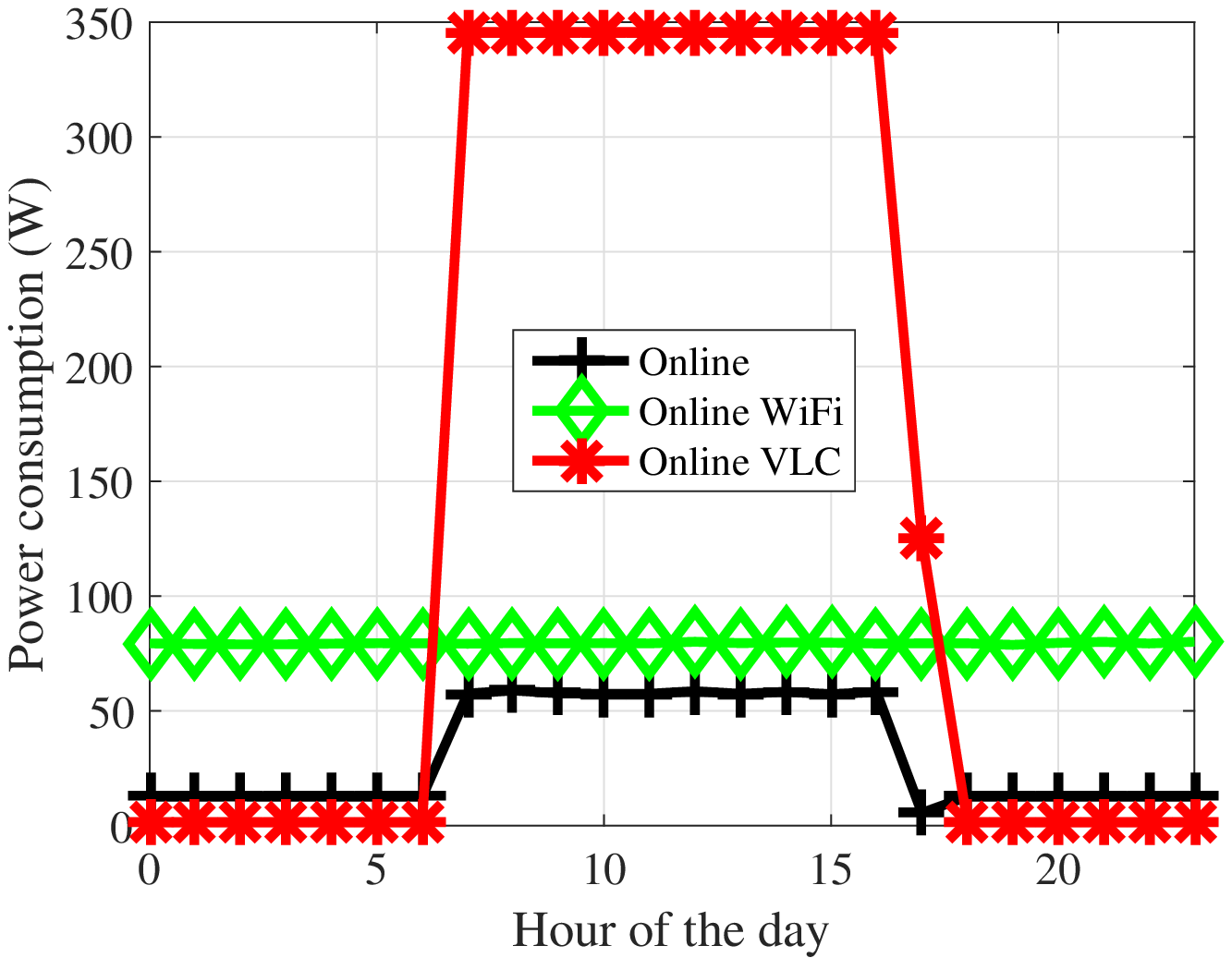}
        \caption{$\eta_{m}^{AC}=0.08$}
        \label{fig_online_hours008}
    \end{subfigure}
    \begin{subfigure}[b]{0.20\textwidth}
        \includegraphics[width=\textwidth]{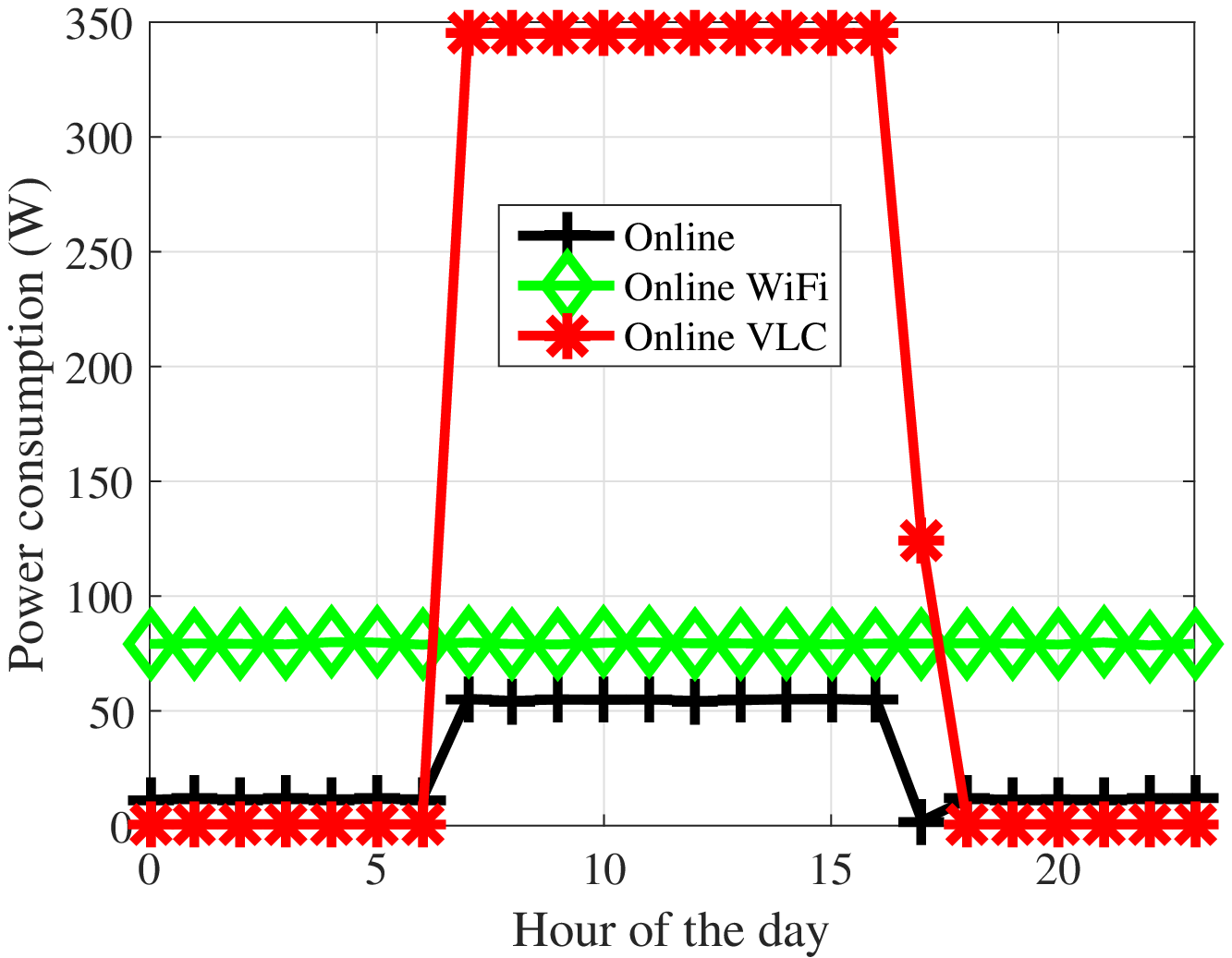}
        \caption{$\eta_{m}^{AC}=0.09$}
        \label{fig_online_hours009}
    \end{subfigure}
    \caption{Power consumption in terms of hour of the day}\label{fig_online_hours}
\end{figure*}

\section{Conclusion}\label{conc}
In this paper, we have tackled the problem of reducing the power consumption of wireless indoor access networks.
Unlike previous works, we utilize VLC, which can jointly provide communications and illumination, and thus greatly
reduces the power consumption. However, in a sunny day or with a large number of users, the RF access methods will
be more energy efficient than VLC. Our proposed system is a hybrid one comprising both VLC and WiFi access
methods. We formulate the problem of minimizing the \txtgreen{power} consumption of the system while satisfying
the requests of the users and achieving the desired illumination level. The problem is NP-complete. To alleviate
the complexity, we design an online algorithm for the problem with good competitive ratio. Our simulation results
show that the proposed hybrid system has the potential to reduce the \txtgreen{power} consumption by more than
75\% over the individual WiFi and VLC systems.

\bibliographystyle{IEEEtran}
\bibliography{Abdallah}

\ifCLASSOPTIONcaptionsoff
  \newpage
\fi

\end{document}